\newtheorem{theorem}{Theorem}[section]
\newtheorem{lemma}[theorem]{Lemma}
\newtheorem{proposition}[theorem]{Proposition}
\newcommand{\Mreps}{\mbox{\tt Mreps}}
\renewcommand{\Pr}{\mathbf{P}}
\newcommand{\trp}{^{\tt T}}
\newcommand{\IW}{\mathfrak{I}_{\mathfrak{W}}}
\newcommand{\IV}{\mathfrak{I}_{\mathfrak{V}}}
\newcommand{\IQ}{\mathfrak{I}_Q}
\newcommand{\DG}{\mbox{Dg}}
\newcommand{\plim}{\mbox{pr-lim}}
\newcommand{\BER}{\mbox{BER}}
\title{Joint Dynamic Models and Statistical Inference for Recurrent Competing Risks, Longitudinal Marker, and Health Status}
\author{
 Lili Tong \\
  Division of Biostatistics \\ Department of Population Health \\ 
  NYU Langone, New York, NY 10016 \\
  \texttt{tonglili91@hotmail.com} \\
   \And
 Piaomu Liu \\
  Department of Mathematical Sciences \\
  Bentley University \\
  Waltham, MA 02452 \\
  \texttt{pliu@bentley.edu} \\
  \And
 Edsel A. Pe\~{n}a \\
  Department of Statistics\\
  University of South Carolina\\
  Columbia, SC 29208 \\
  \texttt{pena@stat.sc.edu} \\
}
\begin{document}
\maketitle
\begin{abstract}
Consider a subject or unit in a longitudinal biomedical, public health, engineering, economic, or social science study which is being monitored over a possibly random duration. Over time this unit experiences competing recurrent events and a longitudinal marker transitions over a discrete state-space. In addition, its ``health or performance'' status also transitions over a discrete state-space with some states possibly absorbing states. A vector of covariates will also be associated with this unit. If there are absorbing states, of interest for this unit is its time-to-absorption of its health status process, which could be viewed as the unit's lifetime. Aside from being affected by its covariate vector, there could be associations among the recurrent competing risks processes, the longitudinal marker process, and the health status process in the sense that the time-evolution of each process is associated with the other processes. To obtain more realistic models and enhance inferential performance, a joint dynamic stochastic model for these components is proposed and statistical inference methods are developed. This joint model, formulated via counting processes and continuous-time Markov chains, has the potential of facilitating `personalized' interventions. This could enhance, for example, the implementation and adoption of precision medicine in medical settings. Semi-parametric and likelihood-based inferential methods for the model parameters are developed when a sample of these units is available. Finite-sample and asymptotic properties of estimators of model parameters, both finite- and infinite-dimensional, are obtained analytically or through simulation studies. 
\end{abstract}

\keywords{Continuous-time Markov chain \and Counting process \and Dynamic models \and Intensity-based model \and Personalized medicine \and Parametric and semi-parametric estimation}

\noindent\textsc{MSC2020 Subject Classifications}: {\bf Primary:} 62N01, 60J27; {\bf Secondary:} 62G05, 62M02. 

\section{Introduction and Motivation}
\label{sec: intro}

Consider a unit --- for example, a human subject in a medical or a social science study, an experimental animal in a biological experiment, a machine in an engineering or reliability setting, or a company in an economic or business situation --- in a longitudinal study monitored over a period $[0,\tau$], where $\tau$ is possibly random. Associated with the unit is a covariate row vector $\mathbf{X} = (X_1, X_2, \ldots, X_p)$. Over time, the unit will experience occurrences of $Q$ competing types of recurrent events, its recurrent competing risks (RCR) component; transitions of a longitudinal marker (LM) process $W(t)$ over a discrete state space $\mathfrak{W}$; and transitions of a `health or performance' status (HS) process $V(t)$ over a discrete state space $\mathfrak{V} = \mathfrak{V}_0 \bigcup \mathfrak{V}_1$, with $\mathfrak{V}_0$ being the set of absorbing states, which is allowed to be an empty set, $\emptyset$; {while states in $\mathfrak{V}_1$ are transient states}. If the health status process transitions into an absorbing state prior to $\tau$, then monitoring of the unit ceases, so time-to-absorption serves as the lifetime of the unit. To demonstrate pictorially, the two panels, consisting of three plots each, in Figure \ref{fig: observables} depict the time-evolution for two distinct units, where $Q = 4$, $\mathfrak{W} = \{w_1, w_2, w_3,w_4\}$, $\mathfrak{V} =\{v_1, v_2, v_3,v_4\}$ where $v_4$ is an absorbing state, while $v_1$, $v_2$ and $v_3$ are transient states. In panel 1, the unit did not transition to an absorbing state prior to reaching $\tau$; whereas in panel 2, the unit reached an absorbing state (state 4) prior to $\tau$. Two major questions arise: (a) how do we specify a dynamic stochastic model that could be a generative model for such a data; and (b) how do we make statistical inferences for the model parameters and, {possibly a prediction,} of a unit's lifetime from a sample of such data?

\medskip
\centerline{[INSERT FIGURE \ref{fig: observables} HERE]}
\medskip

To address these questions, the major goals of this paper are (i) to propose a joint stochastic model for the random observables consisting of the RCR, LM, and HS components for such units; and (ii) to develop appropriate statistical inference methods for the proposed joint model when a sample of units are observed. Achieving these two goals {may} enable statistical prediction of the (remaining) lifetime of a, possibly new, unit; allow for the examination of the synergistic association among the RCR, LM, and HS components; and provide a vehicle to compare different groups of units and/or study the effects of concomitant variables or factors. More importantly, a joint stochastic model endowed with proper statistical inference methods could potentially enable unit-based interventions which are performed after a recurrent event occurrence or a transition in either the LM or HS processes. As such it could enhance the implementation of precision or personalized decision-making; for instance, precision medicine, {though this topic is deferred in a future paper}.

A specific situation where such a data accrual occurs is in a medical study. For example, a subject may experience different types of recurring cancer, with the longitudinal marker being the prostate-specific antigen (PSA) level categorized into a finite ordinal set, while the health status is categorized into either a healthy, diseased, or dead state, with the last state absorbing. Another situation pertains to recurrent competing falls of an elderly person, with the LM process pertaining to whether the person is in a hospital or in a nursing home, while the HS process pertains to the state of physical or cognitive fitness (see, for instance, \cite{AboutFalls2021}).
 See also Chapter 1 of \cite{Nel2003} for many practical situations where recurrent events occur.

Several previous works dealing with modeling have either focused in the marginal modeling of each of the three data components, or in the joint modeling of two of the three data components. In this paper we tackle the problem of {\em simultaneously} modeling all three data components: RCR, LM, and HS, in order to account for the associations among these components, which would not be possible using either the marginal modeling approach or the joint modeling of pairwise combinations of these three components. A joint full model could also subsume these previous marginal or joint models -- in fact, our proposed class of models subsumes as special cases models that have been considered in the literature. In contrast, only by imposing restrictive assumptions, such as the independence of the three model components, could one obtain a joint full model from marginal or pairwise joint models. As such, a joint full model will be less likely to be mis-specified, thereby reducing the potential biases that could accrue from mis-specified models among estimators of model parameters or when predicting residual lifetime.

A joint modeling approach has been extensively employed in previous works. For instance, joint models for an LM process and a survival or time-to-event (TE) process have been proposed in \cite{tsiatis1995modeling}, \cite{wulfsohn1997joint}, \cite{song2002semiparametric}, \cite{henderson2000joint}, \cite{tsiatis2004joint}, and \cite{mclain2015}.  The joint modeling of an LM process and a recurrent event process has also been discussed in \cite{han2007} and \cite{efen2013}, while the joint modeling of a recurrent event and a lifetime has also been done such as in \cite{liu2004}. An important and critical theoretical aspect that could not be ignored in these settings is that when an event occurrence is terminal (e.g., death) or when there is a finite monitoring period, informative censoring naturally occurs in the RCR, LM, or HS components, since when a terminal event or when the end of monitoring is reached, then the next recurrent event occurrence, the next LM transition, or the next HS transition will not be observed. 

Another aspect that needs to be taken into account in a dynamic modeling approach is that of performed interventions, usually upon the occurrence of a recurrent event. For instance, in engineering or reliability systems, when a component in the system fails, this component will either be minimally or partially repaired, or altogether replaced with a new component (called a perfect repair); while, with human subjects, when a cancer relapses, a hospital infection transpires, a gout flares-up, or alcoholism recurs, some form of intervention will be performed or instituted. Such interventions will impact the time to the next occurrence of the event, hence it is critical that such intervention effects be reflected in the model; see, for instance, \cite{gonzalez2005modelling} and \cite{han2007}. In addition, models should take into consideration the impacts of the covariates and the effects of accumulating event occurrences on the unit. Models that take into account these considerations have been studied in \cite{pena2006dynamic} and \cite{pena2007semiparametric}. Appropriate statistical inference procedures for these dynamic models of recurrent events and competing risks have been developed in \cite{pena2006dynamic} and \cite{taylor2014nonparametric}. Extensions of these joint dynamic models for both RCR and TE can be found in \cite{liu2015dynamic}. Some other recent works in joint modeling included the modeling of the three processes: LM, RCR (mostly, a single recurrent event), and TE simultaneously in \cite{kim2012joint}, \cite{cai2017joint}, \cite{krol2017tutorial}, \cite{mauguen2013dynamic} and \cite{blanche2015quantifying}. The joint model that will be proposed in this paper will take into consideration these important aspects.

We outline the contents of this paper. Prior to describing formally the joint model in Section \ref{sec: mathform}, we first present in Section \ref{sec: scenarios}  a concrete situation in a medical setting where the data accrual described above could arise and where the joint model will be relevant. 
Section \ref{sec: mathform} formally describes the joint model using counting processes and continuous-time Markov chains (CTMCs), and provide interpretations of model parameters. In subsection \ref{subsec: special case} we discuss in some detail a special case of this joint model obtained using independent Poisson processes and homogeneous CTMCs. Section \ref{sec: estimation - parametric} deals with the estimation of the parameters in the afore-mentioned special case to pinpoint some intricacies of joint modeling and its inferential aspects. The general joint model contains nonparametric (infinite-dimensional) parameters, so in section \ref{sec: estimation - semiparametric} we will describe a semi-parametric estimation procedure for this general model. Section \ref{sec: Properties} will present asymptotic properties of the estimators. Section \ref{sec-Illustration} will then demonstrate the semi-parametric estimation approach through the use of a synthetic or simulated data using {\tt R} \cite{RCitation} and {\tt Fortran} programs we developed. To make implementation more computationally efficient, the {\tt R} programs call the {\tt Fortran} subroutines in portions requiring intensive calculations. In Section \ref{sec: simulation}, we perform simulation studies to investigate the finite-sample properties of the estimators arising from the semi-parametric estimation procedure. 
Section \ref{sec: conclusion} contains concluding remarks and describes some open research problems.   

\section{A Concrete Situation in the Medical Arena}
\label{sec: scenarios}
 
To demonstrate potential applicability of the proposed joint model, we describe a concrete situation in the medical field where the data accrual described in Section \ref{sec: intro} could arise. 
Gout is a form of arthritis characterized by sudden and severe attacks of pain, swelling, redness and tenderness in one or more joints in the toes, ankles, knees, elbows, wrists, and fingers. See, for instance, the website in Mayo Clinic about gout
{\small
\begin{center}
\url{https://www.mayoclinic.org/diseases-conditions/gout/symptoms-causes/syc-20372897}
\end{center}
}
According to \cite{SinGaf2020}, gout is the most prevalent inflammatory arthritis in the world. When a gout flare occurs, it renders the person incapacitated -- personally attested by the senior author (EP) -- and the debilitating condition may last for several days, possibly weeks. Since the location of the gout flare could vary, we may consider gout as competing recurrent events --- competing with respect to the location of the flare, and recurrent since it could keep coming back. Gout occurs when urate crystals accumulate in the joints, which in turn is associated with high levels of uric acid in the blood. 
The level of uric acid is measured by the {Serum Urate Level (SUR)}, which can be categorized as {Hyperuricemia} (if SUR $>$ 7.2 mg/dL for males; if SUR $>$ 6.0 mg/dL for females), or {Normal} (if 3.5 mg/dL $\le$ SUR $\le$ 7.2 mg/dL for males; if 2.6 mg/dL $\le$ SUR $\le$ 6.0 mg/dL for females). The SUR level could be considered a longitudinal marker. 

Gout is linked to other comorbidities (see \cite{SinGaf2020}), and one of these pertains to chronic kidney disease (CKD) since kidneys are associated with the excretion of uric acid in the body.  The ordinal stages of CKD, based on the value of the {Glomerular Filtration Rate (GFR)}, are as follows: {Stage 1 (Normal)} if  GFR $\ge$ 90 mL/min; {Stage 2 (Mild CKD)} if 60 mL/min $\le$ GFR $\le$ 89 mL/min; {Stage 3A (Moderate CKD)} if 45 mL/min $\le$ GFR $\le$ 59 mL/min; {Stage 3B (Moderate CKD)} if 30 mL/min $\le$ GFR $\le$ 44; {Stage 4 (Severe CKD)} if 15 mL/min $\le$ GFR $\le$ 29 mL/min; and {Stage 5 (End Stage CKD)} if GRF $\le$ 14 mL/min. The state of Stage 5 (End Stage CKD) can be viewed as an absorbing state. The CKD status could be viewed as the ``health status'' of the person. 
Other covariates, both environmental and genetic, such as gender, blood pressure, weight, diet, alcohol consumption, genetic markers, etc., could also impact the occurrence of gout flares, uric acid level, and CKD. 

When a gout flare occurs, lifestyle interventions could be performed such as (i) consuming skim milk powder enriched with the two dairy products glycomacropeptide (GMP) and G600 milk fat extract; (ii) consuming standard milk or lactose powder; or (iii) {taking prescribed medications such as colchicine, alluporinal, etc.} The purpose of such interventions is to lessen gout flare recurrences through control of the SUR level. 

In such a study, of major interest is to {\em jointly} model the competing gout recurrences, the categorized SUR process, and the CKD process. The dynamic joint model proposed in this paper could provide a potentially viable model for this situation.

\section{Joint Model of RCR, LM, and HS Processes}
\label{sec: mathform}

\subsection{Data Observables for One Unit}

Denote by $(\Omega, \mathfrak{F}, \Pr)$ the basic filtered probability space with filtration $\mathcal{F} = \{\mathcal{F}_s: s \ge 0\}$ where all random entities under consideration are defined. We begin by describing the joint model for the data observable components for {\bf one unit}. 

Let $\tau$, the end of monitoring period, have a distribution function $G(\cdot)$, which may be degenerate. The covariate vector will be ${X} = (X_1, \ldots, X_p)$, assumed to be time-independent, though the extension to time-dependent covariates is possible with additional assumptions.
For the RCR component, let $N^R = \{N_q^R(s) \equiv (N^R(s;q), q \in \IQ):\ s \ge 0\}$, with index set 
$$\IQ = \{1,\ldots,Q\},$$
be a $Q$-dimensional multivariate counting (column) vector process such that, for $q \in \IQ$, $N_q^R(s)$ is the number of {\em observed} occurrences of the recurrent event of type $q$ over $[0,s]$, with $N_q^R(0) = 0$. Thus, the sample path $s \mapsto N_q^R(s)$ takes values in $\mathbb{Z}_{0,+} = \{0,1,2,\ldots\}$, is a non-decreasing step-function, and is right-continuous with left-hand limits. We denote by $dN_q^R(s) = N_q^R(s) - N_q^R(s-)$, the jump at time $s$ of $N_q^R$. 

For the LM process, let $W = \{W(s): s \geq 0\}$, where $W(s)$ takes values in a finite state space $\mathfrak{W}$ with cardinality $|\mathfrak{W}|$. $W(s)$ represents the state of the longitudinal marker at time $s$. The sample path $s \mapsto W(s)$ is a step-function which is right-continuous with left-hand limits.  By introducing the lexicographically-ordered index set 
$$\mathfrak{I}_{\mathfrak{W}} = \{\{(w_1,w_2): w_2 \in \mathfrak{W}, w_2 \ne w_1\}, w_1 \in \mathfrak{W}\}\footnote{Thus, if $\mathfrak{W} = \{1,2,3\}$, then $\mathfrak{I}_{\mathfrak{W}} = \{(1,2),(1,3),(2,1),(2,2),(3,1),(3,2)\}$.},$$
so that $|\mathfrak{I}_{\mathfrak{W}}| = 2{\binom{|\mathfrak{W}|} {2}}$.
We can convert $W$ into a (column) $|\mathfrak{I}_{\mathfrak{W}}|$-dimensional multivariate counting process $\{N^W \equiv (N^W(s;w_1,w_2), (w_1,w_2) \in \IW): s \ge 0\}$, where $N^W(s;w_1,w_2)$ is the number of observed transitions of $W$ from state $w_1$ into $w_2$ over the period $[0,s]$, that is, for $(w_1,w_2) \in \IW$,
$$N^W(s;w_1,w_2) = \sum_{t \le s} I\{W(t-) = w_1,W(t) = w_2\},$$
with $I(\cdot)$ denoting indicator function.
Thus, for each $(w_1,w_2) \in \IW$, the sample path $s \mapsto N^W(s;w_1,w_2)$ takes values in $\mathbb{Z}_{0,+}$, is a non-decreasing step-function, right-continuous with left-hand limits, and with $N^W(0;w_1,w_2) = 0$. Furthermore, $\sum_{(w_1,w_2) \in \IW} dN^W(s;w_1,w_2) \in \{0,1\}$ for every $s \ge 0$.

For the HS process, let $V = \{V(s): s \ge 0\}$, where $V(s)$ takes values in the finite state space $\mathfrak{V} = \mathfrak{V}_0 \bigcup \mathfrak{V}_1$, where states in $\mathfrak{V}_0$ are absorbing states, and with $|\mathfrak{V}_0| \ge 0$; {while states in $\mathfrak{V}_1$ are transient states}. $V(s)$ is the state occupied by the HS process at time $s$, so that if $V(s) \in \mathfrak{V}_0$, then $V(s^\prime) = V(s)$ for all $s^\prime > s$. Similar to the LM process, define the lexicographically-ordered index set
$$\IV = \{\{(v_1,v): v \in \mathfrak{V}, v \ne v_1\}, v_1 \in \mathfrak{V}_1\},$$ 
whose cardinality is $|\IV| = |\mathfrak{V}_1| ( |\mathfrak{V}| - |\mathfrak{V}_1| )$. We convert $V$ into a (column) $|\IV|$-dimensional multivariate counting process $\{N^V \equiv (N^V(s;v_1,v), (v_1,v) \in \IV): s \ge 0\}$, where $N^V(s;v_1,v)$ is the number of observed transitions of $V$ from state $v_1$ into state $v$ over the period $[0,s]$, that is, for $(v_1,v) \in \IV$,
$$N^V(s;v_1,v) =\sum_{t \le s} I\{V(t-) = v_1, V(t) = v\}.$$
For each $(v_1,v) \in \IV$, the sample path $s \mapsto N^V(s;v_1,v)$ takes values in $\mathbb{Z}_{0,+}$, and is a non-decreasing step-function, right-continuous with left-hand limits, and with $N^V(0;v_1,v) = 0$. In addition, $\sum_{(v_1,v) \in \IV} dN^V(s;v_1,v) \in \{0, 1\}$ for every $s \ge 0$.
Next, we combine the multivariate counting processes $N^R$, $N^W$, and $N^V$ into one (column) multivariate counting process $N = \{N(s): s \ge 0\}$ of dimension $Q + |\IW| + |\IV|$, where, with $\trp$ denoting vector/matrix transpose,
$$N(s) = \left[(N^R(s))\trp, (N^W(s))\trp, (N^V(s))\trp\right]\trp.$$
{Since the components of $N$ are non-decreasing, clearly $N$ is an $\mathcal{F}$-submartingale.}

An important point needs to be stated regarding the observables in the study, which will have an impact in the interpretation of the parameters of the joint model. This pertains to the ``competing risks'' nature of all the possible events at each time point $s$. The possible $Q$ recurrent event types, as well as the potential transitions in the LM and HS processes, are all competing with each other. Thus, suppose that at time $s_0$, the event that occurred is a recurrent event of type $q_0$, that is, $dN^R(s_0;q_0) = 1$. This means that this event has occurred {\em in the presence} of the potential recurrent events from the other $Q-1$ risks, and the potential transitions from either the LM and HS processes. {This will entail the use of {\em crude} hazards, instead of {\em net} hazards, in the joint modeling, and this observation will play a critical role in the dynamic joint model since each of the competing event occurrences at a given time point $s$ from all the possible event sources (RCR, LM, and HS) will be affected by the history of all these processes just before time $s$. This is the aspect that exemplifies the synergistic association among the three components.} See \cite{Tsiatis1975,Tsiatis2005} for some aspects regarding competing risks modeling.

Another observable process for our joint model is a vector of effective (or virtual) age processes $\mathcal{E} = \{(\mathcal{E}_1(s),\ldots,\mathcal{E}_Q(s)): s \ge 0\}$, whose components are $\mathcal{F}$-predictable processes with sample paths that are non-negative, left-continuous, piece-wise non-decreasing, and differentiable. These effective age processes will carry the impact of interventions performed after each recurrent event occurrence or a transition in either the LM process or the HS process. For articles dealing with virtual ages, see the philosophically-oriented article \cite{FinCha21} and the  review article \cite{Beu2021}. {We point out that utilizing this notion of effective or virtual ages makes our model quite general since many current models are just special cases of our model with the proper choice of effective age processes.}

Finally, we define the time-to-absorption of the unit to be
$\tau_A = \inf\{s \ge 0: V(s) \in \mathfrak{V}_0\}$
with the convention that $\inf \emptyset = \infty$. Using this $\tau_A$ and $\tau$, we define the unit's at-risk process to be $Y = \{Y(s): s \ge 0\}$, with
$Y(s) = I\{\min(\tau,\tau_A) \ge s\}.$
In addition, we define LM-specific and HS-specific at-risk processes as follows: For $w \in \mathfrak{W}$, define $Y^W(s;w) = I\{W(s-) = w\}$; and, for $v \in \mathfrak{V}_1$, define $Y^V(s;v) = I\{V(s-) = v\}$.
For a unit, we could then concisely summarize the random observables in terms of stochastic processes as:
\begin{equation}
\label{data: one unit}
D = (X,N,\mathcal{E},Y,Y^W,Y^V) \equiv \{X,(N(s),\mathcal{E}(s),Y(s),Y^W(s),Y^V(s): s \ge 0)\}.
\end{equation}
Note that the processes are undefined for $s > \min(\tau_A,\tau) \equiv \inf\{s \ge 0: Y(s+) = 0\}$ since monitoring of the unit had by then ceased.

\subsection{Joint Model Specification for One Unit}
\label{subsec: joint model}

The joint model specification will be through the specification of the compensator process vector and the predictable quadratic variation (PQV) process matrix of the multivariate counting process $N$, which is possible because $N$ is a submartingale. The predictable compensator process vector $A = \{A(s): s  \ge 0\}$ is of dimension $Q + |\IW| + |\IV|$ and is such that the vector process $M = N - A = \{M(s) = N(s) - A(s): s \ge 0\}$ is a zero-mean square-integrable martingale process with PQV matrix process $\langle M, M \rangle$. The vectors $A$ and $M$ are actually partitioned into three vector components reflecting the RCR, LM, and HS components, according to
\begin{displaymath}
A = \left[ (A^R)\trp, (A^W)\trp, (A^V)\trp \right]\trp \quad \mbox{and} \quad
M = \left[(M^R)\trp, (M^W)\trp, (M^V)\trp\right],
\end{displaymath}
where, with $q \in \IQ$, $(w_1,w_2) \in \IW$, $(v_1,v) \in \IV$, and $s \ge 0$,
\begin{eqnarray*}
& A^R = \{(A^R(s;q))\} \quad \mbox{and} \quad M^R = \{(M^R(s;q))\}; & \\
& A^W = \{(A^W(s;(w_1,w_2)))\} \quad \mbox{and} \quad M^W = \{(M^W(s;(w_1,w_2)))\}; & \\
& A^V = \{(A^V(s;(v_1,v)))\} \quad \mbox{and} \quad M^V = \{(M^V(s;(v_1,v)))\}, &
\end{eqnarray*}
with $A^R$ and $M^R$ of dimensions $Q$; $A^W$ and $M^W$ of dimensions $|\IW|$; and $A^V$ and $M^V$ of dimensions $|\IV|$. The matrix $\langle M, M \rangle$ could then be partitioned similarly to reflect these block components.

We can now proceed with the specification of the compensator process vector and the PQV process matrix. 
For this purpose, we introduce the following quantities, functions, and processes.
\begin{itemize}
    \item For each $q \in \IQ$ there is a baseline hazard rate function $\lambda_{0q}(\cdot)$ with associated baseline cumulative hazard function $\Lambda_{0q}(\cdot) = \int_0^\cdot \lambda_{0q}(v) dv$. We also denote by $\bar{F}_{0q}(\cdot) = \prodi_{v=0}^{\cdot} [1 - \Lambda_{0q}(dv)]$ the associated baseline survivor function, where $\prodi$ is the product-integral symbol.
    \item For each $q \in \IQ$ there is a mapping $\rho_q(\cdot|\cdot): \mathbb{Z}_{0,+}^Q \times \Re^{d_q} \rightarrow \Re_{0,+}$, where the $d_q$'s are known positive integers. There is an associated vector $\alpha_q \in \Re^{d_q}$.
    \item There is a collection of non-negative real numbers $$\eta = \{\eta(w_1,w_2): (w_1,w_2) \in \IW\},$$ and we define for every $w_1 \in \mathfrak{W}$, $\eta(w_1,w_1) = -\sum_{w \in \mathfrak{W}; w \ne w_1} \eta(w_1,w).$
    \item There is a collection of non-negative real numbers $$\xi = \{\xi(v_1,v): (v_1, v) \in  \IV\},$$ and we define for every $v_1 \in \mathfrak{V}$, $\xi(v_1,v_1) = -\sum_{v \in \mathfrak{V}; v \ne v_1} \xi(v_1,v),$ and with $\xi(v_0,v) = 0$ for every $v_0 \in \mathfrak{V}_0$ and $v \in \mathfrak{V}$.
    \item There are $\Re_1$-valued link functions $$\{(\psi^R(s|\theta^R),\psi^W(s|\theta^W),\psi^V(s|\theta^V)): s \ge 0\}$$ which are non-negative, bounded, and $\mathfrak{F}$-predictable processes and with $(\theta^R,\theta^W,\theta^V)$ being finite-dimensional parameters. These processes are assumed to be twice-differentiable with respect to the elements of the parameters. 
\end{itemize}

We describe the specific form of the $\psi$-processes that we utilize. To do so, define the generic mapping $\iota$ given a finite set $A =\{a_1,\ldots,a_m\}:$
$$\iota_{A}(a)=(I(a_2=a),\ldots,I(a_m=a)),$$
an $(m-1)$ row vector with $\{0,1\}$-valued elements. Thus, the mapping $\iota$ is a converter to dummy variables. With row vectors
%
%
\begin{eqnarray*}
& B^R(s)  =  [\iota_{\mathfrak{W}}(W(s)),\iota_{\mathfrak{V}}(V(s)),X]; 
B^W(s)  =  [\log(1 + N^R(s)),\iota_{\mathfrak{V}}(V(s)),X];  \\
& B^V(s)  =  [\log(1 + N^R(s)),\iota_{\mathfrak{W}}(W(s)),X] &
\end{eqnarray*}
and column vectors
\begin{eqnarray*}
& \theta^R  =  [(\theta^{RL})\trp, (\theta^{RH})\trp, (\theta^{RX})\trp]\trp; 
\theta^W  =  [(\theta^{LR})\trp, (\theta^{LH})\trp, (\theta^{LX})\trp]\trp; & \\
& \theta^V =  [(\theta^{HR})\trp, (\theta^{HL})\trp, (\theta^{HX})\trp]\trp, &
\end{eqnarray*}
we let the link functions to be
\begin{eqnarray*}
& \psi^R(s|\theta^R) =  \exp\{B^R(s-) \theta^R\}; 
\psi^W(s|\theta^W)  =  \exp\{B^W(s-) \theta^W\}; & \\ &  \psi(s|\theta^V)  =  \exp\{B^V(s-) \theta^V\}. &
\end{eqnarray*}
Note that we used the form $\log(1 + N^R(s))$ for the internal covariate associated with the RCR process, but in general we could some other mapping $N^R(s) \mapsto g(N^R(s))$ for some suitable map $g(\cdot)$.

In addition to the, possibly infinite-dimensional, parameters $\lambda_{0q}(\cdot)$'s, the finite-dimensional parameters $\alpha_q$'s, $\eta$, $\xi$, the $\theta^R$, $\theta^W$, and $\theta^V$, will constitute all of the model parameters.
For the proposed model, the compensator process vector $A$ will be assumed to satisfy the property that, with $\DG(a)$ denoting the diagonal matrix formed from vector $a$, 
%
%
%
\begin{displaymath}
\langle M, M \rangle (s) = \DG[A(s)] = \DG[((A^R(s))\trp, (A^W(s))\trp, (A^V(s))\trp)\trp].
\end{displaymath}
The components of this compensator process vector are given by, for $q \in \IQ$, $(w_1,w_2) \in \IW$, and $(v_1,v) \in \IV$:
\begin{eqnarray*}
& A^R(s;q)  =  \int_0^s Y(t) \lambda_{0q}[\mathcal{E}_q(t)] \rho_q[N^R(t-)|\alpha_q]  \psi^R(s|\theta^R) dt; & \\
& A^W(s;w_1,w_2)  =  \int_0^s Y(t) Y^W(t;w_1) \eta(w_1,w_2) \psi^W(s|\theta^W) dt; \\
& A^V(s;v_1,v)  =  \int_0^s Y(t) Y^V(t;v_1) \xi(v_1,v) \psi^V(s|\theta^V) dt. &
\end{eqnarray*}
In the left-hand side of the equations above, we have suppressed writing the dependence on the model parameters. 
%
%
%
Note that the argument of $\lambda_{0q}(\cdot)$ in $A^R(\cdot;q)$ is the (random) effective age $\mathcal{E}_q(\cdot)$. In addition, observe the dynamic nature of this model in that an event occurrence at an infinitesimal time interval $[s,s+ds)$ depends on the history of the processes before time $s$. According to the theory of counting processes, we have the following probabilistic interpretations (statements are almost surely):
\begin{eqnarray*}
& E\{dN^R(s;q)|\mathcal{F}_{s-}\}  =   dA^R(s;q); & \\
& E\{dN^W(s;w_1,w_2)|\mathcal{F}_{s-}\}  =   dA^W(s;w_1,w_2); & \\
& E\{dN^V(s;v_1,v)|\mathcal{F}_{s-}\}  =   dA^V(s;v_1,v), &
\end{eqnarray*}
and
\begin{eqnarray*}
& Var\{dN^R(s;q)|\mathcal{F}_{s-}\} = dA^R(s;q); & \\
& Var\{dN^W(s;w_1,w_2)|\mathcal{F}_{s-}\} =  dA^W(s;w_1,w_2); & \\
& Var\{dN^V(s;v_1,v)|\mathcal{F}_{s-}\} =   dA^V(s;v_1,v), &
\end{eqnarray*}
together with the conditional covariance, given $\mathcal{F}_{s-}$, between any pair of elements of $dN(s)$ being equal to zero, e.g., $Cov\{dN^R(s;q),dN^W(s;w_1,w_2)|\mathcal{F}_{s-}\} = 0$. However, note that we are {\em not} assuming that the components of $N^R$, $N^W$, and $N^V$ are independent, nor even conditionally independent. A way to view this model is that, given $\mathcal{F}_{s-}$, the history just before time $s$, on the infinitesimal interval $[s,s+ds)$, $dN(s) = (dN^R(s)\trp, dN^W(s)\trp, dN^V(s)\trp)\trp$ has a multinomial distribution with parameters $1$ and $dA(s) = (dA^R(s)\trp,dA^W(s)\trp,dA^V(s)\trp)\trp$. As such, for every $s \ge 0$, the following constraint holds:
\begin{displaymath}
dN_\bullet(s) = dN_\bullet^R(s) + dN_\bullet^W(s) + dN_\bullet^V(s)  \in \{0,1\},
\end{displaymath}
with the notation that a subscript of $\bullet$ means the sum over all the appropriate index set, e.g., $dN_\bullet^R(s) = \sum_{q \in \IQ} dN^R(s;q)$ and $dA_\bullet(s) = dA_\bullet^R(s) + dA_\bullet^W(s) + dA_\bullet^V(s)$. The multinomial distribution above could actually be approximated by independent Bernoulli distributions. To see this, if we have real numbers $p_k, k=1,\ldots,K,$ with $0 < p_k \approx 0$ for each $k = 1, \ldots, K,$ and with $\sum_{k=1}^K p_k \approx 0$, then we have the approximation
\begin{displaymath}
\left(1 - \sum_{k=1}^K p_k\right) \approx \prod_{k=1}^K (1 - p_k).
\end{displaymath}
Consequently, in the equation below, the multinomial probability on the left-hand side is approximately the product of (independent) Bernoulli probabilities in the right-hand side.
\begin{eqnarray*}
\lefteqn{ \left\{\prod_{q \in \IQ} [dA_q(s)]^{dN_q^R(s)}\right\} \left\{\prod_{(w_1,w_2) \in \IW} [dA^W(s;w_1,w_2)]^{dN^W(s;w_1,w_2)} \right\} \times } \\ &&
 \left\{\prod_{(v_1,v) \in \IV} [dA^V(s;v_1,v)]^{dN^V(s;v_1,v )}  \right\}  \left\{[1 - dA_\bullet(s)]^{1 - dN_\bullet(s)}\right\} \\ & \approx &
 \left\{\prod_{q \in \IQ} [dA_q(s)]^{dN_q^R(s)} [1 - dA_q^R(s)]^{1- dN_q^R(s)}\right\} \times \\ && \left\{\prod_{(w_1,w_2) \in \IW} [dA^W(s;w_1,w_2)]^{dN^W(s;w_1,w_2)}  
 [1 - dA^W(s;w_1,w_2)]^{1 - dN^W(s;w_1,w_2)}\right\} \times  \\ && \left\{\prod_{(v_1,v) \in \IV} [dA^V(s;v_1,v)]^{dN^V(s;v_1,v )} [1 -  dA^V(s;v_1,v)]^{1 - dN^V(s;v_1,v )}\right\}.
\end{eqnarray*}
This approximate equivalence informs the manner in which we may generate data from the model later in the sections dealing with an illustrative data set (Section \ref{sec-Illustration}) and the simulation studies (Section \ref{sec: simulation}) where we could have used this product of Bernoulli probabilities in generating synthetic or simulated data.


Consider a unit that is still at risk just before time $s$ whose LM process is at state $w_1$ and HS process at state $v_1 \notin \mathfrak{V}_0$. Two questions of interest are: 
\begin{itemize}
\item[(a)] What is the distribution of the next event occurrence? 
\item[(b)] Given in addition that an event occurred at time $s+t$, what are the conditional probabilities of each of the possible events? 
\end{itemize}
Denote by $T$ the time to the next event occurrence starting from time $s$. Then, 
\begin{eqnarray*}
\lefteqn{ \Pr\{T > t|\mathcal{F}_{s-}\} } \\ & = &  \prodi_{u=s}^{s+t} \left[1 - \left(\sum_{q=1}^Q \lambda_{0q}[\mathcal{E}_{q}(u)] \rho_{q}[N^R(u-)|\alpha_q] \exp\{B^R(u-) \theta^R\} - \right.\right. \\ && \left.\left. \eta(w_1,w_1) \exp\{B^W(u-)\theta^W\} - \xi(v_1,v_1) \exp\{B^V(u-)\theta^V\}\right) du \right]  \\
& = & \exp\left\{-\int_{s}^{s+t} \left(\sum_{q=1}^Q \lambda_{0q}[\mathcal{E}_{q}(u)] \rho_{q}[N^R(u-)|\alpha_q] \exp\{B^R(u-) \theta^R\} - \right.\right. \\ && \left.\left. \eta(w_1,w_1) \exp\{B^W(u-)\theta^W\} - \xi(v_1,v_1) \exp\{B^V(u-)\theta^V\}\right) du \right\} \\
& = & \exp\left\{-\exp\{B^R(s-) \theta^R\}\sum_{q=1}^Q \rho_{q}[N^R(s-)|\alpha_q]   \int_{s}^{s+t}  \lambda_{0q}[\mathcal{E}_{q}(u)] du + \right. \\
&& \left.  \eta(w_1,w_1) \exp\{B^W(s-)\theta^W\} t + \xi(v_1,v_1) \exp\{B^V(s-)\theta^V\} t\right\},
\end{eqnarray*}
with the second equality obtained by invoking the product-integral identity 
$$\prodi_{s \in I}  [1 - dA(s)]^{1-dN(s)} = \exp\left\{-\int_{s \in I} dA(s)\right\}$$ 
and since no events in $[s,s+t)$ means $dN_\bullet^R(u) + dN_\bullet^W(u) + dN_\bullet^V(u) = 0$ for $u \in [s,s+t)$, and the last equality arising since, prior to the next event, there will be {\em no changes} in the values of $N^R$, $B^R$, $B^W$, and $B^V$ from their respective values just before time $s$. Note that in particular, if the $\lambda_{0q}$s are constants, corresponding to the hazard rates of an exponential distribution, then the distribution of the time to the next event occurrence is exponential. Given that the event occurred at time $s+t$, then the conditional probability that it was an RCR-type $q$ event is 
$\exp{\{B^R(s-) \theta^R\}} \rho_{q}[N^R(s-)|\alpha_q]  \lambda_{0q}[\mathcal{E}_{q}(s+t)]/C(s,t),$
with
\begin{eqnarray*}
\lefteqn{C(s,t) = \exp{\{B^R(s-) \theta^R \}} \sum_{q^\prime = 1}^Q \rho_{q^\prime}[N^R(s-)|\alpha_{q^\prime}]  \lambda_{0q^\prime}[\mathcal{E}_{q^\prime}(s+t)] - } \\
&& \eta(w_1,w_1) \exp{\{B^W(s-)\theta^W\}} -  \xi(v_1,v_1) \exp{\{B^V(s-)\theta^V\}}.
\end{eqnarray*}
Similarly, the conditional probability that it was a transition to state $w_2$ for the LM process is
$\eta(w_1,w_2) \exp{\{B^W(s-)\theta^W\}}/C(s,t),$
and the conditional probability that it was a transition to state $v$, possibly to an absorbing state, for the HS process is
$\xi(v_1,v) \exp{\{B^V(s-)\theta^V\}}/C(s,t).$
These probabilities demonstrate the competing risks nature of the different possible events (see, for instance, \cite{Tsiatis2005}). They also provide a computational approach to iteratively and dynamically generate data from the joint model for use in simulation studies, with the basic idea being to first generate a time to any type of event, then to mark the type of event or update each of the counting processes by using the above conditional probabilities.

Denoting by $\Theta$ the set of all parameters of the model, the likelihood function arising from observing $D$, with $p_{(W,V)}(\cdot,\cdot)$ the initial joint probability mass function of $(W(0),V(0))$, is given by
\begin{eqnarray}
\label{lik for one unit} \lefteqn{ \mathcal{L}(\Theta|D) = p_{(W,V)}(W(0),V(0)) 
\prodi_{s=0}^\infty \left\{
\left[ \prod_{q \in \IQ} [dA^R(s;q)]^{dN^R(s;q)} \right] \right. \times \nonumber } \\ 
&& \left. \left[ \prod_{(w_1,w_2) \in \IW} [dA^W(s;w_1,w_2)]^{dN^W(s;w_1,w_2)} \right]   \right. \times \nonumber \\ 
&& \left. \left[ \prod_{(v_1,v) \in \IV} [dA^V(s;v_1,v)]^{dN^V(s;v_1,v)} \right]  \left[1 - dA_\bullet(s)\right]^{1-dN_\bullet(s)} 
\right\}.  \nonumber
\end{eqnarray}
This likelihood could be rewritten  in the form
\begin{eqnarray}
\lefteqn{\mathcal{L}(\Theta|D)  =  p_{(W,V)}(W(0),V(0))  \times}  \nonumber  \\
&& \prodi_{s=0}^\infty \left\{
\left[ \prod_{q \in \IQ} [dA^R(s;q)]^{dN^R(s;q)} \right]  \left[ \prod_{(w_1,w_2) \in \IW} [dA^W(s;w_1,w_2)]^{dN^W(s;w_1,w_2)} \right]  \right. \nonumber \\
&& \left. \left[ \prod_{(v_1,v) \in \IV} [dA^V(s;v_1,v)]^{dN^V(s;v_1,v)} \right] 
\right\}  \exp\{-A_\bullet(\infty)\}. \label{lik: one unit alt} 
\end{eqnarray}
Let us examine the meaning of $A_\bullet(\infty) \equiv A_\bullet^R(\infty) + A_\bullet^W(\infty) + A_\bullet^V(\infty)$. Simplifying, we see that this equals
$A_\bullet(\infty) = \int_0^\infty Y(s) T(s) ds = \int_0^{\tau\wedge\tau_A} T(s) ds$,
where
\begin{eqnarray*}
T(s) & = &  \sum_{q \in \IQ} \lambda_{0q}[\mathcal{E}_q(s)] \rho_q[N^R(s-)|\alpha_q] \exp\{B^R(s-) \theta^R\} -  \\
&& \sum_{w_1 \in \mathfrak{W}} Y^W(s;w_1) \eta(w_1,w_1) \exp\{B^W(s-) \theta^W\} - \\
&& \sum_{v_1 \in \mathfrak{V}_1} Y^V(s;v_1) \xi(v_1,v_1) \exp\{B^V(s-) \theta^V\}.
\end{eqnarray*}
Recall that $\eta(w_1,w_1)$ and $\xi(v_1,v_1)$ are non-positive real numbers. Thus, $T(s) ds$ could be interpreted as the total risk of an event, either a recurrent event in the RCR component or a transition in the LM or HS components, occurring from all possible sources (RCR, LM, HS) that the unit is exposed to at the infinitesimal time interval $[s, s+ds)$, given the history $\mathcal{F}_{s-}$ just before time $s$.

We provide further explanations of the elements of the joint model. First, there is a tacit assumption that no more than one event of any type could occur at any given time $s$. {Second, the event rate at any time point $s$ for any type of event is in the presence of the other possible risk events. Thus, consider a specific $q_0 \in \IQ$ and assume that the unit is still at-risk at time $s_0$. Then,
\begin{eqnarray}
\label{RCR probability}
\Pr\{dN_{q_0}^R(s_0) = 1|\mathcal{F}_{s_0-}\} & \approx & \lambda_{0q_0}[\mathcal{E}_{q_0}(s_0)] 
\rho_{q_0}[N^R(s_0-)|\alpha_{q_0}] e^{B^R(s_0-)\theta^R} d{s_0}
\end{eqnarray}
is the conditional probability, given $\mathcal{F}_{s_0-}$, that an event occurred at $[s_0,s_0+ds_0)$ and is of RCR type $q_0$ {\em and} all other event types did not occur, which is the essence of what is called a crude hazard rate, instead of a net hazard rate.} Third, the effective (or virtual) age functions $\mathcal{E}_{q}(\cdot)$s, which are assumed to be dynamically determined and not dependent on any unknown parameters, encodes the impact of performed interventions after each event occurrence. Several possible choices of these functions are:
\begin{itemize}
\item $\mathcal{E}_q(s) = s$ for all $s \ge 0$ and $q \in \IQ$. This is usually referred to as a minimal repair intervention, corresponding to the situation where an intervention simply puts back the system at the age just before the event occurrence. Note that this is the typical effective age process specified in many existing models, which may be interpreted, {\em erroneously}, as the model having {\em no} effective age process.
\item $\mathcal{E}_q(s)  = s - S_{N_\bullet(s-)}$ where $0 = S_0 < S_1 < S_2 < \ldots$ are the successive event times. This corresponds to a perfect intervention, which has the effect of re-setting the time to zero for each of the RCRs after each event occurrence. In a reliability setting, this means that all $Q$ components (unless having exponential lifetimes) are replaced by corresponding identical, but new, components at each event occurrence.
\item $\mathcal{E}_q(s) = s - S_{N_\bullet^R(s-)}^R$ where $0 = S_0^R < S_1^R < S_2^R < \ldots$ are the successive event times of the occurrences of the RCR events.
\item $\mathcal{E}_q(s) = s - S_{qN^R(s-;q)}^R$ where $0 = S_{q0}^R < S_{q1}^R < S_{q2}^R < \ldots$ are the successive event times of the occurrences of RCR events of type $q$. This corresponds to performing a perfect repair at each risk-specific event occurrence for each of the $Q$ risks. {\em This is the effective age function utilized in generating the simulated data for illustrating the estimation procedures in Section \ref{sec-Illustration}, as well as in the simulation study in Section \ref{sec: simulation}.}
\item Other general forms are possible, such as those in  \cite{dorado1997} and \cite{gonzalez2005modelling}, the latter employing ideas of Kijima (see \cite{Kijima89}). See also the discussion on the `reality' of virtual or effective ages in paper \cite{FinCha21}, as well as the  review paper \cite{Beu2021}.
\end{itemize}

One may think that the use of these effective age processes just adds more complications. However, note that as can be seen from some of the possible forms of these processes, one actually gains generality, since whatever analytical results are obtained for the general model will then apply to these specific situations, which includes many existing models. In addition, it should be pointed out that models that seem to be devoid of an effective age, such as the Cox PHM or the Andersen-Gill model, do in fact have an {\em intrinsic} effective age, which is either $\mathcal{E}_q(s) = s$  or $\mathcal{E}_q(s)  = s - S_{N_\bullet^R(s-)}$ and since we usually have $Q=1$ in these earlier models.

Fourth, the impact of accumulating RCR event occurrences, which could be adverse, but could also be beneficial as in software engineering applications, is incorporated in the model through the $\rho_q(\cdot|\cdot)$ functions. One possible choice is an exponential function, such as 
$$\rho_q(N^R(s-)|\alpha_q) = \exp\{[\log(1 + N^R(s-))]\trp \alpha_q\},\ q=1,\ldots,Q,$$ 
the form that was utilized in generating the illustrative simulated data. However, other choices could be made as well. Finally, the modulating exponential link function in the model is for the impact of the covariates as well as the values of the RCR, LM, and HS processes just before the time of interest, with the vector of coefficients quantifying the effects of the covariates. The use of $\log(1+N^R(s-))$ in the model could be viewed as using them as internal covariates, or that the dynamic model is a self-exciting model.

Similar interpretations hold for the parameters $\{\eta(w_1,w_2): (w_1,w_2) \in \IW\}$ and $\{\xi(v_1,v): (v_1,v) \in \IV\}$. Thus, if just before time $s_0$, $W(s_0-) = w_1$ and $V(s_0-) = v_1$, indicated by $\mathcal{F}_{s_0-}(w_1,v_1)$, then 
\begin{eqnarray}
\lefteqn{ \Pr\{W(s_0+ds_0) =  w_2|\mathcal{F}_{s_0}(w_1,v_1)\}  \approx } \nonumber  \\  &&  \eta(w_1,w_2)
\exp\{[\log(1+N^R(s_0-))]\trp \theta^{LR} + \theta^{LH}_{j(v_1)} + X\theta^{LX}\} ds_0;  \label{LM probability}  \\
\lefteqn{ \Pr\{V(s_0+ds_0) = v|\mathcal{F}_{s_0}(w_1,v_1)\}  \approx }  \nonumber  \\  &&  \xi(v_1,v)
\exp\{ [\log(1+N^R(s_0-))]\trp \theta^{HR} + \theta^{HL}_{j(w_1)} + X\theta^{HX}\} ds_0, \label{HS probability}
\end{eqnarray}
where $j(v_1)$ is the index associated with $v_1$ in $\mathfrak{V}_1$ and $j(w_1)$ is the index associated with $w_1$ in $\mathfrak{W}$.

\subsection{Special Case: Independent Poisson Processes and CTMCs for One Unit}
\label{subsec: special case}

There is a special case arising from this general joint model obtained when we set $\lambda_{0q}(s) = \lambda_{0q}$, $q \in \IQ$; $\rho_q = 1$; $\theta^R = 0$; $\theta^W = 0$; and $\theta^V = 0$. In this situation, we have
\begin{eqnarray*}
dA^R(s;q) & = & \lambda_{0q} Y(s) ds, q \in \IQ; 
\\ dA^W(s;w_1,w_2) & = & \eta(w_1,w_2) Y(s) Y^W(s;w_1) ds, (w_1,w_2) \in \IW; 
\\ dA^V(s;v_1,v) & = & \xi(v_1,v) Y(s) Y^W(s,v_1) ds, (v_1,v) \in \IV.
\end{eqnarray*}
It is easy to see that this particular model coincides with the model where we have the following situations:
\begin{itemize}
\item[(i)] $N^R(\cdot;q), q \in \IQ,$ are independent homogeneous Poisson processes with respective rates $\lambda_{0q}, q \in \IQ$;
\item[(ii)] $W(\cdot)$ is a continuous-time Markov chain (CTMC) with infinitesimal generator matrix (IGM)  consisting of $\{\eta(w_1,w_2)\}$;
\item[(iii)] $V(\cdot)$ is a CTMC with IGM consisting of $\{\xi(v_1,v_2)\}$; 
\item[(iv)] $N^R$, $W$, and $V$ are independent; and
\item[(v)] the processes are observed over $[0,\min(\tau,\tau_A)]$, where $\tau$ is the end of monitoring period, while $\tau_A$ is the absorption time of $V$ into $\mathfrak{V}_0$ if $\mathfrak{V}_0 \ne \emptyset$.
\end{itemize}
In this special situation, the $\lambda_{0q}$'s are both crude and net hazard rates. Also, due to the memoryless property of the exponential distribution, interventions performed after each event occurrence will have no impact in the succeeding event occurrences. This specific joint model further allows us to provide an operational interpretation of the model parameters. Thus, suppose that at time $s$, the LM process is at state $w_1$ and the HS process is at state $v_1 \notin \mathfrak{V}_0$. Then, the distribution of the time to the next event occurrence of any type (the holding or sojourn time at the current state configuration) has an exponential distribution with parameter $C = \lambda_{0\bullet} - \eta(w_1,w_1) - \xi(v_1,v_1)$. When an event occurs, then the (conditional) probability that it is (i) an RCR event of type $q$ is $\lambda_{0q}/C$; (ii) a transition to LM state $w_2 \ne w_1$ is $\eta(w_1,w_2)/C$; or (iii) a transition to an HS state $v \ne v_1$ is $\xi(v_1,v)/C$. This is the essence of the competing risks nature of all the possible event types: an RCR event, an LM transition, and an HS transition.
As such, the more general model could be viewed as an extension of this basic model with independent Poisson processes for the RCR component and CTMCs for the LM and HS components. 
For this special case, the likelihood function in (\ref{lik: one unit alt}) simplifies to the expression
\begin{eqnarray}
\lefteqn{\mathcal{L}(\Theta|D)  =  p_{(W,V)}(W(0),V(0)) 
\exp\left\{-\int_0^{\tau\wedge\tau_A} T(s) ds\right\}
\left[\prod_{q \in \IQ} \lambda_{0q}^{N^R(\tau\wedge\tau_A;q)}\right]   \times  \nonumber} \\
&& 
\left[\prod_{(w_1,w_2) \in \IW} \eta(w_1,w_2)^{N^W(\tau\wedge\tau_A;w_1,w_2)}\right] 
\left[\prod_{(v_1,v) \in \IW} \xi(v_1,v)^{N^V(\tau\wedge\tau_A;v_1,v)}\right]. \label{lik: special case one unit}  
\end{eqnarray}
with 
$$T(s) = \lambda_{0\bullet} - \sum_{w_1 \in \mathfrak{W}} \eta(w_1,w_1) Y^W(s;w_1) - \sum_{v_1 \in \mathfrak{V}_1} \xi(v_1,v_1) Y^V(s;v_1).$$ 
Note that $T(s) = T(s;\lambda_0,\eta,\xi)$ is a quantity, since it depends on the parameters, instead of a sample statistic.

\section{Parametric Estimation of Model Parameters}
\label{sec: estimation - parametric}


Having introduced the joint model, we address in this section the problem of making inferences about the model parameters. We assume that we are able to observe $n$ independent units, with the $i$th unit having data $D_i = (X_i,N_i,\mathcal{E}_i,Y_i,Y_i^W, Y_i^V)$ as in (\ref{data: one unit}). The full sample data is represented by
\begin{equation}
\label{sample data}
\mathbf{D} = (D_1,D_2,\ldots,D_n),
\end{equation}
while the model parameters is represented by, with the convention that $q \in \IQ$,  $(w_1,w_2) \in \IW$, and $(v_1,v) \in \IV$,
\begin{eqnarray*}
\Theta & \equiv & \left[\{\lambda_{0q}(\cdot), \alpha_q\}, \{\eta(w_1,w_2)\},
\{\xi(v_1,v)\}, \theta^R, \theta^W, \theta^V\right].
\end{eqnarray*}
The $\lambda_{0q}$s could be parametrically-specified, hence will have finite-dimensional parameters, so $\Theta$ will also then be finite-dimensional. Except for the special case mentioned above, our main focus will be the case where the $\lambda_{0q}$s are nonparametric. The distributions, $G_i$s, of the end of monitoring periods, $\tau_i$s, also have model parameters, but they are nuisance parameters and not of main interest. To visualize the type of sample data set that accrues, Figure \ref{sample data picture} provides a picture of a sample of 30 units from a simulated sample data with $n = 100$ units.

\medskip
\centerline{[INSERT FIGURE \ref{sample data picture} HERE]}
\medskip

The full likelihood function, given $\mathbf{D}$, is
%
$\mathcal{L}(\Theta|\mathbf{D}) = \prod_{i=1}^n \mathcal{L}(\Theta|\mathbf{D}_i),$
%
where the $\mathcal{L}(\Theta|D_i)$ is of the form in (\ref{lik: one unit alt}). If the $\lambda_{0q}(\cdot)$s are parametrically-specified, then estimators of the finite-dimensional model parameters could be obtained as the maximizers of this full likelihood function, and their finite and asymptotic properties will follow from the general results for maximum likelihood estimators based on counting processes; see, for instance,  \cite{Bor84} and \cite{ander1993}.

We illustrate this situation for the special case of the model given in subsection \ref{subsec: special case}, so that the parameter is simply $\Theta = [\{\lambda_{0q}\},\{\eta(w_1,w_2)\},\{\xi(v_1,v)\}]$. In this situation, from (\ref{lik: special case one unit}), the full likelihood reduces to, with $\tau_i^* = \tau_i \wedge \tau_{iA}$,
\begin{eqnarray*}
\lefteqn{\mathcal{L}(\Theta|\mathbf{D}) = \prod_{i=1}^n p_{(W,V)}(W_i(0),V_i(0)) \times} \\
&& \left[\prod_{q\in\IQ} \lambda_{0q}^{\sum_{i=1}^n N_i^R(\tau_i^*;q)} \right] 
\left[\prod_{(w_1,w_2)\in\IW} \eta(w_1,w_2)^{\sum_{i=1}^n N_i^W(\tau_i^*;w_1,w_2)}\right] \times \\
&& \left[\prod_{(v_1,v)\in\IV} \xi(v_1,v)^{\sum_{i=1}^n N_i^V(\tau_i^*;v_1,v)}\right]
\exp\left\{-\sum_{i=1}^n \int_0^{\tau_i^*} T_i(s;\lambda_0,\eta,\xi) ds\right\},
\end{eqnarray*}
where
\begin{displaymath}
T_i(s;\lambda_0,\eta,\xi) = \lambda_{0\bullet} - \sum_{w_1\in\mathfrak{W}} \eta(w_1,w_1) Y_i^W(s;w_1) - \sum_{v_1\in\mathfrak{V}}  \xi(v_1,v_1) Y_i^V(s;v_1).
\end{displaymath}
The score function $U(\Theta|\mathbf{D}) = \nabla_\Theta \log \mathcal{L}(\Theta|\mathbf{D})$ has elements,
where indices are $q \in \IQ$, $(w_1,w_2) \in \IW$, and $(v_1,v) \in \IV$,
\begin{eqnarray*}
U^R(\Theta;q) & = & \frac{\sum_{i=1}^n N_i^R(\tau_i^*;q)}{\lambda_{0q}} - \sum_{i=1}^n \tau_i^*; \\
U^W(\Theta;w_1,w_2) & = & \frac{\sum_{i=1}^n N_i^W(\tau_i^*;w_1,w_2)}{\eta(w_1,w_2)} - \sum_{i=1}^n \int_0^{\tau_i^*} Y_i^W(s;w_1) ds; \\
U^V(\Theta;v_1,v) & = & \frac{\sum_{i=1}^n N_i^V(\tau_i^*;v_1,v)}{\xi(v_1,v)} - \sum_{i=1}^n \int_0^{\tau_i^*} Y_i^V(s;v_1) ds.
\end{eqnarray*}
Equating these equations to zeros yield the ML estimators of the parameters, which are given below. 
\begin{eqnarray*}
\hat{\lambda}_{0q} & = & \frac{\sum_{i=1}^n N_i^R(\tau_i^*;q)}{\sum_{i=1}^n \tau_i^*} = \frac{\sum_{i=1}^n \int_0^\infty dN_i^R(s;q)}{\sum_{i=1}^n \int_0^\infty Y_i(s) ds}; \\
\hat{\eta}(w_1,w_2) & = & \frac{\sum_{i=1}^n N_i^W(\tau_i^*;w_1,w_2)}{\sum_{i=1}^n \int_0^{\tau_i^*} Y_i^W(s;w_1) ds} = \frac{\int_0^\infty dN_i^W(s;w_1,w_2)}{\sum_{i=1}^n \int_0^\infty Y_i(s) Y_i^W(s;w_1) ds}; \\
\hat{\xi}(v_1,v) & = & \frac{\sum_{i=1}^n N_i^V(\tau_i^*;v_1,v)}{\sum_{i=1}^n \int_0^{\tau_i^*} Y_i^V(s;v_1) ds} = \frac{\int_0^\infty dN_i^V(s;v_1,v)}{\sum_{i=1}^n \int_0^\infty Y_i(s) Y_i^V(s;v_1) ds}.
\end{eqnarray*}
These estimators possess the interpretation of being the observed ``occurrence-exposure" rates, since
the numerators are total event counts:
\begin{itemize}
\item $\sum_{i=1}^n N_i^R(\tau_i^*;q)$ is the total number of observed RCR type $q$ events; 
\item $\sum_{i=1}^n N_i^W(\tau_i^*;w_1,w_2)$ is the total number of observed transitions in the LM process from state $w_1$ into $w_2$;
\item $\sum_{i=1}^n N_i^V(\tau_i^*;v_1,v)$ is the total number of observed transitions in the HS process from state $v_1$ into $v$;
\end{itemize}
while the denominators are total observed exposure times:
\begin{itemize}
\item $\sum_{i=1}^n \int_0^\infty Y_i(s) ds$ is the total time at-risk for all the units; 
\item $\sum_{i=1}^n \int_0^\infty Y_i(s)Y_i^W(s;w_1) dw$ is the total observed time of all units that they were at-risk for a transition in the LM process from state $w_1$;
\item $\sum_{i=1}^n \int_0^\infty Y_i(s) Y_i^V(s;v_1) ds$ is the total observed time of all units that they were at-risk for a transition in the HS process from state $v_1$. 
\end{itemize}
An important and crucial point to emphasize is that the exposure times take into account the time after the last observed events in each component process until the end of monitoring, whether it is a censoring (reaching $\tau_i$) or an absorption (reaching $\tau_{iA}$). If one ignores these right-censored times, then the estimators could be severely biased. This is a critical aspect we mentioned in the introductory section and re-iterate at this point that this should not be glossed over when dealing with recurrent event models.

The elements of the observed information matrix, $I(\Theta;\mathbf{D}) = - \nabla_{\Theta\trp} U(\Theta|\mathbf{D})$, which is a diagonal matrix, have diagonal elements:
\begin{eqnarray*}
I^R(\Theta;q) & = & \frac{\sum_{i=1}^n N_i^R(\tau_i^*;q)}{\lambda_{0q}^2}, q \in \IQ; \\
I^W(\Theta;w_1,w_2) & = & \frac{\sum_{i=1}^n N_i^W(\tau_i^*;w_1,w_2)}{\eta(w_1,w_2)^2}, (w_1,w_2) \in \IW; \\
I^V(\Theta;v_1,v) & = &  \frac{\sum_{i=1}^n N_i^V(\tau_i^*;v_1,v)}{\xi(v_1,v)^2}, (v_1,v) \in \IV.
\end{eqnarray*}
Abbreviating the estimators into $\hat{\Theta} = (\hat{\lambda}_0, \hat{\eta}, \hat{\xi})$, we obtain the asymptotic result, that as $n \rightarrow \infty$,
\begin{equation}
\label{approx asymptotic special case}
\hat{\Theta}  \sim \mbox{AsyMVN}(\Theta,I(\hat{\Theta};\mathbf{D})^{-1}),
\end{equation}
with \mbox{AsyMVN} meaning asymptotically multivariate normal. Thus, this result seems to indicate that the RCR, LM, and HS components or the estimators of their respective parameters do not have anything to do with each other, which appears intuitive since the RCR, LM, and HS processes were assumed to be independent processes to begin with. But, let us examine this issue further. The result in (\ref{approx asymptotic special case}) is an approximation to the theoretical result that
\begin{equation}
\label{asymptotic special case}
\hat{\Theta}  \sim \mbox{AsyMVN}\left(\Theta,\frac{1}{n}\mathfrak{I}(\Theta)^{-1}\right),
\end{equation}
where $\frac{1}{n} I(\hat{\Theta};\mathbf{D}) \stackrel{pr}{\rightarrow} \mathfrak{I}(\Theta)$. Evidently, $\mathfrak{I}$ is a diagonal matrix, so let us examine its diagonal elements. Let $q \in \IQ$. Then we have, with `$\plim$' denoting in-probability limit,
\begin{eqnarray*}
\lefteqn{ \lambda_{0q}^2 \mathfrak{I}^R(\Theta;q)  = \plim_{n\rightarrow\infty} \frac{1}{n} \sum_{i=1}^n \int_0^\infty dN_i^R(s;q) } \\
& = & \plim_{n\rightarrow\infty} \frac{1}{n} \sum_{i=1}^n \int_0^\infty dM_i^R(s;q) + 
\lambda_{0q}  \left[\plim_{n\rightarrow\infty} \frac{1}{n} \sum_{i=1}^n \int_0^\infty Y_i(s) ds \right].
\end{eqnarray*}
The first term on the right-hand side (RHS) converges in probability to zero by the weak law of large numbers and the zero-mean martingale property. The second term in the RHS converges in probability to its expectation, hence
\begin{displaymath}
\mathfrak{I}^R(\Theta;q) = \frac{1}{\lambda_{0q}}  \left[ \int_0^\infty \left\{ \lim_{n\rightarrow\infty} \frac{1}{n} \sum_{i}^n E[Y_i(s)]\right\} ds \right].
\end{displaymath}
But, now,
\begin{eqnarray*}
\lim_{n\rightarrow\infty} \frac{1}{n} \sum_{i}^n E[Y_i(s)] & = & \lim_{n\rightarrow\infty} \frac{1}{n} \sum_{i}^n \Pr\{\tau_i \ge s\} \Pr\{\tau_i^A \ge s\} \\
& = & \lim_{n\rightarrow\infty} \frac{1}{n} \sum_{i}^n  \bar{G}_i(s-) \Pr\{V_i(u) \notin \mathfrak{V}_0, u \le s\}.
\end{eqnarray*}
with $\bar{G}_i = 1 - G_i$. The last probability term above will depend on the generators $\{\xi(v_1,v): (v_1,v) \in \IV\}$ of the CTMC $\{V(s): s \ge 0\}$, so that the theoretical Fisher information or the asymptotic variance associated with the estimator $\hat{\lambda}_{0q}$ depends after all on the HS process, as well as on the $G_i$s, contrary to the seemingly intuitive expectation that it should not depend on the LM and HS processes. This result is a subtle one which arises because of the structure of the observation processes. If  $G_i = G, i=1,\ldots,n$, then we have that
\begin{displaymath}
\mathfrak{I}^R(\Theta;q) = \frac{1}{\lambda_{0q}} \int_0^\infty \bar{G}(s-) \Pr\{V(u) \notin \mathfrak{V}_0, u \le s\} ds,
\end{displaymath}
since $\Pr\{V_i(u) \notin \mathfrak{V}_0, u \le s\} = \Pr\{V(u) \notin \mathfrak{V}_0, u \le s\}, i=1,\ldots,n$. If we denote by $\Gamma$ the generator matrix of $\{V(s): s \ge 0\}$ and let $\Gamma_1$ be the sub-matrix associated with the $\mathfrak{V}_1$ states, then if $p_0^V \equiv (p_V(v_1), v_1 \in \mathfrak{V}_1)\trp$ is the initial probability mass function of $V(0)$, we have
\begin{displaymath}
\Pr\{V(u) \notin \mathfrak{V}_0, u \le s\} = \Pr\{V(s) \in \mathfrak{V}_1\} = (p_0^V)\trp \left[\exp\{s\Gamma_{11}\}\right] 1_{|\mathfrak{V}_1|}
\end{displaymath}
where the matrix exponential is
%
$\exp\{s\Gamma_{11}\}  \equiv \sum_{k=0}^\infty {s^k \Gamma_{11}^k}/{k!}$
%
and $1_K$ is a column vector of $1$s of dimension $K$. Thus, we obtain
\begin{displaymath}
\mathfrak{I}^R(\Theta;q) = \frac{1}{\lambda_{0q}}  (p_0^V)\trp \sum_{k=0}^\infty  \left[ \int_0^\infty \bar{G}(s-) \frac{s^k}{k!} ds \right] \Gamma_{11}^k 1_{|\mathfrak{V}_1|}. 
\end{displaymath}
For example, if $\bar{G}(s) = \exp(-\nu s)$, that is, $\tau_i$'s are exponentially-distributed with mean $1/\nu$, then the above expression simplifies to
$$\mathfrak{I}^R(\Theta;q) = \frac{1}{\lambda_{0q}} \frac{1}{\nu} (p_0^V)\trp \left[ \sum_{k=0}^\infty \left(\frac{\Gamma_{11}}{\nu}\right)^k\right] 1_{|\mathfrak{V}_1|}.$$
For computational purposes, one may use an eigenvalue decomposition of $\Gamma_{11}$: $\Gamma_{11} = U \DG(d) U^{-1}$ where $d$ consists of the eigenvalues of $\Gamma_{11}$ and $U$ is the matrix of eigenvectors associated with the eigenvalues $d$. The main point of this example though is the demonstration that estimators of the parameters associated with the RCR, LM, or HS process will depend on features of the other processes, {\em even} when one starts with independent processes.

We remark that the estimators $\hat{\lambda}_{0q}$s, $\hat{\eta}(w_1,w_2)$s, and $\hat{\xi}(v_1,v)$s could also be derived as method-of-moments estimators using the martingale structure. The inverse of the observed Fisher information matrix coincides with an estimator using the optional variation (OV) matrix process, while the inverse of the Fisher information matrix coincides with the limit-in-probability of the predictable quadratic variation matrix. To demonstrate for $\lambda_{0q}$, we have that
$$\left\{\sum_{i=1}^n M_i^R(s;q) = \sum_{i=1}^n \left[N_i^R(s;q) - \int_0^s Y_i(t) \lambda_{0q} dt\right]: s \ge 0\right\}$$
is a zero-mean square-integrable martingale. Let $s \rightarrow \infty$ and $\sum_{i=1}^n M_i^R(\infty;q) = 0$. Then solving for $\lambda_{0q}$ we obtain $\hat{\lambda}_{0q}$. Next, we have
\begin{eqnarray*}
\hat{\lambda}_{0q} & =  &  \frac{\sum_{i=1}^n \int_0^\infty dN_i^R(s;q)}{\sum_{i=1}^n \int_0^\infty Y_i(s) ds} = \lambda_{0q} + \frac{\sum_{i=1}^n \int_0^\infty dM_i^R(s;q)}{\sum_{i=1}^n \int_0^\infty Y_i(s) ds}
\end{eqnarray*}
so that
\begin{eqnarray*}
\sqrt{n}[\hat{\lambda}_{0q} - \lambda_{0q}] = \left( \int_0^\infty \frac{1}{n} \sum_{i=1}^n Y_i(s) ds \right)^{-1} \frac{1}{\sqrt{n}} \sum_{I=1}^n \int_0^\infty dM_i^R(s;q).
\end{eqnarray*}
We have already seen where $\int_0^\infty \frac{1}{n} \sum_{i=1}^n Y_i(s) ds$ converges in probability, whereas by the Martingale Central Limit Theorem, we have that
\begin{eqnarray*}
\frac{1}{\sqrt{n}} \sum_{i=1}^n \int_0^\infty dM_i^R(s;q) \stackrel{d}{\rightarrow} N(0,\sigma^2_R(q))
\end{eqnarray*}
with 
\begin{eqnarray*}
\sigma_R^2(q) & = & \int_0^\infty \left\{\plim_{n \rightarrow \infty} \frac{1}{n} \sum_{i=1}^n d \langle M_i^R(\cdot;q), M_i^R(\cdot;q) \rangle (s) \right\} \\
 & = &  \int_0^\infty \left\{\plim_{n\rightarrow\infty} \frac{1}{n} \sum_{i=1}^n Y_i(s) \lambda_{0q} \right\} ds.
\end{eqnarray*}
Therefore, we have
\begin{eqnarray*}
\lefteqn{ \sqrt{n} [\hat{\lambda}_{0q} - \lambda_{0q}] \stackrel{d}{\rightarrow}  N\left(0,\frac{\lambda_{0q}}{\int_0^\infty \left[ \plim_{n\rightarrow\infty} \frac{1}{n} \sum_{i=1}^n Y_i(s) \right] ds}\right) } \\
&& =  N\left(0,\frac{\lambda_{0q}}{\int_0^\infty \left[ \lim_{n\rightarrow\infty} \frac{1}{n} \sum_{i=1}^n E[Y_i(s)] \right] ds} \right), 
\end{eqnarray*}
which is the same result stated above using ML theory. Analogous asymptotic derivations can be done for $\hat{\eta}(w_1,w_2)$ and $\hat{\xi}(v_1,v)$, though the resulting limiting variances will involve expected occupation times for their respective states of the $W_i$-processes coming from the $Y_i^W(\cdot;w_1)$ terms and the $V_i$-processes from the $Y_i^V(\cdot;v_1)$ terms. Note that, asymptotically, these estimators are independent, {\em but} their limiting variances depend on the parameters from the other processes.

\section{Semi-Parametric Estimation of Model Parameters}
\label{sec: estimation - semiparametric}

In this section we consider the estimation of model parameters when the baseline hazard rate functions $\lambda_{0q}(\cdot)$s are specified nonparametrically. Recall that $\Lambda_{0q}(t) = \int_0^t \lambda_{0q}(u) du, q \in \IQ$ are the associated baseline cumulative hazard functions and $\bar{F}_{0q}(t) = \exp\{-\Lambda_{0q}(t)\}, q\in\IQ$ are the associated baseline survivor functions. 
%
%
Recall that, for $q \in \IQ$, $(w_1,w_2) \in \IW$, and $(v_1,v) \in \IV$, we have
\begin{eqnarray*}
dA^R(s;q) & = & Y(s) \lambda_{0q}[\mathcal{E}_q(s)] \rho_q(N^R(s-)|\alpha_q) \psi^R(s|\theta^R) ds; \\
dA^W(s;w_1,w_2) & = & Y(s) Y^W(s;w_1) \eta(w_1,w_2) \psi^W(s|\theta^W) ds; \\
dA^V(s;v_1,v) & = & Y(s) Y^V(s;v_1) \xi(v_1,v) \psi^V(s|\theta^V) ds.
\end{eqnarray*}
We abbreviate the vector of model parameters into $\Theta  \equiv (\Lambda_0,\alpha,\eta,\xi,\theta^R,\theta^W,\theta^V)$. Our goal is to obtain estimators for these parameters based on the sample data $\mathbf{D} = (D_1, D_2,\ldots,D_n)$. 

The basic approach to obtain estimators is to first assume that $(\alpha,\theta^R,\theta^W,\theta^V)$ are known, then obtain `estimators' of $(\Lambda_0,\eta,\xi)$. Having obtained these `estimators', in quotes since they are not yet estimators when $(\alpha,\theta^R,\theta^W,\theta^V)$ are unknown, we plug them into the likelihood function to obtain a profile likelihood function. From the resulting profile likelihood function, which depends on $(\alpha,\theta^R,\theta^W,\theta^V)$, we obtain its maximizers with respect to these finite-dimensional parameters to obtain their estimators. These estimators are then plugged into the `estimators' of $(\Lambda_0,\eta,\xi)$ to obtain their final estimators.

The full likelihood function based on the sample data $\mathbf{D} = (D_1,\ldots,D_n)$ could be written as a product of three ``major'' likelihood functions corresponding to the three model components:
\begin{eqnarray}
\lefteqn{\mathcal{L}(\Theta|\mathbf{D})  = 
\left[ \prod_{q \in \IQ} \mathcal{L}^R(\Lambda_{0q},\alpha,\theta^R;q|\mathbf{D}) \right] \times \nonumber}  \\
&& \left[ \prod_{(w_1,w_2) \in \IW} \mathcal{L}^W(\eta,\theta^V;w_1,w_2|\mathbf{D}) \right] 
\left[ \prod_{(v_1,v) \in \IV} \mathcal{L}^V(\xi,\theta^W;v_1,v|\mathbf{D}) \right] \label{full lik factor}  
\end{eqnarray}
where, suppressing writing of the parameters in the functions,
\begin{eqnarray*}
\mathcal{L}^R(\Lambda_{0q},\alpha,\theta^R;q|\mathbf{D}) & = & \left\{\prod_{i=1}^n \prodi_{s=0}^\infty \left[dA_i^R(s;q)\right]^{dN_i^R(s;q)}\right\} \times \\ && 
\exp\left\{-\sum_{i=1}^n \int_0^\infty dA_i^R(s;q)\right\}; \\
\mathcal{L}^W(\eta,\theta^W;w_1,w_2|\mathbf{D}) & = & \left\{ \prod_{i=1}^n \prodi_{s=0}^\infty \left[dA_i^W(s;w_1,w_2)\right]^{dN_i^W(s;w_1,w_2)}\right\} \times \\ && \exp\left\{-\sum_{i=1}^n\int_0^\infty dA_i^W(s;w_1,w_2)\right\}; \\
\end{eqnarray*}
\begin{eqnarray*}
\mathcal{L}^V(\eta,\theta^V;v_1,v|\mathbf{D}) & = & \left\{ \prod_{i=1}^n \prodi_{s=0}^\infty \left[dA_i^V(s;v_1,v)\right]^{dN_i^V(s;v_1,v)}\right\} \times \\ && \exp\left\{-\sum_{i=1}^n\int_0^\infty dA_i^V(s;v_1,v)\right\}.
\end{eqnarray*}
Let $0 = S_0 < S_1 < S_2 < \ldots < S_K < S_{K+1} =\infty$ be the ordered distinct times of \underline{any} type of event occurrence for all the $n$ sample units. Also, let $0 = T_0 < T_1 < T_2 < \ldots  < T_L < T_{L+1} = \infty$ be the ordered distinct values of the set $\{\mathcal{E}_{iq}(S_j): i = 1, \ldots, n; q \in \IQ; j=0,1,\ldots,K\}$. Recall that $\tau_i^* = \tau_i \wedge \tau_{iA}$. Observe that both $\{S_k: k=0,1,\ldots,K,K+1\}$ and $\{T_l: l=0,1,\ldots,L,L+1\}$ partition $[0,\infty)$. For each $i=1,\ldots,n$, and $q \in \IQ$, $\mathcal{E}_{iq}(\cdot)$ is observed, hence defined, only on $[0,\tau_i^*)$. However, for notational convenience, we define $\mathcal{E}_{iq}(s) = 0$ for $s > \tau_i^*$. In addition, on each non-empty interval $(S_{j-1} \wedge \tau_i^*,S_j \wedge \tau_i^*]$, $\mathcal{E}_{iq}(\cdot)$ has an inverse which will be denoted by $\mathcal{E}_{iqj}^{-1}(\cdot)$. Henceforth, for brevity of notation, we adopt the convention that $0/0 = 0$.

Following the use of doubly-indexed processes in \cite{pena2007semiparametric} for a setting with only one recurrent event type and without LM and HS processes, define the doubly-indexed processes $\{(N_i^R(s,t;q),A_i^R(s,t; q|\Lambda_{0q},\alpha_1,\theta^R): (s,t) \in \Re_+^2\}$ via
\begin{eqnarray*}
 Z_{iq}(v,t) & = &  I\{\mathcal{E}_{iq}(t) \le v\};  \\
N_i^R(s,t;q) & = & \int_0^s Z_{iq}(v,t) dN_i^R(v;q); \\ A_i^R(s,t;q|\Lambda_{0q},\alpha_q,\theta^R) & = & \int_0^s Z_{iq}(v,t) dA_i^R(v;q|\Lambda_{0q},\alpha_q,\theta^R).
\end{eqnarray*}
Also, let 
\begin{eqnarray*}
N_\bullet^R(s,t;q) & = & \sum_{i=1}^n N_i^R(s,t;q); \\  A_\bullet^R(s,t;q|\Lambda_{0q},\alpha_q,\theta^R) & = & \sum_{i=1}^n A_i^R(s,t;q|\Lambda_{0q},\alpha_q,\theta^R).
\end{eqnarray*}
Then, for fixed $t$, $\{M_\bullet^R(s,t;q|\Lambda_{0q},\alpha_q,\theta^R) = N_\bullet^R(s,t;q) - A_\bullet^R(s,t;q|\Lambda_{0q},\alpha_q,\theta^R): s \ge 0\}$ is a zero-mean square-integrable martingale. Thus, $E[N_\bullet^R(s,t;q)] = E[A_\bullet^R(s,t;q|\Lambda_{0q},\alpha_q,\theta^R)]$.
It can be shown (see Proposition \ref{prop: change-of-variable}) that for $q \in \IQ$, 
$$A_\bullet^R(s,t;q|\Lambda_{0q},\alpha_q,\theta^R) = \int_0^t S_q^{0R}(s,u|\alpha_q,\theta^R) d\Lambda_{0q}(u),$$
where
\begin{eqnarray}
S_q^{0R}(s,u|\alpha_q,\theta^R) & = & \sum_{i=1}^n\sum_{j=1}^{K+1} \left\{ \left[\frac{ \rho_q[N_i^R(\mathcal{E}_{iqj}^{-1}(u)-)|\alpha_q] \psi_i^R(\mathcal{E}_{iqj}^{-1}(u)|\theta^R)}{\mathcal{E}_{iq}^\prime[\mathcal{E}_{iqj}^{-1}(u)]}\right] \times \right. \nonumber \\ && \left. 
I\{\mathcal{E}_{iq}(S_{j-1} \wedge \tau_i^* \wedge s) < u \le \mathcal{E}_{iq}(S_{j} \wedge \tau_i^* \wedge s)\} \right\};
\label{defn Sq0R}
\end{eqnarray}
and $dN_\bullet^R(s,T_l;q) = \sum_{i=1}^n \sum_{j=1}^K I\{S_j \le s; \mathcal{E}_{iq}(S_j) = T_l\} dN_i^R(S_j;q)$, where $\psi_i^R$ uses the data for the $i$th unit, e.g., $\psi_i^R(s|\theta^R) = \exp\{B_i^R(s-)\theta^R\}$. %

\begin{proposition}
\label{estimator of Lambda when parameter known}
If $(\alpha_q,\theta^R)$ is known, then an estimator of $\Lambda_{0q}(t)$ is
\begin{equation}
\label{NAE-type 1}
\hat{\Lambda}_{0q}(s,t|\alpha_q,\theta^R) = \sum_{l : T_l \le t} \frac{dN_\bullet^R(s,T_l;q)}{S_q^{0R}(s,T_l|\alpha_q,\theta^R)} =\int_0^t  \frac{N_\bullet^R(s,du;q)}{S_q^{0R}(s,u|\alpha_q,\theta^R)}.
\end{equation}
\end{proposition}

\begin{proof}
This is immediate using method-of-moments type argument.
\end{proof}

The estimator in (\ref{NAE-type 1}) is reminiscent of the Aalen-Breslow-Nelson (ABN) type estimator.
When $s \rightarrow \infty$, $\hat{\Lambda}_{0q}(s,t|\alpha_q,\theta^R)$ converges to the nonparametric maximum likelihood estimator (NPMLE) $\hat{\Lambda}_{0q}(t|\alpha_q,\theta^R)$, given by
\begin{eqnarray}
\label{cumulative hazard}
\hat{\Lambda}_{0q}(t;\alpha_q,\theta^R) = \sum_{l: T_l \le t} \left[\frac{\sum_{i=1}^n \sum_{j=1}^K I\{\mathcal{E}_{iq}(S_j) = T_l\} dN_i^R(S_j;q)}{S_q^{0R}(T_l|\alpha_q,\theta^R)}\right] 
\end{eqnarray}
where
\begin{eqnarray}
S_q^{0R}(u|\alpha_q,\theta^R) & = & \sum_{i=1}^n\sum_{j=1}^{K+1} \left\{ \left[\frac{ \rho_q[N_i^R(\mathcal{E}_{iqj}^{-1}(u)-)|\alpha_q] \psi_i^R(\mathcal{E}_{iqj}^{-1}(u)|\theta^R)}{\mathcal{E}_{iq}^\prime[\mathcal{E}_{iqj}^{-1}(u)]}\right] \times \right.  \nonumber \\ && \left. 
I\{\mathcal{E}_{iq}(S_{j-1} \wedge \tau_i^*) < u \le \mathcal{E}_{iq}(S_{j} \wedge \tau_i^*)\} \right\}.   \label{sum of at-risk} 
\end{eqnarray}
Next, we obtain estimators of the $\eta(w_1,w_2)$s and $\xi(v_1,v)$s, again by first assuming that $\theta^W$ and $\theta^V$ are known.

\begin{proposition}
\label{prop-est of nu and xi}
If $(\theta^W,\theta^V)$ are known, the ML estimators of the $\eta(w_1,w_2)$s and $\xi(v_1,v)$s are the ``occurrence-exposure'' rates
\begin{eqnarray}
\hat{\eta}(w_1,w_2|\theta^W) & = &
\frac{\sum_{i=1}^n\int_0^\infty dN_i^W(s;w_1,w_2)}{\int_0^\infty S^{0W}(s;w_1|\theta^W) ds}, \forall (w_1,w_2) \in \IW; \label{eta estimator} \\
\hat{\xi}(v_1,v|\theta^V) & = & 
\frac{\sum_{i=1}^n\int_0^\infty dN_i^V(s;v_1,v)}{\int_0^\infty S^{0V}(s;v_1|\theta^V) ds}, \forall (v_1,v) \in \IV,   \label{xi estimator}
\end{eqnarray}
where
$$S^{0W}(s;w_1|\theta^W) = \sum_{i=1}^n Y_i(s) Y_i^W(s;w_1) \psi_i^W(s|\theta^W);$$
$$S^{0V}(s;v_1|\theta^V) = \sum_{i=1}^n Y_i(s) Y_i^V(s;v_1) \psi_i^V(s|\theta^V).$$
\end{proposition}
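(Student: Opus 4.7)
The plan is to exploit the likelihood factorization displayed in (\ref{full lik factor}). With $(\theta^W, \theta^V)$ held fixed, each factor $\mathcal{L}^W(\eta, \theta^W; w_1, w_2 \mid \mathbf{D})$ depends on exactly one unknown parameter, $\eta(w_1, w_2)$, and each factor $\mathcal{L}^V(\xi, \theta^V; v_1, v \mid \mathbf{D})$ on exactly one $\xi(v_1, v)$. I would therefore maximize each of these sub-likelihoods independently rather than attempt a joint maximization.

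First I would fix $(w_1, w_2) \in \IW$ and substitute $dA^W_i(s; w_1, w_2) = Y_i(s) Y_i^W(s; w_1) \eta(w_1, w_2) \psi_i^W(s-; \theta^W) ds$ into $\mathcal{L}^W$. Because $dN_i^W(s; w_1, w_2) \in \{0, 1\}$, the product integral $\prodi_{s=0}^\infty [dA_i^W(s; w_1, w_2)]^{dN_i^W(s; w_1, w_2)}$ collapses to a finite product over the observed $w_1 \to w_2$ transition times. Taking logarithms and discarding terms free of $\eta(w_1, w_2)$, the sub-log-likelihood assumes the one-parameter concave form
$$\ell(\eta(w_1, w_2)) = \left[\sum_{i=1}^n \sum_{j=1}^K dN_i^W(S_j; w_1, w_2)\right] \log \eta(w_1, w_2) - \eta(w_1, w_2) \, S^{0W}(w_1; \theta^W),$$
where the linear term comes from evaluating $\sum_i \int_0^\infty dA_i^W(s; w_1, w_2) = \eta(w_1, w_2) \, S^{0W}(w_1; \theta^W)$. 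Differentiating in $\eta(w_1, w_2)$, setting the score to zero, and verifying the second derivative is strictly negative on $\{\eta(w_1, w_2) > 0\}$ then delivers (\ref{eta estimator}).

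The same argument, applied factor by factor across $(v_1, v) \in \IV$, yields (\ref{xi estimator}). I do not anticipate any serious obstacle: the proof reduces to routine one-parameter score computations of the standard ``$a \log \eta - b \eta$'' type, so the main content is really bookkeeping the factorization and the collapse of the product integral. The only conceptual point warranting explicit mention is identifiability: although the diagonal entries $\eta(w_1, w_1) = -\sum_{w \ne w_1} \eta(w_1, w)$ and $\xi(v_1, v_1) = -\sum_{v \ne v_1} \xi(v_1, v)$ are constrained sums, they never appear in any $dA^W$ or $dA^V$ (whose indices range over $\IW$ and $\IV$ respectively), so the off-diagonal parameters are genuinely free and the separate unconstrained maximizations are legitimate.
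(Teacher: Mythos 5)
Your proposal is correct and follows exactly the route the paper intends: its proof of Proposition \ref{prop-est of nu and xi} simply states that the result ``follows immediately by maximizing the likelihood functions $\mathcal{L}^W$ and $\mathcal{L}^V$,'' and your factor-by-factor reduction to the one-parameter $a\log\eta - b\,\eta$ form, with the exponent evaluating to $\eta(w_1,w_2)\,S^{0W}(w_1;\theta^W)$, is precisely the omitted computation. Your remark on why the constrained diagonal entries do not interfere with the separate unconstrained maximizations is a worthwhile addition but does not change the argument.
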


\begin{proof}
Follows immediately by maximizing the likelihood functions $\mathcal{L}^W$ and $\mathcal{L}^V$ with respect to the $\eta(w_1,w_2)$s and $\xi(v_1,v)$s, respectively.
\end{proof}

We form the profile likelihoods for $(\{\alpha_q, q \in \IQ\}, \theta^R, \theta^W, \theta^V)$. These are the likelihoods that are obtained after plugging-in the `estimators' $\hat{\Lambda}_{0q}(\cdot;\alpha_q,\theta^R)$s, $\hat{\eta}(w_1,w_2)$s, and $\hat{\xi}(v_1,v)$s in the full likelihoods. 

\begin{proposition}
\label{profile liks}
The three profile likelihood functions $\mathcal{L}_{pl}^R$, $\mathcal{L}_{pl}^W$ and $\mathcal{L}_{pl}^V$ are given by
\begin{eqnarray*}
\lefteqn{\mathcal{L}^R_{pl}(\alpha_q,q\in\IQ,\theta^R|\mathbf{D}) = }  \\ && \prod_{q\in\IQ} \prod_{i=1}^n \prodi_{s=0}^\infty \prodi_{t=0}^\infty
\left[\frac{\rho_q[N_i^R(s-)|\alpha_q] \psi_i^R(s|\theta^R)}{S_q^{0R}(t|\alpha_q,\theta^R)}\right]^{N_i^R(ds,dt;q)};
\end{eqnarray*}
\begin{displaymath}
\mathcal{L}_{pl}^W(\theta^W|\mathbf{D}) = \prod_{w_1 \in \mathfrak{W}} \prod_{i=1}^n \prodi_{s=0}^\infty \left[\frac{\psi_i^W(s|\theta^W)}{\int_0^\infty S^{0W}(t,w_1|\theta^W) dt }\right]^{dN_i^W(s;w_1,\bullet)};
\end{displaymath}
\begin{displaymath}
\mathcal{L}_{pl}^V(\theta^V|\mathbf{D}) = \prod_{v_1 \in \mathfrak{V}_1} \prod_{i=1}^n \prodi_{s=0}^\infty \left[\frac{\psi_i^V(s|\theta^V)}{\int_0^\infty S^{0V}(t,v_1|\theta^V) dt }\right]^{dN_i^V(s;v_1,\bullet)}.
\end{displaymath}
with $$dN_i^W(s;w_1,\bullet) = \sum_{w_2 \in \mathfrak{W};\ w_2 \ne w_1} dN_i^W(s;w_1,w_2),$$ the number of transitions from state $w_1$ at time $s$ for unit $i$, and $$dN_i^V(s;v_1,\bullet) = \sum_{v \in \mathfrak{V};\ v \ne v_1} dN_i^V(s;v_1,v),$$ the number of transitions from state $v_1$ at time $s$ for unit $i$.
\end{proposition}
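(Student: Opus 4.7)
The plan is to substitute the NPMLE $\hat\Lambda_{0q}(\cdot|\alpha_q,\theta^R)$ from Proposition \ref{prop-1} and the closed-form estimators $\hat\eta(w_1,w_2|\theta^W)$, $\hat\xi(v_1,v|\theta^V)$ from Proposition \ref{prop-est of nu and xi} into the factorization (\ref{full lik factor}) and then simplify each of the three factors separately. For each factor the dependence on the plugged-in parameter will drop out of the exponential ``$\exp\{-\sum\int dA\}$'' term (leaving a constant in the remaining parameters), while the ``$\prod[dA]^{dN}$'' term will reduce to the Cox-type ratio displayed in the statement.

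For $\mathcal{L}_q^R$, I would first rewrite the event-time product using effective-age grouping:
\begin{displaymath}
\prod_{i=1}^n\prod_{j=1}^K \lambda_{0q}[\mathcal{E}_{iq}(S_j)]^{dN_i^R(S_j;q)} = \prod_{l=1}^L [d\Lambda_{0q}(T_l)]^{dN_\bullet^R(\infty,T_l;q)},
\end{displaymath}
exactly as done in the proof of Proposition \ref{prop-1}. Plugging in $d\hat\Lambda_{0q}(T_l|\alpha_q,\theta^R)=dN_\bullet^R(\infty,T_l;q)/S_q^{0R}(T_l|\alpha_q,\theta^R)$ turns this into $\prod_l [dN_\bullet^R(\infty,T_l;q)/S_q^{0R}(T_l|\alpha_q,\theta^R)]^{dN_\bullet^R(\infty,T_l;q)}$; the numerators do not depend on $(\alpha_q,\theta^R)$ and can be absorbed into a constant. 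Next, using $\int_0^\infty S_q^{0R}(u|\alpha_q,\theta^R)\,d\hat\Lambda_{0q}(u|\alpha_q,\theta^R)=\sum_l dN_\bullet^R(\infty,T_l;q)$, the exponential factor collapses to a constant as well. Finally, I would reattach the $\rho_q$ and $\psi_i^R$ factors (the $Y_i(S_j)$ factors are identically $1$ wherever $dN_i^R(S_j;q)=1$) and re-group each surviving $1/S_q^{0R}(T_l|\cdot)$ back over the event pairs $(i,j)$ that contributed to $T_l$ using the identity $dN_\bullet^R(\infty,T_l;q)=\sum_{i,j} I\{\mathcal{E}_{iq}(S_j)=T_l\}dN_i^R(S_j;q)$. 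Taking the product over $q\in\IQ$ yields the displayed $\mathcal{L}_{pl}^R$.

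For $\mathcal{L}^W$ (and analogously $\mathcal{L}^V$), the argument is shorter because $\eta(w_1,w_2)$ is finite-dimensional. Writing $\prod_{i,s}\eta(w_1,w_2)^{dN_i^W(s;w_1,w_2)}=\eta(w_1,w_2)^{N^W_\bullet(\infty;w_1,w_2)}$ and plugging in $\hat\eta(w_1,w_2|\theta^W)$ gives a factor $[1/S^{0W}(w_1;\theta^W)]^{N^W_\bullet(\infty;w_1,w_2)}$ times a constant. The exponent $\sum_i\int dA_i^W(s;w_1,w_2)$ evaluated at $\hat\eta$ becomes $\hat\eta(w_1,w_2|\theta^W)\cdot S^{0W}(w_1;\theta^W)=N^W_\bullet(\infty;w_1,w_2)$, which is a parameter-free constant. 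Re-expressing the $[1/S^{0W}(w_1;\cdot)]$ and $\psi_i^W$ factors back over the individual $(i,j)$ indices and then collapsing the $w_2$-product via $\sum_{w_2\ne w_1} dN_i^W(S_j;w_1,w_2)=dN_i^W(S_j;w_1,\bullet)$ produces the stated form; the argument for $\mathcal{L}^V$ is identical with the $\IV$-indices.

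The routine algebra aside, the only delicate step is the reindexing between the calendar-time grid $\{S_j\}$ and the effective-age grid $\{T_l\}$ in the RCR factor; getting the product $\prod_{i,j,l}\{\cdot\}^{I\{\mathcal{E}_{iq}(S_j)=T_l\}dN_i^R(S_j;q)}$ to appear cleanly requires noting that for each RCR event the indicator $I\{\mathcal{E}_{iq}(S_j)=T_l\}$ picks out a unique $l$, so the triple product is the same as the double product in the natural calendar-time indexing. This is where the bookkeeping is most error-prone, but it is otherwise a direct plug-in computation.
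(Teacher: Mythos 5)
Your proposal is correct and follows exactly the paper's approach: the paper's own proof simply says the result follows by plugging the `estimators' from Propositions \ref{prop-1} and \ref{prop-est of nu and xi} into the factored likelihood (\ref{full lik factor}) and simplifying, which is precisely what you carry out (with the useful extra detail of verifying that the exponential terms collapse to parameter-free constants and that the calendar-time/effective-age reindexing is consistent).
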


\begin{proof}
These follow immediately by plugging-in the `estimators' into the three main likelihoods in (\ref{full lik factor}) and then simplifying.
\end{proof}

These three profile likelihoods are reminiscent of the partial likelihood function in Cox's proportional hazards model \cite{cox1972,AndGil1982}. From these profile likelihoods, we could obtain estimators of the parameters $\alpha_q$s, $\theta^R$, $\theta^W$, and $\theta^V$ as follows:
\begin{eqnarray*}
& (\hat{\alpha}_q, q\in\IQ,\hat{\theta}^R)  =  {\arg\max}_{(\alpha_q, \theta^R)} \mathcal{L}^R_{pl}(\alpha_q,q\in\IQ,\theta^R|\mathbf{D}); & \\
& \hat{\theta}^W  =  \arg\max_{\theta^W} \mathcal{L}_{pl}^W(\theta^W|\mathbf{D}) \quad \mbox{and} \quad
\hat{\theta}^V  =  \arg\max_{\theta^V}  \mathcal{L}_{pl}^V(\theta^V|\mathbf{D}). &
\end{eqnarray*}
Equivalently, these estimators are maximizers of the logarithm of the profile likelihoods. These log-profile likelihoods are more conveniently expressed in terms of stochastic integrals as follows:
\begin{eqnarray*}
l_{pl}^R & = & \sum_{q\in\IQ}\sum_{i=1}^n \int_0^\infty\int_0^\infty \left[\log\rho_q[N_i^R(s-)|\alpha_q] + \log\psi_i^R(s|\theta^R) - \right. \\  && \left. \log S_q^{0R}(t;\alpha_q,\theta^R)\right] N_i^R(ds,dt;q); \\
l_{pl}^W & = & \sum_{w_1 \in \mathfrak{W}} \sum_{i=1}^n \int_0^\infty \left[\log\psi_i^W(s|\theta^W) - \log \int_0^\infty S^{0W}(t;w_1|\theta^W) dt\right] \times \\ && N_i^W(ds;w_1,\bullet); \\
\end{eqnarray*}
\begin{eqnarray*}
l_{pl}^V & = & \sum_{v_1 \in \mathfrak{V}_1} \sum_{i=1}^n \int_0^\infty \left[\log\psi_i^V(s|\theta^V) - \log \int_0^\infty S^{0V}(t;v_1|\theta^V) dt\right] \times \\ && N_i^V(ds;v_1,\bullet).
\end{eqnarray*}
Gradient-based approaches could be used to numerically approximate the maximizers of these log-profile likelihoods. For instance, associated with each of these log-profile likelihood functions are their profile score vector (the gradient or vector of partial derivatives) and profile observed information matrix (negative of the matrix of second partial derivatives): $U^R(\alpha,\theta^R)$, $U^W(\theta^W)$, and $U^V(\theta^V)$ for the score vectors; and $I^R(\alpha,\theta^R)$, $I^W(\theta^W)$, and $I^V(\theta^V)$ for the observed information matrices. The estimators could then be obtained as the solutions of the set of equations
\begin{displaymath}
U_{pl}^R(\hat{\alpha},\hat{\theta}^R) = 0; \quad U_{pl}^W(\hat{\theta}^W) = 0; \quad U_{pl}^V(\hat{\theta}^V) = 0.
\end{displaymath}
A possible gradient-based computational approach to obtaining the estimates is via the Newton-Raphson iteration procedure:
\begin{eqnarray*}
&  (\hat{\alpha},\hat{\theta}^R)  \leftarrow  (\hat{\alpha},\hat{\theta}^R) + I^R(\hat{\alpha},\hat{\theta}^R)^{-1} U^R(\hat{\alpha},\hat{\theta}^R); & \\
& \hat{\theta}^W \leftarrow  \hat{\theta}^W + I^W(\hat{\theta}^W)^{-1} U^W(\hat{\theta}^W); \
\hat{\theta}^V \leftarrow  \hat{\theta}^V + I^V(\hat{\theta}^V)^{-1} U^V(\hat{\theta}^V). &
\end{eqnarray*}
Having obtained the estimates of $\alpha$, $\theta^W$ and $\theta^V$, we plug them into $\hat{\Lambda}_{0q}(t|\alpha_q,\theta^R)$s, $\eta(w_1,w_2|\theta^W)$s and $\xi(v_1,v|\theta^V)$s to obtain the estimators $\hat{\Lambda}_{0q}(t)$s, $\hat{\eta}(w_1,w_2)$s, and $\hat{\xi}(v_1,v)$s:
\begin{equation}
\label{final estimators}
\hat{\Lambda}_{0q}(t) = \hat{\Lambda}_{0q}(t|\hat{\alpha}_q,\hat{\theta}^R);\  \hat{\eta}(w_1,w_2) = \hat{\eta}(w_1,w_2|\hat{\theta}^W);\  \hat{\xi}(v_1,v) = \hat{\xi}(v_1,v|\hat{\theta}^V).
\end{equation}
The estimators of the baseline survivor functions, $\bar{F}_{0q}$s, in the RCR component are obtained using their product-integral representations in terms of the $\Lambda_{0q}$s via, for $q \in \IQ$,
\begin{equation}
\label{estimators baseline survivor functions}
\hat{\bar{F}}_{0q}(t) = \prodi_{w=0}^t \left[1 - \hat{\Lambda}_{0q}(dw)\right].
\end{equation}
These estimators of $\bar{F}_{0q}$s can be viewed as generalized product-limit or Kaplan-Meier estimators.

\section{Asymptotic Properties of Estimators}
\label{sec: Properties}

\newcommand{\T}{\mathcal{T}}
\newcommand{\intT}{\int_{\mathcal{T}}}
\newcommand{\W}{\mathfrak{W}}
\newcommand{\V}{\mathfrak{V}}
\newcommand{\conprob}{\stackrel{p}{\rightarrow}}
\newcommand{\condist}{\stackrel{d}{\rightarrow}}

In this section we present asymptotic properties of  estimators of the model parameters when $\Lambda_{0q}(\cdot)$s are nonparametrically-specified. We use a subscript or superscript of ``0'' to indicate the true value of the parameter. Thus, the true values of the model parameters are, for $q \in \IQ, (w_1,w_2) \in \IW, (v_1,v) \in \IV$,
$$(\Lambda_{0q}^0(\cdot), \bar{F}_{0q}^0(\cdot),\alpha_{0q}, \eta_0(w_1,w_2), \xi_0(v_1,v), \theta_0^W,\theta_0^V,\theta_0^R).$$

\subsection{Parameters for the LM and HS Components}

We first present for the parameters of the LM component of the model. We shall let $\mathcal{T} = [0,s^*]$, where $s^* < \infty$ is the maximal observation time for all the units. Thus, in what follows, $\int_{\mathcal{T}} \equiv \int_0^{s^*}$. Let
$$\hat{\eta}(w_1,w_2|\theta^W) = \frac{\sum_{i=1}^n \intT N_i^W(ds;w_1,w_2)}{\intT S^{0W}(s;w_1|\theta^W) ds}$$
where
$$S^{0W}(s;w_1,|\theta^W) = \sum_{i=1}^n Y_i(s) Y_i^W(s;w_1) \psi_i^W(s|\theta^W).$$
The log-profile likelihood function is
\begin{eqnarray*}
l_{pl}^W(\theta^W) & = & \sum_{w_1\in \W} \sum_{i=1}^n \intT \left[
\log \psi_i^W(s|\theta^W) - \log \intT S^{0W}(t;w_1|\theta^W) dt \right] \times \\ && N_i^W(ds;w_1,\bullet)
\end{eqnarray*}
and the estimator of $\theta^W$ based on data observed over $\mathcal{T}$ is
$$\hat{\theta}^W = \arg\max_{\theta^W} l_{pl}^W(\theta^W).$$
The estimator of $\eta(w_1,w_2)$ for $(w_1,w_2) \in \IW$ is then
$$\hat{\eta}(w_1,w_2) = \hat{\eta}(w_1,w_2|\hat{\theta}^W); \quad \hat{\eta}(w_1,w_1) = -\sum_{w_2 \ne w_1} \hat{\eta}(w_1,w_2).$$
Recall that $\psi_i^W(s|\theta^W) = \exp\{B_i^W(s-) \theta^W\}$. Then,
\begin{eqnarray*}
 S^{1W}(s;w_1|\theta^W)  & \equiv &  \nabla_{\theta^W} S^{0W}(s;w_1|\theta^W) \\ & = & \sum_{i=1}^n B_i^W(s-) Y_i(s) Y_i^W(s;w_1) \psi_i^W(s|\theta^W);   \\
S^{2W}(s;w_1|\theta^W)  & \equiv & \nabla_{(\theta^W)^{\otimes 2}} S^{0W}(s;w_1|\theta^W) \\ & = & \sum_{i=1}^n [B_i^W(s-)]^{\otimes 2} Y_i(s) Y_i^W(s;w_1) \psi_i^W(s|\theta^W). 
\end{eqnarray*}
Let
%
$$ E^W(w_1|\theta^W) = \frac{\intT S^{1W}(t;w_1|\theta^W) dt}{\intT S^{0W}(t;w_1|\theta^W) dt}; $$
$$ V^W(w_1|\theta^W) = \frac{\intT S^{2W}(t;w_1|\theta^W) dt}{\intT S^{0W}(t;w_1|\theta^W) dt} - \left[E^W(w_1|\theta^W)\right]^{\otimes 2}. $$
%
The profile score and observed information for $\theta^W$ are
\begin{eqnarray*}
U_{pl}^W(\theta^W) & \equiv & \nabla_{\theta^W} l_{pl}^W(\theta^W) \\ & = &
\sum_{w_1 \in \W} \sum_{i=1}^n \intT \left[B_i^W(s-) - E^W(w_1|\theta^W)\right] N_i^W(ds;w_1,\bullet); \\
 I_{pl}^W(\theta^W) & \equiv & -\nabla_{(\theta^W)^{\otimes 2}} l_{pl}^W(\theta^W) =
\sum_{w_1 \in \W} V^W(w_1|\theta^W) N_\bullet^W(s^*;w_1,\bullet).
\end{eqnarray*}

We now state needed regularity conditions for the LM component needed for the asymptotic results.

\medskip

\begin{center}
\underline{LM-Component Regularity Conditions}
\end{center}

\medskip

There exists a neighborhood $\Theta_0^W \subset \Theta^W$ of $\theta_0^W$, and a scalar function, a $\dim(\theta^W) \times 1$ vector function, and a $\dim(\theta^W) \times \dim(\theta^W)$ matrix function defined on $\T \times \W \times \Theta^W$ given, respectively, by
$$s^{0W}(v;w_1|\theta^W),\quad s^{1W}(v;w_1|\theta^W), \quad \mbox{and} \quad s^{2W}(v;w_1|\theta^W),$$
which depend possibly on all the true values of the parameters, and satisfying the following conditions:
\begin{itemize}
\item[(a)]
For $k = 0, 1, 2$, and $\forall w_1 \in \W$, as $n \rightarrow \infty$,
$$\sup_{\theta^W \in \Theta_0^W; v \in \T} \left|\left|\frac{1}{n} S^{kW}(v;w_1|\theta^W) - s^{kW}(v;w_1|\theta^W)\right|\right| \conprob 0;$$

\item[(b)]
For $k = 0, 1, 2$ and $\forall w_1 \in \W$, $s^{kW}(v;w_1|\theta^W)$ is continuous in $\theta^W$, uniformly in $v$, and bounded on $\T \times \Theta_0^W$;

\item[(c)]
$\forall w_1 \in \W$, $\intT s^{0W}(v;w_1|\theta_0^W) dv > 0;$

\item[(d)]
$\forall w_1 \in \W$, $s^{0W}(v;w_1|\theta^W)$ is twice-differentiable in $\theta^W$ for $\forall v \in \T$, and
$$s^{1W}(v;w_1|\theta^W) = \nabla_{\theta^W} s^{0W}(v;w_1|\theta^W);$$
$$s^{2W}(v;w_1|\theta^W) = \nabla_{(\theta^W)^{\otimes 2}} s^{0W}(v;w_1|\theta^W);$$

\item[(e)]
$\forall w_1 \in \W$, with
%
$$ e^W(w_1|\theta^W) = \frac{\intT s^{1W}(v;w_1|\theta^W) dv}{\intT s^{0W}(v;w_1|\theta^W) dv}; $$
$$ v^W(w_1|\theta^W) =  \frac{\intT s^{2W}(v;w_1|\theta^W) dv}{\intT s^{0W}(v;w_1|\theta^W) dv} - [e^W(w_1|\theta^W)]^{\otimes 2}, $$
%
and $\eta_0(w_1,\bullet) = \sum_{w_2 \in \W; w_2 \ne w_1} \eta_0(w_1,w_2)$, the matrix
$$\Sigma^W = \sum_{w_1 \in \W} \eta_0(w_1,\bullet) v^W(w_1|\theta_0^W) \intT s^{0W}(v;w_1|\theta^0) dv$$
is positive definite.
\end{itemize}

Before stating the theorem containing the asymptotic properties of estimators of parameters in the LM component of the model, we first introduce a more compact notation. Suppose that $\{h(w_1,w_2): (w_1,w_2) \in \IW\}$ is a collection of functions, possibly vector-valued. Then, when we write $h$, this will represent the vector or matrix with the $|\IW|$ rows consisting of the $h(w_1,w_2)\trp$ corresponding to the order of the $(w_1,w_2)$ in $\IW$. Thus, for example, $\eta$ is the $|\IW|$ vector with elements $\eta(w_1,w_2)$; while $s^{0W}(\cdot|\theta^W)$ is the $|\IW|$ vector with block elements $1_{|\W|-1} \otimes  s^{0W}(\cdot;w_1|\theta_0^W)$ for $w_1 \in \W$. Similarly, $s^{1W}(\cdot|\theta^W)$ is the $|\IW| \times \dim(\theta^W)$ matrix with block elements $[s^{1W}(\cdot;w_1|\theta_0^W)]\trp \otimes 1_{|\W|-1}$ for $w_1 \in \W$. Using this notation, let
\begin{eqnarray*}
\Xi^W & = & \left[\DG(\eta_0)\right] \left[\DG\left(\intT s^{0W}(s|\theta_0^W) ds\right)\right]^{-1} \\ & = & \DG\left(
\frac{\eta_0(w_1,w_2)}{\intT s^{0W}(s;w_1|\theta_0^W) ds}: (w_1,w_2) \in \IW\right),
\end{eqnarray*}
a $(|\IW| \times |\IW|)$-dimensional matrix; and
$$e_0^W = (1_{|\W|-1} \otimes [e_0(w_1|\theta_0^W)]\trp: w_1 \in \W),$$ 
a $(|\IW| \times \dim(\theta^W))$-dimensional matrix. We also mention prior to presenting the theorems and proofs that the development of the asymptotic properties of the estimators mimics those in \cite{ander1993} and \cite{borg1984} for the LM and HS components, and those in \cite{pena2016} for the RCR component.

\begin{theorem}
\label{theo-asymptotics for LM}
Under the ``LM-Component Regularity Conditions", as $n \rightarrow \infty$,
\begin{itemize}
\item[(i)] (Consistency)
$\hat{\theta}^W \conprob \theta_0^W$ and $\forall (w_1,w_2) \in \IW: \hat{\eta}(w_1,w_2) \conprob \eta_0(w_1,w_2)$;

\item[(ii)] (Joint Asymptotic Normality)
$$\sqrt{n} \left[
\begin{array}{c}
\hat{\theta}^W - \theta_0^W \\
\hat{\eta} - \eta_0
\end{array}
\right]
\condist
N\left(0,\Gamma^W\right)$$
with
\begin{eqnarray*}
\Gamma^W = \left[
\begin{array}{cc}
\left(\Sigma^W\right)^{-1} &
-\left(\Sigma^W\right)^{-1} e_0^W \DG(\eta_0) \\
-\DG(\eta_0) (e_0^W)\trp \left(\Sigma^W\right)^{-1} &
\Xi^W + \DG(\eta_0) (e_0^W)\trp \left(\Sigma^W\right)^{-1} e_0^W \DG(\eta_0)
\end{array}
\right].
\end{eqnarray*}
\end{itemize}
\end{theorem}

\begin{proof}
For $s \in \T$, let
\begin{eqnarray*}
\lefteqn{C^W(s|\theta^W) = } \\ && \sum_{w_1 \in \W} \sum_{i=1}^n \int_0^s \left[
\log \psi_i^W(v|\theta^W) - \log \int_0^s S^{0W}(t;w_1|\theta^W) dt \right] N_i^W(dv; w_1,\bullet).
\end{eqnarray*}
Define the process $\{G^W(s|\theta^W): s \in \T\}$, using the fact that $\psi_i^W(s|\theta^W) = \exp\{B_i(s-)\theta^W\}$, via
\begin{eqnarray*}
\lefteqn{G^W(s|\theta^W)  =  \frac{1}{n} \left[ C^W(s|\theta^W) - C^W(s|\theta_0^W) \right] } \\
& = & \frac{1}{n} \sum_{w_1 \in \W} \sum_{i=1}^n \int_0^s
\left[
B_i^W(v-) (\theta^W - \theta_0^W) - \log\frac{\int_0^s S^{0W}(t;w_1|\theta^W) dt}{\int_0^s S^{0W}(t;w_1|\theta_0^W) dt}
\right] \times  \\
&& N_i^W(dv;w_1,\bullet).
\end{eqnarray*}
Let the process $\{\tilde{G}^W(s|\theta^W): s \in \T\}$ be such that
\begin{eqnarray*}
\lefteqn{\tilde{G}^W(s|\theta^W)  = } \\ &&  \frac{1}{n} \sum_{w_1 \in \W} \sum_{i=1}^n \int_0^s
\left[
B_i^W(v-) (\theta^W - \theta_0^W) - \log\frac{\int_0^s S^{0W}(t;w_1|\theta^W) dt}{\int_0^s S^{0W}(t;w_1|\theta_0^W) dt}
\right] \times \\
&& \eta_0(w_1,\bullet) Y_i(v) Y_i^W(v;w_1) \psi_i^W(v|\theta_0^W) dv.
\end{eqnarray*}
Then the process $\{G^W(s|\theta^W) - \tilde{G}^W(s|\theta_0^W): s \in \T\}$ is a zero-mean square-integrable martingale with predictable quadratic variation process  $\{\langle G^W(\cdot|\theta^W) - \tilde{G}^W(\cdot|\theta_0^W) \rangle(s): s \in \T\}$ with
\begin{eqnarray*}
\lefteqn{ \langle G^W(\cdot|\theta^W) - \tilde{G}^W(\cdot|\theta_0^W) \rangle(s)  =  
\frac{1}{n^2} \sum_{w_1 \in \W} \eta_0(w_1,\bullet) \times } \\ 
&& \sum_{i=1}^n \left[
B_i^W(v-) (\theta^W - \theta_0^W) - \log\frac{\int_0^s S^{0W}(t;w_1|\theta^W) dt}{\int_0^s S^{0W}(t;w_1|\theta_0^W) dt}
\right]^2 \times \\ && 
Y_i(v) Y_i^W(v;w_1) \psi_i^W(v|\theta_0^W) dv.
\end{eqnarray*}
Using the regularity conditions, $\langle G^W(\cdot|\theta^W) - \tilde{G}^W(\cdot|\theta_0^W) \rangle(s^*) = o_p(1)$ on $\Theta_0^W$, while $\tilde{G}^W(s|\theta^W)$ converges in probability to some function $g^W(s|\theta^W)$ on $\T \times \Theta_0^W$. It follows, using Lenglart's Inequality that $G^W(s|\theta^W)$ converges in probability to $g^W(s|\theta^W)$ on $\T \times \Theta_0^W$, with the limiting function being
\begin{eqnarray*}
g^W(s|\theta^W) & = & \sum_{w_1 \in \W} \eta_0(w_1,\bullet) 
\left\{\left(\int_0^s s^{1W}(v|\theta_0^W) dv\right) (\theta^W - \theta_0^W) - \right. \\ && \left. \left( \log
\frac{\int_0^s s^{0W}(t;w_1|\theta^W) dt}{\int_0^s s^{0W}(t;w_1|\theta_0^W) dt} \right) \int_0^s s^{0W}(v;w_1|\theta_0^W) dv\right\}.
\end{eqnarray*}
Observe that $g^W(s|\theta_0^W) = 0$ for $s \in \T$. Also,
\begin{eqnarray*}
\nabla_{\theta^W} g^W(s|\theta^W) & = & \sum_{w_1\in\W} \eta_0(w_1,\bullet) \left\{
\int_0^s s^{1W}(v|\theta_0^W) - \right. \\ && \left. \left(\frac{\int_0^s s^{1W}(t;w_1|\theta^W) dt}{\int_0^s s^{0W}(t;w_1|\theta^W) dt}\right)
\left(\int_0^s s^{0W}(v;w_1|\theta_0^W) dv\right)\right\}
\end{eqnarray*}
so that $\nabla_{\theta^W} g^W(s^*|\theta^W)|_{\theta^W = \theta_0^W} = 0$.
We also have
\begin{eqnarray*}
\nabla_{(\theta^W)^{\otimes 2}} g^W(s|\theta^W) & = & - \sum_{w_1\in\W} \eta_0(w_1,\bullet) \left\{
\frac{\int_0^s s^{2W}(t;w_1|\theta^W) dt}{\int_0^s s^{0W}(t;w_1|\theta^W) dt} - \right. \\
&& \left. \left( \frac{\int_0^s s^{2W}(t;w_1|\theta^W) dt}{\int_0^s s^{0W}(t;w_1|\theta^W) dt} \right)^{\otimes 2} \right\}
\int_0^s s^{0W}(v;w_1|\theta_0^W) dv,
\end{eqnarray*}
so that
\begin{eqnarray*}
\lefteqn{ \nabla_{(\theta^W)^{\otimes 2}} g^W(s^*|\theta^W)|_{\theta^W = \theta_0^W}  =  } \\ &&
 -\sum_{w_1 \in \W} \eta_0(w_1,\bullet) v^W(w_1|\theta_0^W)
\int_0^{s^*} s^{0W}(v;w_1|\theta_0^W) dv
\end{eqnarray*}
which is negative definite. As such, $\theta_0^w$ is a maximizer of $g^W(s^*|\theta^W)$. Since $\hat{\theta}^W$ is a maximizer of $G^W(s^*|\theta^W)$, and $G^W(s^*|\theta^W) \conprob g^W(s^*|\theta^W)$ on $\Theta_0^W$, then we can conclude that $\hat{\theta}^W \conprob \theta_0^W$.

Next, observe by Taylor expansion and using the just-established result that $\hat{\theta}^W \conprob \theta_0^W$, that
\begin{eqnarray*}
\left(\frac{1}{n} \intT S^{0W}(v;w_1|\hat{\theta}^W) dv\right)^{-1} = \left(\frac{1}{n} \intT S^{0W}(v;w_1|{\theta}_0^W) dv\right)^{-1}  + o_p(1).
\end{eqnarray*}
Therefore, $\forall (w_1,w_2) \in \IW$,
\begin{eqnarray*}
\lefteqn{\hat{\eta}(w_1,w_2)  =  \hat{\eta}(w_1,w_2|\hat{\theta}^W) = \frac{\sum_{i=1}^n \intT N_i^W(dv;w_1,w_2)}{\intT S^{0W}(v;w_1|\hat{\theta}^W) dv} } \\
& = & \left\{\frac{1}{n} \sum_{i=1}^n \intT M_i^W(dv;w_1,w_2) + \eta_0(w_1,w_2) \frac{1}{n} \intT S^{0W}(v;w_1|{\theta}_0^W) dv\right\} \times \\
& & \left\{\left(\frac{1}{n} \intT S^{0W}(v;w_1|{\theta}_0^W) dv\right)^{-1}  + o_p(1)\right\} \\
& = & \eta_0(w_1,w_2) +  \left(\frac{1}{n} \sum_{i=1}^n M_i^W(dv;w_1,w_2) \right) \left(\frac{1}{n} \intT S^{0W}(v;w_1|{\theta}_0^W) dv\right)^{-1}  \\ && + o_p(1)
\end{eqnarray*}
Since $\frac{1}{n} \sum_{i=1}^n \intT M_i^W(dv;w_1,w_2) = o_p(1)$, then $\hat{\eta}(w_1,w_2) = \eta_0(w_1,w_2) + o_p(1)$, proving the consistency of $\hat{\eta}(w_1,w_2)$ for $\eta_0(w_1,w_2)$.

Next, we prove the asymptotic normality result. Recall that $\hat{\theta}^W$ satisfies the condition that
$$0 = U_{pl}^W(\hat{\theta}^W) = \sum_{i=1}^n \sum_{(w_1,w_2)\in\IW} \intT [B_i^W(s-) - E^W(w_1|\hat{\theta}^W)] N_i^W(ds;w_1,w_2).$$
Taylor expansion yields
$$E^W(w_1|\hat{\theta}^W) = E^W(w_1|\theta_0^W) + V^W(w_1|\theta_*^W) (\hat{\theta}^W - \theta_0^W)$$
where $\theta_*^W \in [\hat{\theta}^W,\theta_0^W]$. Letting
$$H_i^W(s;w_1) = B_i^W(s-) - E^W(w_1|\theta_0^W), i=1,\ldots,n; w_1\in\W,$$
we then have the representation
\begin{eqnarray*}
\lefteqn{\sqrt{n}[\hat{\theta}^W - \theta_0^W]  = 
\left[\frac{1}{n} \sum_{w_1\in\W}  V^W(w_1|\theta_*^W) N_\bullet^W(s^*;w_1,\bullet)\right]^{-1} \times} \\ &&
\left\{\frac{1}{\sqrt{n}} \sum_{i=1}^n \sum_{(w_1,w_2) \in \IW} \intT H_i^W(s;w_1) N_i^W(ds;w_1,w_2)\right\} \\
& = & \left[\Sigma^W\right]^{-1} 
\left\{\frac{1}{\sqrt{n}} \sum_{i=1}^n \sum_{(w_1,w_2) \in \IW} \intT H_i^W(s;w_1) M_i^W(ds;w_1,w_2)\right\}  + o_p(1)
\end{eqnarray*}
since 
$$\sum_{i=1}^n \intT H_i^W(s) \eta_0(w_1,w_2) Y_i(s) Y_i^W(s) \psi_i^W(s|\theta_0^W) ds = 0;$$
$$\frac{1}{n} \sum_{w_1\in\W}  V^W(w_1|\theta_*^W) N_\bullet^W(s^*;w_1,\bullet) = \Sigma^W + o_p(1),$$
using the fact that $\theta_*^W \conprob \theta_0^W$. Also, using Taylor expansion, we have $\forall (w_1,w_2) \in \IW$, that
\begin{eqnarray*}
\lefteqn{\sqrt{n}[\hat{\eta}(w_1,w_2) - \eta_0(w_1,w_2)]  =  -\eta_0(w_1,w_2) [E^W(w_1|\theta_0^W)]\trp \sqrt{n}[\hat{\theta}^W - \theta_0^W] + } \\ &&
\left[\frac{1}{n}\intT S^{0W}(s;w_1|\theta_0^W) ds\right]^{-1} \left[\frac{1}{\sqrt{n}} \sum_{i=1}^n M_i^W(ds;w_1,w_2)\right] + o_p(1) \\
& = &  -\eta_0(w_1,w_2) [e^W(w_1|\theta_0^W)]\trp \sqrt{n}[\hat{\theta}^W - \theta_0^W] +  \\ &&
\left[\intT s^{0W}(s;w_1|\theta_0^W) ds\right]^{-1} \left[\frac{1}{\sqrt{n}} \sum_{i=1}^n M_i^W(ds;w_1,w_2)\right] + o_p(1)
\end{eqnarray*}
Define 
\begin{eqnarray*}
& Z_1^W  =  \sqrt{n} [\hat{\theta}^W - \theta_0^W]; & \\
& Z_2^W  =  \sqrt{n} [\hat{\eta} - \eta_0] + \DG(\eta_0) (e_0^W)\trp \sqrt{n} [\hat{\theta}^W - \theta_0^W]. &
\end{eqnarray*}
Note that $Z_1$ is a $\dim(\theta^W)$-dimensional vector, whereas $Z_2^W$ is a $|\IW|$-dimensional vector. From the results above, and if we let
$$H_i^W(s) = [B_i^W(s-) - E^W(w_1|\theta_0^W), (w_1,w_2) \in \IW]$$
be the $(\dim(\theta^W) \times |\IW|)$-dimensional matrix, and $I_{|\IW|}$ be the $(|\IW| \times |\IW|)$-dimensional identity matrix, and since $M_i^W(s)$ is the $(|\IW| \times 1)$-dimensional vector with elements $M_i^W(s;w_1,w_2)$, we therefore have the asymptotic representation
\begin{eqnarray*}
\left[
\begin{array}{c}
Z_1^W \\
Z_2^W
\end{array}
\right] & = &
\left[
\begin{array}{cc}
\left(\Sigma^W\right)^{-1} & 0 \\
0 & \left(\DG(\intT s^{0W}(s|\theta_0^W) ds)\right)^{-1}
\end{array}
\right]  \times \\ &&
\left[ 
\frac{1}{\sqrt{n}} \sum_{i=1}^n \intT
\left[ \begin{array}{c}
H_i^W(s) \\
I_{|\IW|}
\end{array}
\right] 
M_i(ds)\right] 
+ o_p(1).
\end{eqnarray*}
By the Martingale Central Limit Theorem,
\begin{eqnarray*}
\frac{1}{\sqrt{n}} \sum_{i=1}^n \intT
\left[ \begin{array}{c}
H_i^W(s) \\
I_{|\IW|}
\end{array}
\right] 
M_i(ds) \condist N(0,\Upsilon^W)
\end{eqnarray*}
where
\begin{eqnarray*}
\Upsilon^W & = & \plim \frac{1}{n} \sum_{i=1}^n \intT 
\left[ \begin{array}{c}
H_i^W(s) \\
I_{|\IW|}
\end{array}
\right] 
\DG(\eta_0^W) \DG(Y_i^W(s))
\left[ \begin{array}{c}
H_i^W(s) \\
I_{|\IW|}
\end{array}
\right]\trp \times \\ &&
Y_i(s) \psi_i^W(s|\theta_0^W) ds 
 =  \left[
\begin{array}{cc}
\Sigma^W & 0 \\
0 & Dg(\eta_0^W)
\end{array}
\right]
\end{eqnarray*}
where the final equality is obtained by using the facts that
\begin{eqnarray*}
\lefteqn{\sum_{(w_1,w_2) \in \IW} \eta_0(w_1,w_2) \frac{1}{n} \sum_{i=1}^n [B_i^W(s-) - E^W(w_1|\theta_0^W)]^{\otimes 2} \times} 
\\ && Y_i(s) Y_i^W(s;w_1) \psi_i^W(s|\theta_0^W) \conprob \Sigma^W; \\
&& \sum_{i=1}^n [B_i^W(s-) - E^W(w_1|\theta_0^W)] Y_i(s) Y_i^W(s;w_1) \psi_i^W(s|\theta_0^W) = 0.
\end{eqnarray*}
It follows that 
\begin{eqnarray*}
\left[
\begin{array}{c}
Z_1^W \\
Z_2^W
\end{array}
\right] \condist N\left(
\begin{array}{c}
0 \\ 0
\end{array},
\left[
\begin{array}{cc}
(\Sigma^W)^{-1} & 0  \\
0 & \Xi^W
\end{array}
\right]
\right).
\end{eqnarray*}
where $$\Xi^W = \DG(\eta_0) \left[\DG\left(\intT s^{0W}(s|\theta_0^W) ds\right)\right]^{-1}.$$
Finally, since
\begin{eqnarray*}
\sqrt{n} \left[
\begin{array}{c}
\hat{\theta}^W - \theta_0^W \\
\hat{\eta} - \eta_0
\end{array}
\right] =  \left[
\begin{array}{cc}
I & 0 \\
-\DG(\eta_0) e_0^W & I
\end{array}
\right]
\left[
\begin{array}{c}
Z_1^W \\
Z_2^W
\end{array}
\right]
\end{eqnarray*}
then result (ii) of the theorem follows since
\begin{eqnarray*}
\lefteqn{\Gamma^W  = 
\left[
\begin{array}{cc}
I & 0 \\
-\DG(\eta_0) e_0^W & I
\end{array}
\right]
\left[
\begin{array}{cc}
(\Sigma^W)^{-1} & 0  \\
0 & \Xi^W
\end{array}
\right]
\left[
\begin{array}{cc}
I & 0 \\
-\DG(\eta_0) e_0^W & I
\end{array}
\right]\trp } \\
& = & 
\left[
\begin{array}{cc}
\left(\Sigma^W\right)^{-1} &
-\left(\Sigma^W\right)^{-1} e_0^W \DG(\eta_0) \\
-\DG(\eta_0) (e_0^W)\trp \left(\Sigma^W\right)^{-1} &
\Xi^W + \DG(\eta_0) (e_0^W)\trp \left(\Sigma^W\right)^{-1} e_0^W \DG(\eta_0)
\end{array}
\right].
\end{eqnarray*}
This completes the proof of the theorem.
\end{proof}

Some remarks are in order. On the above, we did not explicitly dealt with the estimation of $\eta(w_1,w_1)$ for $w_1 \in \W$. Since $$\eta(w_1,w_1) = -\sum_{w_2 \in W; w_2 \ne w_1} \eta(w_1,w_2),$$ then a natural estimator of $\eta(w_1,w_1)$ is $$\hat{\eta}(w_1,w_1) = - \sum_{w_2 \in W; w_2 \ne w_1} \hat{\eta}(w_1,w_2).$$ Properties of this estimator will then follow straight-forwardly from the results in Theorem \ref{theo-asymptotics for LM}. In terms of consistently estimating $\Gamma$, the asymptotic covariance matrix, one option is to simply replace the limiting terms by their empirical counterparts.

Analogous results and regularity conditions hold for the parameters in the HS component: for the $\theta^V$ and the $\{\xi(v_1,v): (v_1,v) \in \IV\}$ parameters. We do not anymore present explicitly these regularity conditions and asymptotic results.

\subsection{Parameters for the RCR Component}

To simplify notation, we let the finite-dimensional parameter vector for the RCR component be denoted by
$\phi = \{(\alpha_q, q\in\IQ), \theta^R\}$
taking values in the sub-parameter space $\Phi$. The true value of the parameter will be denoted by $\phi_0 = \{(\alpha_{0q}, q\in\IQ), \theta_0^R\}$. First- and second-order differentiation operators with respect to $\phi$ will be 
$$\nabla_\phi g = \frac{\partial}{\partial\phi} g = \stackrel{\bullet}{g} \ \mbox{and} \ 
\nabla_{\phi^{\otimes 2}} g = \frac{\partial^2}{\partial\phi^{\otimes 2}} g = \frac{\partial^2}{\partial\phi\partial\phi\trp} g = \stackrel{\bullet\bullet}{g}.$$ 
The true function of the infinite-dimensional parameter $\Lambda_{0q}(\cdot)$ will be $\Lambda_{0q}^0(\cdot)$ for $q\in\IQ$. Define, for $q\in\IQ$ and $i=1,2,\ldots,n$,
$$\kappa_{iq}(s|\phi) = \rho_q[N_i^R(s-)|\alpha_q] \psi_i^R(s|\theta^R).$$
Recall that $0 \equiv S_0 < S_1 < S_2 < \ldots < S_K < S_{K+1} \equiv \infty$ are the ordered distinct calendar times in which an event of any type (e.g., in RCR, LM, or HS components) occurs. Note that the random variable $K$ is informative about all the model parameters. In addition, recall that, for $j=1,2,\ldots,K+1$, $\mathcal{E}_{iqj}(s)$ is the {\em restriction} of $\mathcal{E}_{iq}(s)$ on $s \in (S_{j-1}, S_j]$, so that, for $t \in (\mathcal{E}_{iq}(S_{j-1}),\mathcal{E}_{iq}(S_j)]$, then $\mathcal{E}_{iqj}^{-1}(t) \in (S_{j-1}, S_j]$. In (\ref{defn Sq0R}), we have also defined, for $q\in\IQ$, but now in terms of $\phi$,
\begin{eqnarray*}
\lefteqn{S_q^{0R}(s^*,w|\phi) = } \\
&& \sum_{i=1}^n \sum_{j=1}^{K+1} I\{w \in (\mathcal{E}_{iq}(S_{j-1} \wedge \tau_i \wedge s^*), \mathcal{E}_{iq}(S_{j} \wedge \tau_i \wedge s^*)]\} \frac{\kappa_{iq}[\mathcal{E}_{iqj}^{-1}(w)|\phi]}{\mathcal{E}_{iq}[\mathcal{E}_{iqj}^{-1}(w)]}.
\end{eqnarray*}
We now define empirical probability functions, for $q\in\IQ$, given $(s^*,w,\phi)$, defined over the random set 
$$\mathfrak{K} = \{(i,j): i=1,2,\ldots,n; j=1,2,\ldots,K+1\}$$
such that for each $(i,j) \in \mathfrak{K}$,
\begin{equation}
\label{q empirical probability}
\mathbb{P}_q(s^*,w|\phi)(i,j) = \frac{I\{w \in (\mathcal{E}_{iq}(S_{j-1} \wedge \tau_i \wedge s^*), \mathcal{E}_{iq}(S_{j} \wedge \tau_i \wedge s^*)]\} \frac{\kappa_{iq}[\mathcal{E}_{iqj}^{-1}(w)|\phi]}{\mathcal{E}_{iq}[\mathcal{E}_{iqj}^{-1}(w)]}}{S_q^{0R}(s^*,w|\phi)}.
\end{equation}
For $(I,J)$ taking values in $\mathfrak{K}$, we may then write $(I,J) \sim \mathbb{P}_q(s^*,w|\phi)$ to mean that $\Pr\{(I,J) = (i,j)|s^*,w,\phi\} = \mathbb{P}_q(s^*,w|\phi)(i,j)$. For a vector function $h$ defined over $\mathfrak{K}$, we then have the expectation and covariance operators with respect to $\mathbb{P}(s^*,w|\phi)$ defined, respectively, via
\begin{eqnarray*}
E_{\mathbb{P}_q(s^*,w|\phi)} [h(I,J)] & = & \sum_{i=1}^n\sum_{j=1}^{K+1} h(i,j) \mathbb{P}_q(s^*,w|\phi)(i,j); \\
V_{\mathbb{P}_q(s^*,w|\phi)} [h(I,J)] & = & E_{\mathbb{P}_q(s^*,w|\phi)} [h(I,J) - E_{\mathbb{P}_q(s^*,w|\phi)} h(I,J)]^{\otimes 2} \\
& = &
\left[E_{\mathbb{P}_q(s^*,w|\phi)} [h(I,J)]^{\otimes 2}\right] - \left[E_{\mathbb{P}_q(s^*,w|\phi)} h(I,J)\right]^{\otimes 2}.
\end{eqnarray*}

The restricted profile likelihood for the parameter $\phi$ in the RCR component of the model, restricted in the sense that we only consider events that happened when the effective ages are no more than $t^*$ and which have happened over $[0,s^*]$, is
\begin{eqnarray*}
\label{restricted profile RCR}
\mathcal{L}_{pl}^R(\phi) & \equiv & \mathcal{L}_{pl}^R(\phi|s^*,t^*) \equiv \mathcal{L}_{pl}^R(\phi|\mathbf{D}(s^*,t^*)) \\
& = & 
\prod_{q=1}^Q \prod_{i=1}^n \prodi_{s=0}^{s^*} 
\left[
\frac{\kappa_{iq}(s|\phi)}{S_q^{0R}(s^*,\mathcal{E}_{iq}(s)|\phi)}
\right]^{I\{\mathcal{E}_{iq}(s) \le t^*\} dN_i^R(s;q)}.
\end{eqnarray*}
Note that as we let $t^* \rightarrow \infty$ and $s^* \rightarrow \infty$, we obtain the profile likelihood in (\ref{profile liks}). The restricted log-profile likelihood function is then
\begin{eqnarray*}
l_{pl}^R(\phi|s^*,t^*) & = & \sum_{q=1}^Q \sum_{i=1}^n \int_0^{s^*} \left\{\log \kappa_{iq}(s|\phi) - \log S_q^{0R}(s^*,\mathcal{E}_{iq}(s)|\phi)\right\} \times \\
&&  I\{\mathcal{E}_{iq}(s) \le t^*\} N_i^R(ds;q).
\end{eqnarray*}
Under the condition that we are able to interchange the order of differentiation (with respect to $\phi$) and integration (with respect to $s$), then we obtain the restricted profile score vector function and the observed profile information matrix function given, respectively, by
\begin{eqnarray*}
\lefteqn{ U_{pl}^R(\phi|s^*,t^*)  =  \nabla_\phi l_{pl}^R(\phi|s^*,t^*) } \\
 & = & \sum_{q=1}^Q \sum_{i=1}^n \int_0^{s^*} 
\left\{
\frac{\stackrel{\bullet}{\kappa}_{iq}}{\kappa_{iq}}(s|\phi) - \frac{\stackrel{\bullet}{S}_q^{0R}}{S_q^{0R}}(s^*,\mathcal{E}_{iq}(s)|\phi)\right\}  \times \\
&& 
I\{\mathcal{E}_{iq}(s) \le t^*\} N_i^R(ds;q);
\end{eqnarray*}
and
\begin{eqnarray*}
\lefteqn{ I_{pl}^R(\phi|s^*,t^*)  =  -\nabla_{\phi^{\otimes 2}} l_{pl}^R(\phi|s^*,t^*) } \\
 & = & -\sum_{q=1}^Q \sum_{i=1}^n \int_0^{s^*} 
 \left[
\left\{
\frac{\stackrel{\bullet\bullet}{\kappa}_{iq}}{\kappa_{iq}}(s|\phi) - \left(\frac{\stackrel{\bullet}{\kappa}_{iq}}{\kappa_{iq}}(s|\phi)\right)^{\otimes 2}\right\}  - \right. \\ &&
\left.
\left\{\frac{\stackrel{\bullet\bullet}{S}_q^{0R}}{S_q^{0R}}(s^*,\mathcal{E}_{iq}(s)|\phi) - \left(\frac{\stackrel{\bullet}{S}_q^{0R}}{S_q^{0R}}(s^*,\mathcal{E}_{iq}(s)|\phi)\right)^{\otimes 2}\right\}
\right]
 \times \\
&& 
I\{\mathcal{E}_{iq}(s) \le t^*\} N_i^R(ds;q).
\end{eqnarray*}
A remark on notation: for a function $g$, $\frac{\stackrel{\bullet}{g}(s)}{g(s)} = \frac{\stackrel{\bullet}{g}}{g}(s)$. The estimator of $\phi$, denoted by $\hat{\phi} = \hat{\phi}(s^*,t^*)$, solves the equation
$$U_{pl}^R(\phi|s^*,t^*) = 0.$$
The estimate $\hat{\phi}$ could be obtained computationally via gradient-based methods; in particular, through the Newton-Raphson iteration
$$\hat{\phi} \leftarrow \hat{\phi} + \left[I_{pl}^R(\hat{\phi}|s^*,t^*)\right]^{-1} \left[U_{pl}^R(\hat{\phi}|s^*,t^*)\right].$$

Next, we seek alternative representations of $U_{pl}^R$ and $I_{pl}^R$ that will facilitate obtaining the asymptotic properties of $\hat{\phi}$. We recall that
\begin{eqnarray*}
M_i^R(ds,t;q) & = & I\{\mathcal{E}_{iq}(s) \le t^*\} M_i^R(ds;q) \\
& = & I\{\mathcal{E}_{iq}(s) \le t^*\} \{N_i^R(ds;q) - A_i^R(ds;q)\},
\end{eqnarray*}
where
$$A_i^R(ds;q) = Y_i(s) \kappa_{iq}(s|\phi_0) \lambda_{0q}^0[\mathcal{E}_{iq}(s)] ds.$$
The process $\{M_i^R(s,t;q) = \int_0^s M_i^R(dv,t;q): s \ge 0\}$ is a zero-mean square-integrable martingale with predictable quadratic variation process $$\{\langle M_i^R(\cdot,t^*;q)\rangle(s) = \int_0^s I\{\mathcal{E}_{iq}(v) \le t^*\} A_i^R(dv;q): s \ge 0\},$$ or, in differential form,
$$\langle M_i^R(\cdot,t^*;q) \rangle(ds) = I\{\mathcal{E}_{iq}(s) \le t^*\} A_i^R(ds;q).$$
We now present ``change-of-variable'' results that will simplify the establishment of some results.

\begin{proposition}
\label{prop: change-of-variable}
For $q = 1, 2, \ldots, Q$, and functions $h$, $h_i, i=1,2,\ldots,n$, we have
\begin{eqnarray}
\lefteqn{\sum_{i=1}^n \int_0^{s^*} h[\mathcal{E}_{iq}(s)] I\{\mathcal{E}_{iq}(s) \le t^*\}  A_i^R(ds;q) = } \nonumber \\
 && \int_0^{t^*} h(w) S_q^{0R}(s^*,w|\phi_0) \Lambda_{0q}^0(dw); \label{change-of-variable result 1}
\end{eqnarray}
\begin{eqnarray}
\lefteqn{\sum_{i=1}^n \int_0^{s^*} h_i(s) I\{\mathcal{E}_{iq}(s) \le t^*\}  A_i^R(ds;q) = } \nonumber \\
 && \int_0^{t^*} E_{\mathbb{P}_q(s^*,w|\phi_0)} \{h_I[\mathcal{E}_{IqJ}^{-1}(w)]\} 
 S_q^{0R}(s^*,w|\phi_0) \Lambda_{0q}^0(dw). \label{change-of-variable result 2}
\end{eqnarray}
%
%
\end{proposition}

\begin{proof}
To prove (\ref{change-of-variable result 1}), we have
\begin{eqnarray*}
\lefteqn{\sum_{i=1}^n \int_0^{s^*} h[\mathcal{E}_{iq}(s)] I\{\mathcal{E}_{iq}(s) \le t^*\}  A_i^R(ds;q) } \\
& = & \sum_{i=1}^n\sum_{j=1}^{K+1} \int_{S_{j-1}\wedge s^*}^{S_j\wedge s^*} h[\mathcal{E}_{iq}(s)] 
I\{\mathcal{E}_{iq}(s) \le t^*\} I(s \le \tau_i^*) \kappa_{iq}(s|\phi_0) \lambda_{0q}^0[\mathcal{E}_{iq}(s)] ds \\
\end{eqnarray*}
\begin{eqnarray*}
& = & \int_0^{t^*} h(w) \sum_{i=1}^n \sum_{j=1}^{K+1} I\{w \in (\mathcal{E}_{iq}(S_{j-1}\wedge s^*\wedge \tau_i),
\mathcal{E}_{iq}(S_{j}\wedge s^*\wedge \tau_i)]\} \times \\
\end{eqnarray*}
\begin{eqnarray*}
&& \frac{\kappa_{iq}[\mathcal{E}_{iqj}^{-1}(w)|\phi_0]}{\mathcal{E}_{iq}^\prime[\mathcal{E}_{iqj}^{-1}(w)]} \Lambda_{0q}^0(dw) \quad \mbox{(by letting $w = \mathcal{E}_{iq}(s)$)} \\
& = & \int_0^{t^*} h(w) S_q^{0R}(s^*,w|\phi_0) \Lambda_{0q}^0(dw).
\end{eqnarray*}
To prove (\ref{change-of-variable result 2}), 
\begin{eqnarray*}
\lefteqn{\sum_{i=1}^n \int_0^{s^*} h_i(s) I\{\mathcal{E}_{iq}(s) \le t^*\}  A_i^R(ds;q) }  \\
& = & \sum_{i=1}^n\sum_{j=1}^{K+1} \int_{S_{j-1}\wedge s^*}^{S_j\wedge s^*} h_i(s)
I\{\mathcal{E}_{iq}(s) \le t^*\} I(s \le \tau_i^*) \kappa_{iq}(s|\phi_0) \lambda_{0q}^0[\mathcal{E}_{iq}(s)] ds \\
& = & \int_0^{t^*} \sum_{i=1}^n \sum_{j=1}^{K+1} h_i[\mathcal{E}_{iqj}^{-1}(w) I\{w \in (\mathcal{E}_{iq}(S_{j-1}\wedge s^*\wedge \tau_i),
\mathcal{E}_{iq}(S_{j}\wedge s^*\wedge \tau_i)]\} \times \\
&& \frac{\kappa_{iq}[\mathcal{E}_{iqj}^{-1}(w)|\phi_0]}{\mathcal{E}_{iq}^\prime[\mathcal{E}_{iqj}^{-1}(w)]} \Lambda_{0q}^0(dw) \quad \mbox{(by letting $w = \mathcal{E}_{iq}(s)$)} \\
& = & \int_0^{t^*} \left\{\sum_{i=1}^n \sum_{j=1}^{K+1} h_i[\mathcal{E}_{iqj}^{-1}(w) \mathbb{P}_{q}(s^*,w|\phi_0)(i,j)\right\} S_{q}^{0R}(s^*,w|\phi_0) \Lambda_{0q}^0(dw). 
\end{eqnarray*}
\end{proof}

\begin{proposition}
\label{prop: martingale representation UplR}
The restricted profile score function for $\phi$ at $\phi_0$ has the martingale representation
\begin{eqnarray*}
U_{pl}^R(\phi_0|s^*,t^*) & = & \sum_{q=1}^Q \sum_{i=1}^{K+1} \int_0^{s^*}
\left\{
\frac{\stackrel{\bullet}{\kappa}_{iq}}{\kappa_{iq}}(s|\phi_0) - \frac{\stackrel{\bullet}{S}_q^{0R}}{S_q^{0R}}(s^*,\mathcal{E}_{iq}(s)|\phi_0)\right\}  \times \\
&& 
I\{\mathcal{E}_{iq}(s) \le t^*\} M_i^R(ds;q);
\end{eqnarray*}
with predictable quadratic variation
\begin{eqnarray*}
\langle U_{pl}^R(\phi_0|\cdot,t^*)\rangle(s^*) = \sum_{q=1}^Q \int_0^{t^*} V_q(s^*,w|\phi_0) S_{q}^{0R}(s^*,w|\phi_0) \Lambda_{0q}^0(dw),
\end{eqnarray*}
where
$$V_q(s^*,w|\phi) = V_{\mathbb{P}_q(s^*,w|\phi)}\left[\frac{\stackrel{\bullet}{\kappa}_{Iq}}{\kappa_{Iq}}\left(\mathcal{E}_{IqJ}^{-1}(w)|\phi\right)\right].$$
\end{proposition}

\begin{proof}
To prove the first result it suffices to show that, for each $q \in \IQ$,
\begin{eqnarray*}
\lefteqn{\sum_{i=1}^n \int_0^{s^*} \frac{\stackrel{\bullet}{\kappa}_{iq}}{\kappa_{iq}}(s|\phi_0) I\{\mathcal{E}_{iq}(s) \le t^*\} A_i^R(ds;q) = } \\
&& \sum_{i=1}^n \int_0^{s^*} \frac{\stackrel{\bullet}{S}_q^{0R}}{{S}_q^{0R}}\left(s^*,\mathcal{E}_{iq}(s)|\phi_0\right)
I\{\mathcal{E}_{iq}(s) \le t^*\} A_i^R(ds;q).
\end{eqnarray*}
Using (\ref{change-of-variable result 2}) in Proposition \ref{prop: change-of-variable}, the left-hand side equals
$$\int_0^{t^*} \left\{E_{\mathbb{P}_q(s^*,w|\phi_0)} \left[\frac{\stackrel{\bullet}{\kappa}_{Iq}}{\kappa_{Iq}}\left(\mathcal{E}_{IqJ}^{-1}(w)|\phi_0\right)\right]\right\} S_q^{0R}(s^*,w|\phi_0) \Lambda_{0q}^0(dw).$$
Using (\ref{change-of-variable result 1}) in Proposition \ref{prop: change-of-variable}, the right-hand side equals
$$\int_0^{t^*}  \left[\frac{\stackrel{\bullet}{S}_q^{0R}}{{S}_q^{0R}}\left(s^*,w|\phi_0\right)\right] {S}_q^{0R}(s^*,w|\phi_0) \Lambda_{0q}^0(dw).$$
However,
\begin{eqnarray*}
E_q^R\left(s^*,w|\phi_0\right) \equiv \frac{\stackrel{\bullet}{S}_q^{0R}}{{S}_q^{0R}}\left(s^*,w|\phi_0\right) = 
E_{\mathbb{P}_q(s^*,w|\phi_0)} \left[\frac{\stackrel{\bullet}{\kappa}_{Iq}}{\kappa_{Iq}}\left(\mathcal{E}_{IqJ}^{-1}(w)|\phi_0\right)\right],
\end{eqnarray*}
completing the proof of the first result.

To prove the second result, we have
\begin{eqnarray*}
\lefteqn{\langle U_{pl}^R(\phi_0|\cdot,t^*)\rangle(s^*) = \sum_{q=1}^Q \sum_{i=1}^n \int_0^{s^*} 
\left[\frac{\stackrel{\bullet}{\kappa}_{iq}}{\kappa_{iq}}(s|\phi_0) - \frac{\stackrel{\bullet}{S}_q^{0R}}{{S}_q^{0R}}\left(s^*,\mathcal{E}_{iq}(s)|\phi_0\right)\right]^{\otimes 2} \times } \\ && I\{\mathcal{E}_{iq}(s) \le t^*\} A_i^R(ds;q)  \\
& = & \sum_{q=1}^Q \sum_{i=1}^n \int_0^{s^*} \left[\frac{\stackrel{\bullet}{\kappa}_{iq}}{\kappa_{iq}}(s|\phi_0)\right]^{\otimes 2} I\{\mathcal{E}_{iq}(s) \le t^*\} A_i^R(ds;q) \\
&& -2\sum_{q=1}^Q \sum_{i=1}^n \int_0^{s^*} \left[\frac{\stackrel{\bullet}{\kappa}_{iq}}{\kappa_{iq}}(s|\phi_0)\right]\left[\frac{\stackrel{\bullet}{S}_q^{0R}}{{S}_q^{0R}}\left(s^*,\mathcal{E}_{iq}(s)|\phi_0\right)\right]\trp \times \\ && I\{\mathcal{E}_{iq}(s) \le t^*\} A_i^R(ds;q) \\
&& + \sum_{q=1}^Q \sum_{i=1}^n \int_0^{s^*} \left[\frac{\stackrel{\bullet}{S}_q^{0R}}{{S}_q^{0R}}\left(s^*,\mathcal{E}_{iq}(s)|\phi_0\right)\right]^{\otimes 2} I\{\mathcal{E}_{iq}(s) \le t^*\} A_i^R(ds;q).
\end{eqnarray*}
We consider each of these three terms in the last expression. Applying the techniques in proving Proposition \ref{prop: change-of-variable}, the second term becomes
\begin{eqnarray*}
\lefteqn{-2\sum_{q=1}^Q \int_0^{t^*} \left\{E_{\mathbb{P}_q(s^*,w|\phi_0)}
\left[\frac{\stackrel{\bullet}{\kappa}_{Iq}}{\kappa_{Iq}}\left(\mathcal{E}_{IqJ}^{-1}(w)|\phi_0\right)\right]\right\} } \\
&& S_q^{0R}(s^*,w|\phi_0) \left[\frac{\stackrel{\bullet}{S}_q^{0R}}{{S}_q^{0R}}\left(s^*,w|\phi_0\right)\right]\trp \Lambda_{0q}^0(dw)  \\
& = &  -2\sum_{q=1}^Q \int_0^{t^*}  \left[\frac{\stackrel{\bullet}{S}_q^{0R}}{{S}_q^{0R}}\left(s^*,w|\phi_0\right)\right]^{\otimes 2}
S_q^{0R}(s^*,w|\phi_0)   \Lambda_{0q}^0(dw),
\end{eqnarray*}
and since the third term by Proposition \ref{prop: change-of-variable} equals
$$\sum_{q=1}^Q \int_0^{t^*}  \left[\frac{\stackrel{\bullet}{S}_q^{0R}}{{S}_q^{0R}}\left(s^*,w|\phi_0\right)\right]^{\otimes 2}
S_q^{0R}(s^*,w|\phi_0)   \Lambda_{0q}^0(dw),$$
then the last two terms simplify to
$$-\sum_{q=1}^Q \int_0^{t^*}  \left[\frac{\stackrel{\bullet}{S}_q^{0R}}{{S}_q^{0R}}\left(s^*,w|\phi_0\right)\right]^{\otimes 2}
S_q^{0R}(s^*,w|\phi_0)   \Lambda_{0q}^0(dw).$$
Finally, the first term, applying Proposition \ref{prop: change-of-variable}, equals
$$\sum_{q=1}^Q \int_0^{t^*} E_{\mathbb{P}_q(s^*,w|\phi_0)} \left\{
\left[\frac{\stackrel{\bullet}{\kappa}_{Iq}}{\kappa_{Iq}}\left(\mathcal{E}_{IqJ}^{-1}(w)|\phi_0\right)\right]^{\otimes 2}\right\}
S_q^{0R}(s^*,w|\phi_0) \Lambda_{0q}^0(dw),$$
so that 
\begin{eqnarray*}
\lefteqn{\langle U_{pl}^R(\phi_0|\cdot,t^*)\rangle(s^*)   =  \sum_{q=1}^Q \int_0^{t^*}
\left\{ V_{\mathbb{P}_q(s^*,w|\phi_0)}
\left[\frac{\stackrel{\bullet}{\kappa}_{Iq}}{\kappa_{Iq}}\left(\mathcal{E}_{IqJ}^{-1}(w)|\phi_0\right)\right] \right\} \times } \\
&& S_q^{0R}(s^*,w|\phi_0) \Lambda_{0q}^0(dw) 
 =  \sum_{q=1}^Q \int_0^{t^*} V_q^R(s^*,w|\phi_0) S_q^{0R}(s^*,w|\phi_0) \Lambda_{0q}^0(dw).
\end{eqnarray*}
\end{proof}

Next we provide a result relating the observed information matrix and the predictable quadratic variation matrix.

\begin{proposition}
\label{prop: relation between IplR and PQV}
As $n \rightarrow \infty$,
$$\frac{1}{n} I_{pl}^R(\phi|s^*,t^*) = \frac{1}{n} \langle U_{pl}^R(\phi|\cdot,t^*)\rangle(s^*) + o_p(1).$$
\end{proposition}

\begin{proof}
Using Proposition \ref{prop: change-of-variable}, we have that
\begin{eqnarray*}
\lefteqn{\sum_{i=1}^n \int_0^{s^*} \left[
\frac{\stackrel{\bullet\bullet}{\kappa}_{iq}}{\kappa_{iq}}(s|\phi) - 
\frac{\stackrel{\bullet\bullet}{S}_q^{0R}}{{S}_q^{0R}}\left(s^*,\mathcal{E}_{iq}(s)|\phi\right)
\right]
I\{\mathcal{E}_{iq}(s) \le t^*\} A_i^R(ds;q) } \\
& = & \int_0^{t^*} \left\{ 
E_{\mathbb{P}_q(s^*,w|\phi)} \left[
\frac{\stackrel{\bullet\bullet}{\kappa}_{Iq}}{\kappa_{Iq}}\left(\mathcal{E}_{IqJ}^{-1}(w)|\phi\right) 
\right] 
\right\} S_q^{0R}(s^*,w|\phi_0) \Lambda_{0q}^0(dw) - \\
&& \int_0^{t^*} \left[
\frac{\stackrel{\bullet\bullet}{S}_{q}^{0R}}{{S}_{q}^{0R}}\left(s^*,w|\phi\right)
\right] S_q^{0R}(s^*,w|\phi_0) \Lambda_{0q}^0(dw).
\end{eqnarray*}
But
$$E_{\mathbb{P}_q(s^*,w|\phi)} \left[
\frac{\stackrel{\bullet\bullet}{\kappa}_{Iq}}{\kappa_{Iq}}\left(\mathcal{E}_{IqJ}^{-1}(w)|\phi\right) 
\right]  = 
\frac{\stackrel{\bullet\bullet}{S}_{q}^{0R}}{{S}_{q}^{0R}}\left(s^*,w|\phi\right)$$
so the difference in the preceding equation equals zero. Since
\begin{eqnarray*}
\lefteqn{ \frac{1}{\sqrt{n}} \sum_{q=1}^Q \sum_{i=1}^n \int_0^{s^*} 
 \left[
\left\{
\frac{\stackrel{\bullet\bullet}{\kappa}_{iq}}{\kappa_{iq}}(s|\phi) - \left(\frac{\stackrel{\bullet}{\kappa}_{iq}}{\kappa_{iq}}(s|\phi)\right)^{\otimes 2}\right\}  - \right. } \\ &&
\left.
\left\{\frac{\stackrel{\bullet\bullet}{S}_q^{0R}}{S_q^{0R}}(s^*,\mathcal{E}_{iq}(s)|\phi) - \left(\frac{\stackrel{\bullet}{S}_q^{0R}}{S_q^{0R}}(s^*,\mathcal{E}_{iq}(s)|\phi)\right)^{\otimes 2}\right\}
\right]
 \times  \\
&& 
I\{\mathcal{E}_{iq}(s) \le t^*\} M_i^R(ds;q) = O_p(1)
\end{eqnarray*}
by invoking the Martingale Central Limit Theorem, then the claimed result is proved.
\end{proof}

\begin{proposition}
\label{prop: alternative form of VqR}
If $\kappa_{iq}(s|\phi) = \exp\{Z_i(s-) \phi\}$, where $Z_i(s-)$ is $\mathfrak{F}_{s-}$-measurable, then, for each $q = 1, 2, \ldots, Q$,
\begin{eqnarray*}
V_q^R(s^*,w|\phi) & = & V_{\mathbb{P}_q(s^*,w|\phi)}\left[\frac{\stackrel{\bullet}{\kappa}_{Iq}}{\kappa_{Iq}}\left(\mathcal{E}_{IqJ}^{-1}(w)|\phi\right)\right] \\
& = & \frac{\stackrel{\bullet\bullet}{S}_q^{0R}}{S_q^{0R}}(s^*,w|\phi) -
\left[ \frac{\stackrel{\bullet}{S}_q^{0R}}{S_q^{0R}}(s^*,w|\phi) \right]^{\otimes 2}.    
\end{eqnarray*}
\end{proposition}

\begin{proof}
The result immediately follows from the fact that
$$E_{\mathbb{P}_q(s^*,w|\phi)}\left[\frac{\stackrel{\bullet\bullet}{\kappa}_{Iq}}{\kappa_{Iq}}\left(\mathcal{E}_{IqJ}^{-1}(w)|\phi\right)\right] = 
E_{\mathbb{P}_q(s^*,w|\phi)}\left[\left(\frac{\stackrel{\bullet}{\kappa}_{Iq}}{\kappa_{Iq}}\left(\mathcal{E}_{IqJ}^{-1}(w)|\phi\right)\right)^{\otimes 2}\right]$$
because, under the exponential form of $\kappa_{iq}(\cdot|\phi)$, we have
$$E_{\mathbb{P}_q(s^*,w|\phi)} \left[
\nabla_{\phi^{\otimes 2}} \log \kappa_{iq}\left(\mathcal{E}_{IqJ}^{-1}(w)|\phi\right)
\right] = 0.$$
\end{proof}

Henceforth, we shall assume that $\kappa_{iq}(s|\phi) = \exp\{Z_i(s-) \phi\}$, where $Z_i(s-)$ is a function of $N_i^R(s-)$. This linear form may exclude, however, some load-sharing models that arise in software reliability engineering (see, for instance, the load-sharing models in \cite{KvaPen2005}). For the more general model that includes such load-sharing models, the regularity conditions presented below need to be amended to impose conditions on the more general form of the $V_q(s^*,w|\phi)$s. The exponential form above for $\kappa_{iq}$s leads to regularity conditions that mimic conditions contained in previous papers such as \cite{AndGil1982} and \cite{Gill1984} dealing with the multiplicative intensity model or the Cox proportional hazards model.

\medskip

\begin{center}
\underline{RCR-Component Regularity Conditions}
\end{center}

\medskip

There exists a neighborhood $\Phi_0 \subset \Phi$ of $\phi_0$, and for each $q \in \IQ$, a scalar function, a $\dim(\phi) \times 1$-dimensional vector function, and a $\dim(\phi) \times \dim(\phi)$-dimensional matrix function, defined on $\T \times [0,t^*] \time \Phi$, and given, respectively, by
$$s_q^{0R}(s^*,w|\phi),\quad s_q^{1R}(s^*,w|\phi),\quad \mbox{and}\quad s_q^{2R}(s^*,w|\phi),$$
which, aside from its dependence on $\phi$, may depend on all the true values of the model parameters, and satisfying the following conditions:

\begin{itemize}
\item[(a)]
For $k = 0, 1, 2$, and $q \in \IQ$, as $n \rightarrow \infty$, and with $\stackrel{\bullet}{S}_q^{0R} = S_q^{1R}$ and $\stackrel{\bullet\bullet}{S}_q^{0R} = S_q^{2R}$, we have
$$\sup_{w \in [0,t^*]; \phi \in \Phi_0} \left\| \frac{1}{n} S_q^{kR}(s^*,w|\phi) - s_q^{kR}(s^*,w|\phi) \right\| \stackrel{p}{\rightarrow} 0;$$

\item[(b)]
For $k = 0, 1, 2$, and $q \in \IQ$, and for $w \in [0,t^*]$, $s_q^{kR}(s^*,w|\phi)$ are continuous in $\phi$ uniformly in $w$, and bounded on $[0,t^*] \times \Phi_0$;

\item[(c)]
For each $q \in \IQ$, $s_q^{0R}(s^*,t^*|\phi_0) > 0$ and $$\int_0^{t^*} \frac{\Lambda_{0q}^0(dw)}{s_q^{0R}(s^*,w|\phi_0)} < \infty;$$

\item[(d)]
For $q \in \IQ$, and for $w \in [0,t^*]$, each $s_q^{0R}(s^*,w|\phi)$ is twice-differentiable in $\phi$ , and with
$$s_q^{1R}(s^*,w|\phi) = \nabla_\phi s_q^{0R}(s^*,w|\phi) \ \mbox{and} \
s_q^{2R}(s^*,w|\phi) = \nabla_{\phi^{\otimes 2}} s_q^{0R}(s^*,w|\phi);$$

\item[(e)]
For $q \in \IQ$ and $w \in [0,t^*]$, with
$$e_q^R(s^*,w|\phi) = \frac{s_q^{1R}(s^*,w|\phi)}{s_q^{0R}(s^*,w|\phi)};$$
$$v_q^R(s^*,w|\phi) = \frac{s_q^{2R}(s^*,w|\phi)}{s_q^{0R}(s^*,w|\phi)} - \left[e_q^R(s^*,w|\phi)\right]^{\otimes 2},$$
then each of the matrices
$$\Sigma_q^R(s^*,t^*) = \int_0^{t^*} v_q^R(s^*,w|\phi_0) s_q^{0R}(s^*,w|\phi_0) \Lambda_{0q}^0(dw)$$
is positive definite, so that the matrix
$$\Sigma^R(s^*,t^*) = \sum_{q=1}^Q \Sigma_q^R(s^*,t^*)$$
is also positive definite.
\end{itemize}

We now present asymptotic properties of the estimator $\hat{\phi}(s^*,t^*)$.

\begin{proposition}
\label{prop: asymptotic results of phihat}
Under the ``RCR-Component Regularity Conditions,'' as $n \rightarrow \infty$,
\begin{itemize}
\item[(i)]
$\hat{\phi}(s^*,t^*) \stackrel{p}{\rightarrow} \phi_0$; 
\item[(ii)]
$\sqrt{n}[\hat{\phi}(s^*,t^*) - \phi_0] \stackrel{d}{\rightarrow} N\left(0,\left[\Sigma^R(s^*,t^*)\right]^{-1}\right)$.
\end{itemize}
\end{proposition}

\begin{proof}
The consistency result is established using analogous techniques in proving the consistency of the estimator $\hat{\theta}^W$ in the preceding subsection, which builds on those in \cite{AndGil1982}, \cite{Bor84}, \cite{ander1993}, and \cite{pena2016}. Thus, we do not anymore present its details.

To prove the asymptotic normality result, since $U_{pl}^R(\hat{\phi}(s^*,t^*)|s^*,t^*) = 0$, then by Taylor expansion
$$U_{pl}^R(\phi_0|s^*,t^*) - I_{pl}^R(\phi^\dagger|s^*,t^*) (\hat{\phi}(s^*,t^*) - \phi_0) = 0$$
where $\phi^\dagger \in [\phi_0,\hat{\phi}(s^*,t^*)]$. Thus,
$$\left[\frac{1}{n} I_{pl}^R(\phi^\dagger|s^*,t^*)\right] \left[\sqrt{n}(\hat{\phi}(s^*,t^*) - \phi_0)\right] =
\frac{1}{\sqrt{n}} U_{pl}^R(\phi_0|s^*,t^*).$$
From the regularity conditions, and since $\phi^\dagger \rightarrow \phi_0$ by virtue of $\hat{\phi}(s^*,t^*) \stackrel{p}{\rightarrow} \phi_0$ and Proposition \ref{prop: relation between IplR and PQV}, we have
$$\left[\frac{1}{n} I_{pl}^R(\phi^\dagger|s^*,t^*)\right] \stackrel{p}{\rightarrow} \Sigma^R(s^*,t^*)$$
which is assumed positive definite. Also, by the Martingale Central Limit Theorem,
$$\frac{1}{\sqrt{n}} U_{pl}^R(\phi_0|s^*,t^*) =
\frac{1}{\sqrt{n}} \sum_{q=1}^Q \sum_{i=1}^n \int_0^{s^*}
H_{iq}^R(s^*,s|\phi_0) I\{\mathcal{E}_{iq}(s) \le t^*\} M_i^R(ds;q)$$
where
$$H_{iq}^R(s^*,s|\phi) = \frac{\stackrel{\bullet}{\kappa}_{iq}}{\kappa_{iq}}(s|\phi) - 
 \frac{\stackrel{\bullet}{S}_{q}^{0R}}{{S}_{q}^{0R}}\left(s^*,\mathcal{E}_{iq}(s)|\phi\right),$$
 converges in distribution to a $N(0,\Sigma^R(s^*,t^*))$ random vector by using Proposition \ref{prop: martingale representation UplR}. As a consequence, we obtain that
 $$\sqrt{n}[\hat{\phi}(s^*,t^*) - \phi_0] \stackrel{d}{\rightarrow} N\left(0,\left[\Sigma^R(s^*,t^*)\right]^{-1}\right).$$
\end{proof}

We now establish asymptotic results for the estimators of $\Lambda_{0q}^0(\cdot), q \in \IQ$.

\begin{theorem}
\label{theo: asymptotics of LambdaHats}
Under the ``RCR-Component Regularity Conditions", as $n \rightarrow \infty$,
we have:
\begin{itemize}
\item[(i)]
For each $q \in \IQ$, $\sup_{t \in [0,t^*]} \left|\hat{\Lambda}_{0q}(s^*,t) - \Lambda_{0q}^0(t)\right| \stackrel{p}{\rightarrow} 0$;

\item[(ii)]
The $Q$-dimensional process
\begin{eqnarray*}
\left\{
\sqrt{n}
\left[
(\hat{\Lambda}_{0q}(s^*,t) - \Lambda_{0q}^0(t)), q \in \IQ
\right]:
t \in [0,t^*]
\right\}
\end{eqnarray*}
converges weakly on Skorokhod's $\mathfrak{D}^Q[0,t^*]$-space to a zero-mean $Q$-dimensional multivariate Gaussian process $$G^R = \left\{(G_q^R(s^*,t), q \in \IQ): t \in [0,t^*]\right\}$$ with covariance kernel given by, for $q_1, q_2 \in \IQ$ and $t_1, t_2 \in [0,t^*]$,
\begin{eqnarray*}
\lefteqn{ \Gamma^R_{q_1,q_2}(s^*,t_1,t_2)  \equiv Cov\left\{G_{q_1}^R(s^*,t_1),G_{q_2}^R(s^*,t_2)\right\} } \\  
& = &  \left\{I\{q_1 = q_2\} \int_0^{\min\{t_1,t_2\}} \frac{\Lambda_{0q_1}^0(dw)}{s_{q_1}^{0R}(s^*,w|\phi_0)} \right\} + \\
&&
\left\{\left[\int_0^{t_1} e_{q_1}^R(s^*,w|\phi_0) \Lambda_{0q_1}^0(dw)\right]\trp
[\Sigma^R(s^*,t^*)]^{-1} \right. \\ &&
\left.\left[\int_0^{t_2} e_{q_2}^R(s^*,w|\phi_0) \Lambda_{0q_2}^0(dw)\right]\right\}.
\end{eqnarray*}
%

\end{itemize}
\end{theorem}

\bigskip

To systematically present the proof of Theorem \ref{theo: asymptotics of LambdaHats}, we first establish some intermediate results. We start with an asymptotic representation of $\hat{\Lambda}_{0q}(s^*,t)$. Define, for $q \in \IQ$ and $t \in [0,t^*]$,
\begin{displaymath}
\Lambda_{0q}^*(s^*,t) = \int_0^t I\{S_q^{0R}(s^*,w|\phi_0)\} \Lambda_{0q}^0(dw).
\end{displaymath}
Then, we have the following representations.

\begin{lemma}
\label{lemm: representation of LambdaHat}
Under the ``RCR-Component Regularity Conditions", for each $q\in\IQ$ and $t \in [0,t^*]$, and as $n \rightarrow \infty$,
\begin{eqnarray*}
\lefteqn{\hat{\Lambda}_{0q}(s^*,t) - \Lambda_{0q}^*(s^*,t) = } \\
&& \sum_{i=1}^n \int_0^{s^*}
\frac{I\{S_q^{0R}(s^*,\mathcal{E}_{iq}(s)|\phi_0) > 0\}}{S_q^{0R}(s^*,\mathcal{E}_{iq}(s)|\phi_0)}
I\{\mathcal{E}_{iq}(s) \le t\} M_i^R(ds;q) - \\
&&  \left[
\int_0^t I\{S_q^{0R}(s^*,w|\phi_0) > 0\} E_q^R(s^*,w|\phi_0) \Lambda_{0q}^0(dw) \right]\trp
(\hat{\phi} - \phi_0) +  o_p(1). 
\end{eqnarray*}
\end{lemma}

\begin{proof}
We have that
\begin{eqnarray*}
\hat{\Lambda}_{0q}(s^*,t) & = & \int_0^{t}
\frac{I\{S_q^{0R}(s^*,w|\hat{\phi}) > 0\}}{S_q^{0R}(s^*,w|\hat{\phi})} \sum_{i=1}^n N_i^R(s^*,dw;q) \\
& = & \sum_{i=1}^n \int_0^{s^*} 
\frac{I\{S_q^{0R}(s^*,\mathcal{E}_{iq}(s)|\hat{\phi}) > 0\}}{S_q^{0R}(s^*,\mathcal{E}_{iq}(s)|\hat{\phi})}
I\{\mathcal{E}_{iq}(s) \le t\} N_i^R(ds;q).
\end{eqnarray*}
By Taylor expansion,
%
$$ \frac{1}{S_q^{0R}(s^*,\mathcal{E}_{iq}(s)|\hat{\phi})} = \frac{1}{S_q^{0R}(s^*,\mathcal{E}_{iq}(s)|\phi_0)} -
\frac{[\stackrel{\bullet}{S}_q^{0R}(s^*,\mathcal{E}_{iq}(s)|\phi^\dagger)]\trp}{[{S}_q^{0R}(s^*,\mathcal{E}_{iq}(s)|\phi^\dagger)]^2} (\hat{\phi} - \phi_0) $$
%
where $\phi^\dagger \in [\phi_0,\hat{\phi}]$, so that $\phi^\dagger \stackrel{p}{\rightarrow} \phi_0$. As such
\begin{eqnarray*}
\lefteqn{\hat{\Lambda}_{0q}(s^*,t)  =  \sum_{i=1}^n \int_0^{s^*} 
\frac{I\{S_q^{0R}(s^*,\mathcal{E}_{iq}(s)|\hat{\phi}) > 0\}}{S_q^{0R}(s^*,\mathcal{E}_{iq}(s)|\phi_0)}
I\{\mathcal{E}_{iq}(s) \le t\} N_i^R(ds;q) - } \\
&& \left\{
\sum_{i=1}^n \int_0^{s^*}
\frac{I\{S_q^{0R}(s^*,\mathcal{E}_{iq}(s)|\hat{\phi}) > 0\}}{S_q^{0R}(s^*,\mathcal{E}_{iq}(s)|\phi^\dagger)}
\left[E_q^R(s^*,\mathcal{E}_{iq}(s)|\phi^\dagger)\right]\trp \right. \\
&& \left.
I\{\mathcal{E}_{iq}(s) \le t\} N_i^R(ds;q) \right\}
(\hat{\phi} - \phi_0).
\end{eqnarray*}
Since $\hat{\phi} \stackrel{p}{\rightarrow} \phi_0$ and $\sup_{t \in [0,t^*]; \phi\in\Gamma_0}
\left|\frac{1}{n} S_q^{kR}(s^*,t|\phi) - s_q^{kR}(s^*,t|\phi)\right| \stackrel{p}{\rightarrow} \phi_0$ for $k = 0, 1$, then
\begin{eqnarray*}
\lefteqn{\hat{\Lambda}_{0q}(s^*,t)  =  \sum_{i=1}^n \int_0^{s^*} 
\frac{I\{S_q^{0R}(s^*,\mathcal{E}_{iq}(s)|\phi_0) > 0\}}{S_q^{0R}(s^*,\mathcal{E}_{iq}(s)|\phi_0)}
I\{\mathcal{E}_{iq}(s) \le t\} N_i^R(ds;q) - } \\
&& \left\{
\sum_{i=1}^n \int_0^{s^*}
\frac{I\{S_q^{0R}(s^*,\mathcal{E}_{iq}(s)|\phi_0) > 0\}}{S_q^{0R}(s^*,\mathcal{E}_{iq}(s)|\phi_0)}
\left[E_q^R(s^*,\mathcal{E}_{iq}(s)|\phi_0)\right]\trp \right. \\
&& \left.
I\{\mathcal{E}_{iq}(s) \le t\} N_i^R(ds;q) \right\}
(\hat{\phi} - \phi_0) + o_p(1).
\end{eqnarray*}
The first-term above is
\begin{eqnarray*}
\lefteqn{\hat{\Lambda}_{0q}(s^*,t)  =  \sum_{i=1}^n \int_0^{s^*} 
\frac{I\{S_q^{0R}(s^*,\mathcal{E}_{iq}(s)|\phi_0) > 0\}}{S_q^{0R}(s^*,\mathcal{E}_{iq}(s)|\phi_0)}
I\{\mathcal{E}_{iq}(s) \le t\} N_i^R(ds;q) - } \\
& = & \sum_{i=1}^n \int_0^{s^*} 
\frac{I\{S_q^{0R}(s^*,\mathcal{E}_{iq}(s)|\phi_0) > 0\}}{S_q^{0R}(s^*,\mathcal{E}_{iq}(s)|\phi_0)}
I\{\mathcal{E}_{iq}(s) \le t\} M_i^R(ds;q) + \\
& & \sum_{i=1}^n \int_0^{s^*} 
\frac{I\{S_q^{0R}(s^*,\mathcal{E}_{iq}(s)|\phi_0) > 0\}}{S_q^{0R}(s^*,\mathcal{E}_{iq}(s)|\phi_0)}
I\{\mathcal{E}_{iq}(s) \le t\} A_i^R(ds;q).
\end{eqnarray*}
But the last-term above, using Proposition \ref{prop: change-of-variable}, equals $\Lambda_{0q}^*(s^*,t)$. This establishes the lemma.
\end{proof}

\begin{lemma}
\label{lemm: for consistency}
Under the ``RCR-Component Regularity Conditions", for each $q\in\IQ$ and as $n \rightarrow \infty$,
%
$$\sup_{t \in [0,t^*]} \left| \hat{\Lambda}_{0q}(s^*,t) - \Lambda_{0q}^*(s^*,t) \right| \stackrel{p}{\rightarrow} 0
\quad \mbox{and} \quad
\sup_{t \in [0,t^*]} \left| \Lambda_{0q}^*(s^*,t) - \Lambda_{0q}^0(t) \right| \stackrel{p}{\rightarrow} 0.$$
%
\end{lemma}

\begin{proof}
The second result is immediate from the condition that
$$\sup_{t \in [0,t^*]; \phi \in \Gamma_0} \left|\frac{1}{n}S_q^{0R}(s^*,t|\phi) - s_q^{0R}(s^*,t|\phi)\right| \stackrel{p}{\rightarrow} 0.$$
To prove the first result, from the representation in Lemma \ref{lemm: representation of LambdaHat} and $\hat{\phi} \stackrel{p}{\rightarrow} \phi_0$, it suffices to show that
\begin{itemize}
\item[(i)]
$\sum_{i=1}^n \int_0^{s^*} 
\frac{I\{S_q^{0R}(s^*,\mathcal{E}_{iq}(s)|\phi_0) > 0\}}{S_q^{0R}(s^*,\mathcal{E}_{iq}(s)|\phi_0)}
I\{\mathcal{E}_{iq}(s) \le t\} M_i^R(ds;q) = o_p(1);$
\item[(ii)]
$\int_0^{t^*} I\{S_q^{0R}(s^*,w|\phi_0) > 0\} E_q^R(s^*,w|\phi_0) \Lambda_{0q}^0(dw) = O_p(1).$
\end{itemize}
But (ii) is immediate from the regularity conditions that, for $k = 0, 1$, $$\sup_{t \in [0,t^*]; \phi \in \Gamma_0} \left|\frac{1}{n} S_q^{kR}(s^*,t|\phi) - s_q^{kR}(s^*,t|\phi)\right| \stackrel{p}{\rightarrow} 0$$ and $s_q^{0R}(s^*,t^*|\phi_0) > 0$.
To prove (i), let $\epsilon > 0$ and $\delta > 0$, but arbitrary. We may invoke Lenglart's Inequality to obtain
\begin{eqnarray*}
\lefteqn{ \Pr\left\{\sup_{t \in [0,t^*]} \left|
\sum_{i=1}^n \int_0^{s^*} \frac{I\{S_q^{0R}(s^*,\mathcal{E}_{iq}(s)|\phi_0) > 0\}}{S_q^{0R}(s^*,\mathcal{E}_{iq}(s)|\phi_0)}
I\{\mathcal{E}_{iq}(s) \le t\} M_i^R(ds;q) \right| > \epsilon\right\} } \\
& = & \Pr\left\{\left|
\sum_{i=1}^n \int_0^{s^*} \frac{I\{S_q^{0R}(s^*,\mathcal{E}_{iq}(s)|\phi_0) > 0\}}{S_q^{0R}(s^*,\mathcal{E}_{iq}(s)|\phi_0)}
I\{\mathcal{E}_{iq}(s) \le t^*\} M_i^R(ds;q) \right| > \epsilon\right\} \\
& \le & \frac{\delta}{\epsilon^2} + \\
&& \Pr\left\{\frac{1}{n^2} \sum_{i=1}^n \int_0^{s^*} \frac{I\{\frac{1}{n}S_q^{0R}(s^*,\mathcal{E}_{iq}(s)|\phi_0) > 0\}}{[\frac{1}{n}S_q^{0R}(s^*,\mathcal{E}_{iq}(s)|\phi_0)]^2}
I\{\mathcal{E}_{iq}(s) \le t^*\} A_i^R(ds;q) > \delta\right\}.
\end{eqnarray*}
However, by Proposition \ref{prop: change-of-variable}, 
\begin{eqnarray*}
\lefteqn{\frac{1}{n^2} \sum_{i=1}^n \int_0^{s^*} \frac{I\{\frac{1}{n}S_q^{0R}(s^*,\mathcal{E}_{iq}(s)|\phi_0) > 0\}}{[\frac{1}{n}S_q^{0R}(s^*,\mathcal{E}_{iq}(s)|\phi_0)]^2}
I\{\mathcal{E}_{iq}(s) \le t^*\} A_i^R(ds;q) } \\
& = & \frac{1}{n^2} \int_0^{t^*} \frac{I\{\frac{1}{n}S_q^{0R}(s^*,w|\phi_0) > 0\}}{[\frac{1}{n}S_q^{0R}(s^*,w|\phi_0)]^2} S_q^{0R}(s^*,w|\phi_0) \Lambda_{0q}^0(dw) \\
& = & \frac{1}{n} \int_0^{t^*} \frac{I\{\frac{1}{n}S_q^{0R}(s^*,w|\phi_0) > 0\}}{[\frac{1}{n}S_q^{0R}(s^*,w|\phi_0)]} \Lambda_{0q}^0(dw) \\
& = & o_p(1)
\end{eqnarray*}
since $\frac{1}{n}S_q^{0R}(s^*,w|\phi_0) = O_p(1)$ because, by the regularity conditions we have
$$s_q^{0R}(s^*,t^*|\phi_0) > 0 \quad \mbox{and} \quad \int_0^{t^*} \frac{\Lambda_{0q}^0(dw)}{s_q^{0R}(s^*,w;\phi_0)} < \infty.$$ Thus, for any $\epsilon > 0$, the right-hand side in the inequality above could be made as small as desired for $n \ge N$ by first choosing an appropriate $\delta > 0$, and then an appropriate $N = N(\epsilon,\delta)$. This completes the proof of the lemma.
\end{proof}

Next, let us define, for $t \in [0,t^*]$,
\begin{eqnarray*}
\lefteqn{ Z^R(s^*,t)  =   \left[
\begin{array}{c}
Z_1^R(s^*,t) \\ Z_2^R(s^*,t)
\end{array}
\right] } \\
& = & \frac{1}{\sqrt{n}} \sum_{i=1}^n \int_0^{s^*}
\left[
\begin{array}{c}
H_{i1}^R(s^*,s|\phi_0) \\
H_{i2}^R(s^*,s|\phi_0)
\end{array}
\right]
\DG(I\{\mathcal{E}_{iq}(s) \le t\}, q \in \IQ)
M_i^R(ds)
\end{eqnarray*}
where $M_i^R(ds)  =  \left[M_i^R(ds;q), q \in \IQ\right]$ and 
\begin{eqnarray*}
H_{i1}^R(s^*,s|\phi_0) & = & \left[
\frac{\stackrel{\bullet}{\kappa}_{iq}}{\kappa_{iq}}(s|\phi_0) -
E_q^R(s^*,\mathcal{E}_{iq}(s)|\phi_0), q \in \IQ\right]\trp; \\
H_{i2}^R(s^*,s|\phi_0) & = & \DG\left\{
\frac{I\{\frac{1}{n} S_q^{0R}(s^*,\mathcal{E}_{iq}(s)|\phi_0) > 0\}}{\frac{1}{n} S_q^{0R}(s^*,\mathcal{E}_{iq}(s)|\phi_0)},
q \in \IQ\right\}.
\end{eqnarray*}

\begin{lemma}
\label{lemm: limit of Z-process}
Under the ``RCR-Component Regularity Conditions", the process $\{Z^R(s^*,t): t \in [0,t^*]\}$ converges weakly on Skorokhod's $\mathfrak{D}^{p+Q}[0,\tau^*]$-space, where $p = \dim(\phi)$, to a zero-mean multivariate Gaussian process 
$$\{Z_\infty^R(s^*,t): t \in [0,t^*]\} = \left\{\left[
\begin{array}{c}
Z_{1\infty}^R(s^*,t) \\
Z_{2\infty}^R(s^*,t)
\end{array}
\right]: t \in [0,t^*]
\right\}$$
whose covariance kernels are, for $t_1, t_2 \in [0,t^*]$,
\begin{eqnarray*}
\lefteqn{Cov\{Z_{1\infty}^R(s^*,t_1),Z_{1\infty}^R(s^*,t_2)\}  = } \\
&& \sum_{q=1}^Q \int_0^{\min(t_1,t_2)} v_q^R(s^*,w|\phi_0) s_q^{0R}(s^*,w|\phi_0)
\Lambda_{0q}^0(dw);  \\
&& Cov\{Z_{1\infty}^R(s^*,t_1),Z_{2\infty}^R(s^*,t_2)\}  =  0;  \\
&& Cov\{Z_{2\infty}^R(s^*,t_1),Z_{2\infty}^R(s^*,t_2)\}  =  \DG\left[\int_0^{\min(t_1,t_2)}
\frac{\Lambda_{0q}^0(dw)}{s_q^{0R}(s^*,w|\phi_0)}, q \in \IQ\right]. 
\end{eqnarray*}
In particular, note that $$Cov\{Z_{1\infty}^R(s^*,t^*),Z_{1\infty}^R(s^*,t^*)\} = \Sigma^R(s^*,t^*)$$ and $Z_{1\infty}^R$ and $Z_{2\infty}^R$ are independent.
\end{lemma}

\begin{proof}
We employ the Wold device. Let $c_1 \in \Re^p$ and $c_2 \in \Re^Q$ with $(c_1\trp,c_2\trp) \ne 0$. Let $t \in [0,t^*]$. Then
\begin{eqnarray*}
\lefteqn{c_1\trp Z_1^R(s^*,t) + c_2\trp Z_2^R(s^*,t) } \\
& = & \frac{1}{\sqrt{n}} \sum_{i=1}^n \int_0^{s^*}
\left[c_1\trp H_{i1}^R(s^*,s|\phi_0) + c_2\trp H_{i2}^R(s^*,s|\phi_0)\right] \times \\ &&
\DG\left[I\{\mathcal{E}_{iq}(s) \le t\}, q\in\IQ\right] \left[M_i^R(ds;q), q\in\IQ\right] \\
& = & \frac{1}{\sqrt{n}} \sum_{q=1}^Q \sum_{i=1}^n \int_0^{s^*}
\left[c_1\trp H_{iq1}^R(s^*,s|\phi_0) + c_2\trp H_{iq2}^R(s^*,s|\phi_0)\right] \times \\ &&
I\{\mathcal{E}_{iq}(s) \le t\} M_i^R(ds;q).
\end{eqnarray*}
By the Martingale Central Limit Theorem, this converges in distribution to a zero-mean normal variable with variance being the in-probability limit of
\begin{eqnarray*}
\lefteqn{\frac{1}{n}\sum_{q=1}^Q \int_0^{s^*} 
\left[c_1\trp H_{iq1}^R(s^*,s|\phi_0) + c_2\trp H_{iq2}^R(s^*,s|\phi_0)\right] ^2
I\{\mathcal{E}_{iq}(s) \le t\} A_i^R(ds;q) } \\
& = & \sum_{q=1}^Q \frac{1}{n} \sum_{i=1}^n \int_0^{s^*} \left[c_1\trp H_{iq1}^R(s^*,s|\phi_0)^{\otimes 2} c_1\right] I\{\mathcal{E}_{iq}(s) \le t\} A_i^R(ds;q) + \\
&& 2 \sum_{q=1}^Q \frac{1}{n} \sum_{i=1}^n \int_0^{s^*} \left[c_1\trp H_{iq1}^R(s^*,s|\phi_0) H_{iq2}^R(s^*,s|\phi_0)\trp c_2\right] \times \\ && I\{\mathcal{E}_{iq}(s) \le t\} A_i^R(ds;q) + \\
&& \sum_{q=1}^Q \frac{1}{n} \sum_{i=1}^n \int_0^{s^*} \left[c_2\trp H_{iq2}^R(s^*,s|\phi_0)^{\otimes 2} c_2\right] I\{\mathcal{E}_{iq}(s) \le t\} A_i^R(ds;q).
\end{eqnarray*}
Using earlier results, the first-term converges in probability to $c_1\trp \Sigma^R(s^*,t) c_1$; the second-term equals 0; and the third-term converges in probability to $$c_2\trp \DG\left[\int_0^t \frac{\Lambda_{0q}^0(dw)}{s_q^{0R}(s^*,w|\phi_0)}, q\in\IQ\right] c_2.$$
As such, for each $t \in [0,t^*]$, $$c_1\trp Z_1^R(s^*,t) + c_2\trp Z_2^R(s^*,t) \stackrel{d}{\rightarrow} c_1\trp Z_{1\infty}^R(s^*,t) + c_2\trp Z_{2\infty}^R(s^*,t)$$ for every $(c_1\trp,c_2\trp) \ne 0$.

For $t_1, t_2 \in [0,t^*]$, joint convergence in distribution to a zero-mean bivariate normal distribution with the covariance matrix specified in the statement of the lemma is established similarly and by noting that
$$I\{\mathcal{E}_{iq}(s) \le t_1\} I\{\mathcal{E}_{iq}(s) \le t_2\} = I\{\mathcal{E}_{iq}(s) \le \min(t_1,t_2)\}.$$
So the finite-dimensional distributions of $\{Z^R(s^*,t): t \in [0,t^*]\}$ converges in distribution to zero-mean multivariate normal distributions with the covariance matrix specified in the statement of the lemma.

To upgrade convergence to weak convergence on Skorokhod's $\mathfrak{D}^{p+Q}[0,t^*]$-space, tightness needs to be established. This could be done using similar arguments in proving the Master Theorem presented in \cite{pena2000}, with this theorem used in \cite{pena2001nonparametric}.
\end{proof}

Establishing the results in Theorem \ref{theo: asymptotics of LambdaHats} now immediately follows from these lemmas.

\begin{proof}[Proof of Theorem \ref{theo: asymptotics of LambdaHats}]
The consistency result in Theorem \ref{theo: asymptotics of LambdaHats} follows from Lemma \ref{lemm: for consistency}.
The weak convergence result then follows by noting that
$$\sqrt{n}[\hat{\phi} - \phi_0] =  [\Sigma^R(s^*,t^*)]^{-1} Z_1^R(s^*,t^*) + o_p(1)$$
and
\begin{eqnarray*}
\lefteqn{\sqrt{n}[\hat{\Lambda}_{0q}(s^*,t) - \Lambda_{0q}^*(s^*,t)] = Z_2^R(s^*,t) + } \\
&&  \left[
\int_0^t I\{S_q^{0R}(s^*,w|\phi_0) > 0\} E_q^R(s^*,w|\phi_0) \Lambda_{0q}^0(dw) \right]\trp \\ 
&& [\Sigma^R(s^*,t^*)]^{-1} Z_1^R(s^*,t^*)  + o_p(1)
\end{eqnarray*}
and the fact that, uniformly in $t \in [0,t^*]$, 
$$\int_0^t I\{S_q^{0R}(s^*,w|\phi_0) > 0\} E_q^R(s^*,w|\phi_0) \Lambda_{0q}^0(dw)  \stackrel{p}{\rightarrow} 
\int_0^t e_q^R(s^*,w|\phi_0) \Lambda_{0q}^0(dw).$$
\end{proof}

We point out that the $\hat{\Lambda}_{0q}, q=1,\ldots,Q,$ are asymptotically dependent, and are also not independent of $\hat{\phi}(s^*,t^*)$. The asymptotic covariance functions could be estimated consistently by substituting their empirical counterparts. Observe that the results stated above generalize those in \cite{AndGil1982} and \cite{pena2016} to more complex and general situations.

\section{Illustration of Estimation Approach on Simulated Data}
\label{sec-Illustration}

In this section we provide a numerical illustration of the estimation procedure when given a sample data. We generated a sample of $n = 100$ units from the model under a set of parameter values described below. The data generation and the estimation procedure were implemented using {\tt R} and {\tt Fortran} codes that we have developed.

\subsection{Sample Data Generation}
For the $i$th unit, the covariate values are generated according to $X_{i1}\ {\sim}\ \BER(0.5)$, $X_{i2}\ {\sim}\ N(0, 1)$, and $X_{i3} \ {\sim}\ N(0, 1)$ with $(X_{i1}, X_{i2}, X_{i3})$ mutually independent, where $\BER(p)$ is the Bernoulli distribution with success probability of $p$. The end of monitoring time is $\tau_i\ {\sim}\ \mbox{U}[5,15]$, where $\mbox{U}[a,b]$ is the uniform distribution on $[a,b]$. For the RCR component with $Q=4$, the baseline (crude) hazard rate function for risk $q \in \{1,2,3,4\}$, is a two-parameter Weibull given by 
$$\lambda_{0q}(t|\kappa_{0q},\theta_{0q})=(\kappa_{0q} \theta_{0q}) (\theta_{0q} t)^{\kappa_{0q}-1},\quad t \ge 0,$$ 
with $\kappa_{0q} \in \{2, 1.5, 1, .5\}$ and $\theta_{0q} \in \{0.11, .20, .05, .15\}$, respectively. The associated (crude) hazard function and survivor function for risk $q$ are, respectively,
$$\Lambda_{0q}(t|\kappa_{0q},\theta_{0q}) = (\theta_{0q} t)^{\kappa_{0q}}\ \mbox{and}\ \bar{F}_{0q}(t|\kappa_{0q},\theta_{0q}) = \exp\left\{-(\theta_{0q} t)^{\kappa_{0q}}\right\},\quad t \ge 0.$$
For risk $q$, the effective age process function is $$\mathcal{E}_{iq}(s) = s - S_{iqN_{iq}^R(s-)}^R,$$ the backward recurrence time for this risk. For the effects of the accumulating event occurrences, for each $q = 1, 2, 3, 4,$ we set, using the re-labeling $\alpha_{0q} = \theta_{0q}^{RR} \equiv  (\theta_{0qq_1}^{RR}, q_1 = 1,2,3,4)$,
\begin{eqnarray*}
\lefteqn{\rho_q(N^R(s-)|\alpha_{0q})  =  \exp\left\{[1 + N^R(s-)]\trp \theta_{0q}^{RR}\right\} } \\
& = & \exp\left\{\sum_{q_1=1}^Q \theta_{0qq_1}^{RR} [\log(1+N_{q_1}^R(s-))] \right\}
= \prod_{q_1 = 1}^Q \left[1 + N_{q_1}^R(s-)\right]^{\theta_{0qq_1}^{RR}}.
\end{eqnarray*}
The values for these $\theta_{0qq_1}^{RR}$ are given in the table
\begin{center}
\begin{tabular}{|c|c|c|c|c|} \hline
 $\theta_{0qq_1}^{RR}$ & \multicolumn{4}{c|}{$q_1$} \\ \hline
 $q$   &  1 & 2 & 3 & 4 \\ \hline
1 & 0.10 & 0.2 & 0.05 & 0.00 \\
2 & -0.05 & 0.3 & 0.00 & 0.05 \\
3 & 0.00 & 0.2 & -0.10 & 0.06 \\
4 & 0.00 & 0.2 & -0.10 & -0.04 \\ \hline
\end{tabular}
\end{center}

For the HS component, $\mathfrak{V} = \{1,2,3,4\}$ with state `$4$' an absorbing state, so $\gamma_0$ is a $2 \times 1$ vector. For the LM component, $\mathfrak{W} = \{1, 2, 3, 4\}$, so $\kappa_0$ is a $3 \times 1$ vector. 
The infinitesimal generator matrices $\eta_0$ for the LM process and $\xi_0$ for the HS process are, respectively,
\begin{eqnarray*}
\eta_0=\begin{bmatrix}
\eta_{0}(w_1w_2) & 1 & 2 & 3 & 4 \\ 
1 &  -0.80 & 0.5 & 0.25 & 0.05 \\
2 & 0.25 & -0.6 & 0.25 & 0.10 \\
3 & 0.40 & 0.1 & -1.00 & 0.50 \\
4  & 0.20 & 0.1 & 0.60 & -0.90
\end{bmatrix}
\end{eqnarray*}
and
 \begin{eqnarray*}
\xi_0=\begin{bmatrix}
  \xi_0(v_1,v_2) &  1 & 2 & 3 &4 \\
1 & -1.0 & 0.5 & 0.46 & 0.04 \\
2 & 0.6  & -0.9 & 0.25 & 0.05 \\
3 & 0.5 & 0.4 & -1.00 & 0.10 \\
4 & 0.0 & 0.0 & 0.00 & 0.00
\end{bmatrix}.
\end{eqnarray*}
The values in the fourth row for the $\xi_0$-matrix are all zeros because state $4$ in HS process is absorbing. The true values of the remaining model parameters are: 
$$\theta_0^{RL} = (-1,  0,  1)\trp;  \theta_0^{RH} = (1.0, 0.5)\trp; \theta_0^{RX} = (0.5, -0.3, -0.1)\trp;$$
$$\theta_0^{LR} = (-0.5, -0.4,  0.7,  0.3)\trp; \theta_0^{LH} = (1, -1)\trp;  \theta_0^{LX} = (0.2, -0.4,  0.1)\trp;$$
$$\theta_0^{HR} = (-0.2,  0.5, -0.3,  1.0)\trp; \theta_0^{HL} = (0.5,  0.2, -1.0)\trp; \theta_0^{HX} = (-0.5,  0.4, -0.1)\trp.$$

In generating the simulated data for each unit in the sample, we use the fact that between two successive event times, where the event could be of any type (i.e., RCR-, LM-, or HS-type), the values of the internal and external covariates do not change. As such, if we have reached event time $S_{k}$, to generate the {\em potentially} next event time, $S_{k+1}$, for the $q$th risk in the RCR, we generate the {\em residual} time $T_q^R$ associated with a conditional Weibull distribution, given that the lifetime exceeded $\mathcal{E}_q(S_k)$; whereas, in the LM and HS components, we generate times $T^L$ and $T^H$ from exponential distributions. See below for specific details in the generation of $T_q^R$, $T^L$, and $T^H$. The potential next event time $S_{k+1}$ becomes $S_k + \min\{T_q^R, q=1:Q; T^L, T^H\}$. However, if this value exceeds $\tau$, then right-censoring occurs and we set $S_{k+1} = \tau$ and data generation for the unit ceases.

Just after $S_k$, for $q = 1:Q$, the {\em residual time} $T_q^R$ is generated from a conditional Weibull distribution with shape and scale parameters 
$$\left(\kappa_{0q},\theta_{0q}(S_k) = \theta_{0q} \exp\left\{\frac{\phi^R(S_k)}{\kappa_{0q}}\right\}\right),$$ 
given that the Weibull random variable is {\em at least} $\mathcal{E}_q(S_k)$. Here, 
$$\mathcal{E}_q(s) = s - S_{qN^R(s-;q)}^R;$$
$$\phi^R(S_k) \equiv [\log(1+N^R(S_k))] \theta_{0q}^{RR} + \iota_{\mathfrak{W}}(W(S_k)) \theta_{0}^{RL} + \iota_{\mathfrak{V}}(V(S_k)) \theta_0^{RH} + x \theta_0^{RX}.$$
The realization of $T_q^R$ could be obtained through the formula below, which is derived using the Probability Integral Transformation.
\begin{equation}
\label{general weibull residual time}
T_q^R = \left\{\frac{1}{\theta_{0q}} \left[ \left(\theta_{0q} \mathcal{E}_q(S_k) \right)^{\kappa_{0q}} -
\frac{\log(1 - U_q^R)}{\exp(\phi^R(S_k))} \right]^{1/\kappa_{0q}}\right\} -  \mathcal{E}_q(S_k).
\end{equation}
where $U_q^R, q=1:Q,$ are independent standard uniform random variables. This formula is a special case of the more general result:

\begin{proposition}
Let $Z$ be a nonnegative-valued continuous random variable with  hazard rate function $\lambda_0(z) \phi$, where $\phi$ is a positive constant and $\Lambda_0(z) = \int_0^z \lambda_0(v) dv$ has inverse $\Lambda_0^{-1}(y)$. Let $U$ be a standard uniform random variable. Then, for fixed $s \ge 0$, the random variable
$$T = \Lambda_0^{-1}\left\{\Lambda_0(s) - \frac{\log(1-U)}{\phi}\right\} - s$$
has survivor function $t \mapsto \Pr\{Z > s + t | Z > s\}$ for $t \ge 0$.
\end{proposition}

Just after $S_k$, $T^L$ is generated from an exponential distribution with scale parameter $-\eta[W(S_k),W(S_k)] \phi^L(S_k),$ where
$$\phi^L(S_k) \equiv  [\log(1+N^R(S_k))] \theta_{0q}^{LR} +  \iota_{\mathfrak{V}}(V(S_k)) \theta_0^{LH} + x \theta_0^{LX};$$
while $T^H$ is generated from an exponential distribution with scale parameter $-\xi[V(S_k),V(S_k)] \phi^H(S_k),$ where
$$\phi^H(S_k) \equiv  [\log(1+N^R(S_k))] \theta_{0q}^{HR} +  \iota_{\mathfrak{W}}(W(S_k)) \theta_0^{HL} + x \theta_0^{HX}.$$

For our illustrative sample data, we generated one with $n=100$.  A randomly chosen portion (30\% of the 100 units) of this generated sample data is depicted in Figure \ref{sample data picture} on page \pageref{sample data picture} (note that we did not plot all 100 units since the plot would otherwise become too busy). Basic summary statistics for this sample data are as follows. The mean of the observation windows was $\bar{\tau} = 9.90$ with standard deviation of $\mbox{sd}(\tau) =  2.63$. Note, however, that the {\em effective} observation windows will be shorter due to absorptions in the HS processes. The observed number of event occurrences for each of the four risks are, respectively, $(68, 187, 74, 486)$. The number of transitions in the LM and HS components are summarized in the two tables in Table \ref{table-LM and HS transitions}. Note that state 4 in the HS-process is absorbing, hence zero transitions from this state.

\begin{table}[h]
\caption{Observed transitions in the LM and the HS processes for the illustrative sample data.}
\label{table-LM and HS transitions}
\begin{minipage}[t]{.45\textwidth}
\begin{center}
\begin{tabular}{||c||c|c|c|c||} \hline\hline
LM & 1 & 2 & 3 & 4 \\ \hline\hline
1 & - & 130 & 62 & 10 \\
2 & 78 & - & 73 & 19 \\
3 & 103 & 27 & - & 143 \\
4 & 25 & 23 & 136 & -  \\ \hline\hline
\end{tabular}
\end{center}
\end{minipage}
\begin{minipage}[t]{.45\textwidth}
\begin{center}
\begin{tabular}{||c||c|c|c|c||} \hline\hline
HS & 1 & 2 & 3 & 4 \\ \hline\hline
1 & - & 303 & 263 & 19 \\
2 & 332 & - & 136 & 22 \\
3 & 223 & 162 & - & 39 \\
4 & 0 & 0 & 0 & -  \\ \hline\hline
\end{tabular}
\end{center}
\end{minipage}
\end{table}

\subsection{Parameter Estimates for Illustrative Sample Data}

For this illustrative sample data, we implemented the semi-parametric estimation procedure described in the preceding sections using {\tt R} and {\tt Fortran} programs that we developed. {\tt Fortran} subroutines were called by the {\tt R} programs to speed-up the computations of the estimates, especially for the RCR component. The maximizations of the partial likelihoods were through Newton-Raphson iteration. Summary of the estimates, together with their estimated standard errors (both uncorrected and corrected for the $\eta$ and $\xi$ parameters) are contained in columns labeled `For Illustrative Data' in Tables \ref{table-estimates RCR sample}, \ref{table-estimates LM sample theta}, \ref{table-estimates LM sample eta}, \ref{table-estimates HS sample theta}, and \ref{table-estimates HS sample xi}. The second column in each of these tables contains the true parameter values.
For the estimates of the infinitesimal generators $\eta$s and $\xi$s, as well as the survivor functions estimates, the corrections to the standard errors owing to the estimation of the regression coefficients (the $\theta$s) appear not negligible.  Note that each of the true parameter values of the finite-dimensional parameters falls in their respective $[\mbox{Estimate} \pm 2(\mbox{Est. SE})]$ confidence interval.

\begin{table}
\caption{Results of estimation of the $\theta$-parameters in the RCR component for the illustrative data and in the simulation study. Mean and standard deviation of the estimates in the simulation study are reported. The simulation study had {100} replications.}
\label{table-estimates RCR sample}
\begin{center}
\begin{tabular}{|c|c||c|c||c|c||c|c||} \hline\hline
&& \multicolumn{2}{c||}{For Illustrative Data} & \multicolumn{4}{c||}{In Simulation Study} \\
\cline{3-8} 
& True & \multicolumn{2}{c||}{$n=100$} & \multicolumn{2}{c||}{$n=50$} & \multicolumn{2}{c||}{$n=100$} \\
\cline{3-8}
Parameter & Value & EST &  ESE & Mean & SD & Mean & SD \\ \hline\hline
$\theta_{11}^{RR}$  & 0.10 & 0.491     &     0.423 & -0.097   &        0.681& 0.162 & 0.408  \\ 
$\theta_{12}^{RR}$      &0.20 &         0.376     &     0.293 & 0.306         &       0.424 & 0.213 & 0.317  \\ 
$\theta_{13}^{RR}$    &0.05 &          -0.019    &      0.289 & 0.038      &        0.538& 0.021 & 0.323 \\	
$\theta_{14}^{RR} $  &0.00 &           -0.175     &     0.160	& 0.008      &         0.255& -0.014 & 0.180 \\
$\theta_{21}^{RR} $ &-0.05&             0.068     &     0.220	& 0.032        &          0.366& -0.023 & 0.210 \\
$\theta_{22}^{RR} $ &0.30&              0.264   &       0.178	&  0.303       &        0.282& 0.255 & 0.168 \\
$\theta_{23}^{RR} $ &0.00&              0.060     &     0.181	& -0.049      &         0.293& 0.011 & 0.205 \\
$\theta_{24}^{RR} $ &0.05&             -0.063    &      0.097	& 0.048        &       0.144& 0.064 & 0.097 \\
$\theta_{31}^{RR} $ &0.00&             -0.641    &      0.411	& 0.015       &      0.621&  -0.064 & 0.376 \\
$\theta_{32}^{RR} $ &0.20&             0.634    &      0.290	& 0.182       &       0.456& 0.234 & 0.280 \\
$\theta_{33}^{RR}$ &-0.10&             -0.398    &      0.366	& -0.251         &      0.624& -0.197 & 0.382 \\
$\theta_{34}^{RR}$ &0.06&             -0.140    &      0.152	& 0.096           &       0.229& 0.098 & 0.163 \\
$\theta_{41}^{RR} $ &0.00&            -0.006     &     0.156	& 0.033          &       0.217& 0.055 & 0.149 \\
$\theta_{42}^{RR} $ &0.20&             0.045     &     0.118	& 0.260      &       0.155& 0.237 & 0.111 \\
$\theta_{43}^{RR} $ &-0.10&             0.181    &      0.127	& -0.096    &       0.189& -0.119 & 0.136 \\
$\theta_{44}^{RR}$ &-0.04&            -0.071   &       0.072	& 0.023      &     0.100& 0.022 &  0.065 \\ \hline
$\theta_{2}^{RL} $ &-1.00&            -0.925   &       0.149	& -1.054      &     0.221& -1.033 & 0.160 \\
$\theta_{3}^{RL} $ &0.00&             0.048    &      0.112	& -0.034     &     0.155& -0.017 & 0.119 \\
$\theta_{4}^{RL} $ &1.00&            1.016     &     0.102	& 1.015      &      0.135& 0.999 & 0.094 \\ \hline
$\theta_{2}^{RH} $ &1.00&             0.959   &       0.096	& 1.006     &     0.140& 0.999 & 0.093\\
$\theta_{3}^{RH} $ &0.50&             0.538   &       0.104	 & 0.480     &      0.152& 0.510 & 0.105 \\ \hline
$\theta_{1}^{RX} $ 	&0.50&            0.497   &       0.089 & 0.503         &     0.113& 0.485 & 0.084 \\
$\theta_{2}^{RX} $ &-0.30&            -0.338   &       0.047	& -0.297     &       0.050& -0.292 & 0.041 \\
$\theta_{3}^{RX}$ &-0.10&             -0.158    &      0.043	& -0.083     &      0.053& -0.099 & 0.041 \\
\hline\hline
\end{tabular}
\end{center}
\end{table}

\begin{table}
\caption{Results of estimation of the $\theta$-parameters in the LM component for the illustrative data and in the simulation study. Mean and standard deviation of the estimates in the simulation study are reported. The simulation study had {100} replications.}
\label{table-estimates LM sample theta}
\begin{center}
\begin{tabular}{|c|c||c|c||c|c||c|c||} \hline\hline
&& \multicolumn{2}{c||}{For Illustrative Data} & \multicolumn{4}{c||}{In Simulation Study} \\
\cline{3-8} 
& True & \multicolumn{2}{c||}{$n=100$} & \multicolumn{2}{c||}{$n=50$} & \multicolumn{2}{c||}{$n=100$} \\
\cline{3-8}
Parameter & Value & EST &  ESE & Mean & SD & Mean & SD \\ \hline\hline
$\theta_{1}^{LR}$ & -0.50&   -0.605  &     0.123	& -0.476 & 0.195 & -0.498 & 0.118 \\
$\theta_{2}^{LR}$  & -0.40& -0.384   &    0.082	& -0.419 & 0.132 & -0.394 & 0.082 \\
$\theta_{3}^{LR}$   & 0.70& 0.627    &   0.094	& 0.694 & 0.139 & 0.699 & 0.088 \\
$\theta_{4}^{LR}$  & 0.30&  0.371   &    0.049	& 0.298 & 0.077 & 0.297 & 0.057 \\ \hline
$\theta_{2}^{LH}$  & 1.00&   0.954  &     0.079	& 1.011 & 0.112 & 1.002 & 0.085 \\
$\theta_{3}^{LH}$  & -1.00& -1.012   &    0.142	& -1.012 & 0.194 & -1.014 & 0.151 \\ \hline
$\theta_{1}^{LX}$  &	0.20&  0.194    &   0.085 & 0.211 & 0.129 & 0.191 & 0.091 \\
$\theta_{2}^{LX}$  &	-0.40& -0.342   &    0.048 & -0.400 & 0.068 & -0.400 & 0.041 \\
$\theta_{3}^{LX}$   &	0.10&  0.157   &    0.043  & 0.099 & 0.055 & 0.102 & 0.035 \\ \hline\hline
\end{tabular}
\end{center}
\end{table}

\begin{table}
\caption{Results of estimation of the $\eta$-parameter in the LM component for the illustrative data and in the simulation study. Mean and standard deviation of the estimates in the simulation study are reported. The simulation study had {100} replications. {\bf Legend:} EUSE = estimate of the uncorrected standard error; ECSE = estimate of the corrected standard error.}
\label{table-estimates LM sample eta}
\begin{center}
\begin{tabular}{|c|c||c|c|c||c|c||c|c||} \hline\hline
&& \multicolumn{3}{c||}{For Illustrative Data} & \multicolumn{4}{c||}{In Simulation Study} \\
\cline{3-9} 
& True & \multicolumn{3}{c||}{$n=100$} & \multicolumn{2}{c||}{$n=50$} & \multicolumn{2}{c||}{$n=100$} \\
\cline{3-9}
Parameter & Value & EST & EUSE & ECSE & Mean & SD & Mean & SD \\ \hline\hline
$\eta_{12}$ & 0.50&          0.523    &      0.046    &    0.064 & 0.500 & 0.085 & 0.503 & 0.062	 \\
$\eta_{13}$   & 	0.25 &      0.250     &     0.032    &    0.038 & 0.253 & 0.047 & 0.255 & 0.043  \\
$\eta_{14}$  &	0.05   &   0.040    &     0.013    &    0.013 & 0.051 & 0.024 & 0.049 & 0.015 \\
$\eta_{21}$  & 0.25 &      0.248     &     0.028   &     0.035 & 0.248 & 0.053 & 0.255 &	0.039 \\
$\eta_{23}$    &	0.25&      0.232     &     0.027   &     0.033 & 0.248 & 0.052 & 0.249 & 0.037  \\
$\eta_{24}$    &	0.10 &     0.060     &     0.014    &    0.015 & 0.100 & 0.028 & 0.100 & 0.022  \\
$\eta_{31}$     & 0.40 &    0.361    &      0.036   &    0.047 & 0.402 & 0.082 & 0.401 & 0.044	 \\
$\eta_{32}$    & 0.10&      0.095   &       0.018    &    0.020 & 0.105 & 0.035 & 0.101 & 0.023	 \\
$\eta_{34}$     & 0.50 &    0.501   &       0.042    &    0.060 & 0.506 & 0.100 & 0.508 & 0.061	 \\
$\eta_{41}$    &	0.20&     0.126    &      0.025    &    0.028 & 0.208 & 0.062 & 0.199 & 0.036  \\
$\eta_{42}$    &	0.10&     0.116    &      0.024    &    0.026 & 0.100 & 0.040 & 0.097 & 0.025  \\
$\eta_{43}$   &	0.60&      0.686   &       0.059    &    0.085 & 0.611 & 0.117 & 0.599 & 0.075 \\ \hline\hline
\end{tabular}
\end{center}
\end{table}

\begin{table}
\caption{Results of estimation of the $\theta$-parameters in the HS component for the illustrative data and in the simulation study. Mean and standard deviation of the estimates in the simulation study are reported. The simulation study had {100} replications.}
\label{table-estimates HS sample theta}
\begin{center}
\begin{tabular}{|c|c||c|c||c|c||c|c||} \hline\hline
&& \multicolumn{2}{c||}{For Illustrative Data} & \multicolumn{4}{c||}{In Simulation Study} \\
\cline{3-8} 
& True & \multicolumn{2}{c||}{$n=100$} & \multicolumn{2}{c||}{$n=50$} & \multicolumn{2}{c||}{$n=100$} \\
\cline{3-8}
Parameter & Value & EST &  ESE & Mean & SD & Mean & SD \\ \hline\hline
$\theta_1^{HR}$ & -0.20&    -0.157   &    0.084 & -0.207 & 0.128 & -0.200 & 0.084	 \\
$\theta_2^{HR}$  &	0.50 &  0.477  &     0.061 & 0.498 & 0.085 & 0.511 & 0.058  \\
$\theta_3^{HR}$  &	-0.30&  -0.411  &     0.080 & -0.301 & 0.102 & -0.312 & 0.072  \\
$\theta_4^{HR}$  &	1.00 &  1.014  &     0.041 & 1.005 & 0.066 & 1.002 & 0.037  \\ \hline
$\theta_2^{HL}$ &	0.50&    0.380  &     0.066 & 0.503 & 0.121 & 0.498 & 0.072  \\
$\theta_3^{HL}$  &	0.20&   0.177   &    0.073 & 0.211 & 0.117 & 0.197 &  0.077 \\
$\theta_4^{HL}$   &	-1.00& -1.039   &    0.110 & -0.985 & 0.153 & -1.008 & 0.107  \\
$\theta_1^{HX}$    &	-0.50& -0.451   &    0.064 & -0.499 & 0.099 & -0.498 & 0.063 \\
$\theta_2^{HX}$   &	0.40&  0.419   &    0.037 & 0.393 & 0.049 & 0.404 & 0.033 \\
$\theta_3^{HX}$  & -0.10& -0.137  &     0.028 & -0.108 & 0.043 &-0.104 & 0.026 \\ \hline\hline
\end{tabular}
\end{center}
\end{table}

\begin{table}
\caption{Results of estimation of the $\eta$-parameter in the LM component for the illustrative data and in the simulation study. Mean and standard deviation of the estimates in the simulation study are reported. The simulation study had {100} replications. {\bf Legend:} EUSE = estimate of the uncorrected standard error; ECSE = estimate of the corrected standard error.}
\label{table-estimates HS sample xi}
\begin{center}
\begin{tabular}{|c|c||c|c|c||c|c||c|c||} \hline\hline
&& \multicolumn{3}{c||}{For Illustrative Data} & \multicolumn{4}{c||}{In Simulation Study} \\
\cline{3-9} 
& True & \multicolumn{3}{c||}{$n=100$} & \multicolumn{2}{c||}{$n=50$} & \multicolumn{2}{c||}{$n=100$} \\
\cline{3-9}
Parameter & Value & EST & EUSE & ECSE & Mean & SD & Mean & SD \\ \hline\hline
$\xi_{12}$ &	0.50&        0.574     &     0.033   &     0.053  & 0.504  & 0.080 & 0.500 & 0.044 \\
$\xi_{13}$  &	0.46   &    0.498     &     0.031   &     0.047 & 0.451 & 0.068 & 0.457 & 0.041 \\
$\xi_{14}$   &	0.04  &    0.036      &    0.008    &    0.009 & 0.042 & 0.013 & 0.041 & 0.010 \\
$\xi_{21}$   &	0.60&      0.659    &      0.036     &   0.060 & 0.599 & 0.088 & 0.600 & 0.056 \\
$\xi_{23}$     &	0.25&    0.270     &     0.023   &     0.030 & 0.253 & 0.042 & 0.249 & 0.028 \\
$\xi_{24}$    &	0.05&     0.044    &      0.009   &     0.010 & 0.050 & 0.017 & 0.053 & 0.012 \\
$\xi_{31}$   &	0.50&      0.574    &      0.038  &      0.057 & 0.500 & 0.079 & 0.503 & 0.050 \\
$\xi_{32}$      &	0.40&   0.417   &       0.033   &     0.045  & 0.400 & 0.067 & 0.392 & 0.041 \\
$\xi_{34}$   &	0.10&      0.100    &      0.016   &     0.018  & 0.104 & 0.026 & 0.101 & 0.019  \\ \hline\hline
\end{tabular}
\end{center}
\end{table}

Finally, Figure \ref{figure-survivor estimates sample} presents the estimates of the baseline survivor functions for the four competing risks based on the illustrative sample data, together with 95\% approximate point-wise confidence bands.

\medskip
\centerline{[INSERT FIGURE \ref{figure-survivor estimates sample} HERE]}
\medskip

\section{Finite-Sample Properties via Simulation Studies}
\label{sec: simulation}

\subsection{Simulation Design}

We have provided asymptotic results of the estimators in Section \ref{sec: Properties}. In this section we present the results of simulation studies to assess the finite-sample properties of the estimators of model parameters. This will provide some evidence whether the semi-parametric estimation procedure, which appears to perform satisfactorily for the single illustrative sample data set in Section \ref{sec-Illustration}, performs satisfactorily over many sample data sets.
These simulation studies were implemented using the {\tt R} and {\tt Fortran} programs we developed. In these simulation studies, as in the preceding section, when we analyze each of the sample data, the baseline hazard rate functions are estimated {\em semi-parametrically}, even though in the generation of each of the sample data sets, two-parameter Weibull models were used in the RCR components. That is, in the estimation procedure, we totally ignored the knowledge that the baseline hazard functions are of the Weibull-type with two parameters.
Aside from the set of model parameters described in Section \ref{sec-Illustration}, the simulation study have the additional inputs which are the sample size $n$ and the number of simulation replications \Mreps, the latter set to {100}. The sample sizes used in the two  simulation studies are $n \in \{50, 100\}$.
For fixed $n$, for each of the \Mreps\ replications, the sample data generation is as described in Section \ref{sec-Illustration}. For each estimator, for each $n$, we obtained the mean of the estimates, as well as the standard deviation of the estimates, which is a simulated estimate of the standard error of the estimator. In each of the summary tables, we also provide the true parameter value, and the mean and standard deviation of the estimates for each $n \in \{50, 100\}$. These summary tables are provided in the last four columns labelled `In Simulation Study' in Tables \ref{table-estimates RCR sample}, \ref{table-estimates LM sample theta}, \ref{table-estimates LM sample eta}, \ref{table-estimates HS sample theta}, and \ref{table-estimates HS sample xi}. We also depict these simulated estimates pictorially in Figure \ref{figure-comparative boxplots of estimates}, which presents comparative boxplots of the {\em centered} simulated estimates, that is, centered in the sense that the true parameter value was subtracted from their estimates, for each of the finite-dimensional parameters, and for $n=50$ and $n=100$. Examination of the histograms, though these are not presented here, of the simulated estimates shows that they tend to be concentrated in the vicinity of their associated true parameter values for the sample sizes $n \in \{50,100\}$ used in the simulation study.

 The {100} baseline survivor estimates for each of the four risks are depicted in Figure \ref{figure-simulated baseline estimates}, together with their respective true baseline survivor function. Observe that for each of these four risks, the estimates of the baseline survivor function clusters around the true baseline survivor function, indicating that the semi-parametric approach to the estimation of model parameters appears to be tenable.

\medskip
\centerline{[INSERT FIGURE \ref{figure-simulated baseline estimates} HERE]}
\medskip

\subsection{Some Limitations}

In implementing the semi-parametric approach to estimating model parameters, there were also some limitations encountered in the computational implementation. Recall that a Newton-Raphson iteration approach was utilized in maximizing the partial likelihood functions. The perennial limitation of finding appropriate seed values for the iteration were also encountered. In experimenting with the algorithmic implementation, we found that sometimes it helps to start with a gradient descent approach in the iteration, and then to later transition to the Newton-Raphson iteration. In addition, when $n$ is not large enough, especially when $Q > 1$ leading to many parameters, the profile likelihoods could be rather flat, hence convergence may be compromised or not occur. We also note that obtaining estimates of the parameters, both finite- and infinite-dimensional, could be computationally time-consuming, hence we were able to run the simulations with just 100 replications. This is even after some portions were converted to {\tt Fortran} codes. One reason for this computational intensiveness is because in estimating the baseline survivor functions, we required the computation of the whole functional estimate, instead of simply focusing on a pre-specified set of time values. We also mention that in this paper we did not apply the model and the estimation procedure to {\em real} data sets. We hope to apply the modeling approach and the estimation procedure to some real data sets in a separate paper.

\section{Concluding Remarks}
\label{sec: conclusion}

For the general class of joint models for recurrent competing risks, longitudinal marker, and health status proposed in this paper, which encompasses many existing models considered previously, there are still numerous aspects that need to be addressed in future studies. Foremost among these aspects is a more refined analytical study of the finite-sample and asymptotic properties of the estimators of model parameters, together with other inferential and prediction procedures. The finite-sample and asymptotic results could be exploited to enable performing tests of hypothesis and construction of confidence regions for model parameters. There is also the interesting aspect of computationally estimating the standard errors of the estimators. How would a bootstrapping approach be implemented in this situation, or would such an approach be prohibitively computationally expensive? Another important problem that needs to be addressed is how to perform goodness-of-fit and model validation for this joint model. Though the class of models is very general, there are still possibilities of model mis-specifications, such as, for example, in specifying the effective age processes, or in the specification of the $\rho_q(\cdot|\cdot)$-functions. What are the impacts of such model mis-specifications? Do they lead to serious biases that could potentially result in misleading conclusions? These are some of the problems whose solutions await further studies.

A potential promise of this joint class of models is in precision medicine. Because all three components (RCR, LM, HS) are taken into account simultaneously, in contrast to a marginal modeling approach, the synergy that this joint model allows may improve decision-making -- for example, in determining interventions to be performed for individual units. In this context, it is of utmost importance to be able to predict in the future the trajectories of the HS process, given information at a given point in time, about all three processes. Thus, an important problem to be dealt with in future work is the problem of forecasting using this joint model. How should such forecasting be implemented? This further leads to other important questions. One is determining the relative importance of each of the components in this prediction problem. Could one ignore other components and still do as well relative to a joint model-based prediction approach? If there are many covariates, how should the important covariates among these numerous covariates be chosen in order to improve prediction of, say, the time-to-absorption?

Finally, though our class of joint models is a natural general extension of earlier models dealing with either recurrent events, competing recurrent events, longitudinal marker, and terminal events, one may impugn it as perhaps not realistic, but instead view it as more of a futuristic class of models, since existing data sets were not gathered (yet) in the manner for which these joint models apply. For instance, in the example pertaining to gout in Section \ref{sec: scenarios}, the SUR level and CKD status are not continuously monitored. However, with the advent of smart devices, such as smart wrist watches, embedded sensors, black boxes, etc., made possible by miniaturized technology, high-speed computing, almost limitless cloud-based memory  capacity, and availability of rich cloud-based databases, the era is, in our opinion, fast approaching when continuous monitoring of longitudinal markers, health status, occurrences of different types of recurrent events, be it on a human being, an experimental animal or plant, a machine such as a car, an airplane, a nuclear power plant, a medical equipment, or a space satellite in engineering settings, a company in a economic or business setting, etc., will become more of a standard rather than an exception. It is our hope that by developing the models and inferential methods for such future complex and advanced data sets, {\em even} before they become available and real, that this will hasten and prepare us all for their eventual and certain arrival.

%
%


\section*{Acknowledgments}

The authors thank the reviewers and editors for their comments which helped in improving the paper.
E.\ Pe\~na was Program Director in the Division of Mathematical Sciences (DMS) at the National Science Foundation (NSF) from 2020--2023. As a consequence, he received support for research, which included work in this manuscript, under NSF Grant 2049691 to the University of South Carolina. Any opinions, findings, and conclusions or recommendations expressed in this material are those of the authors and do not necessarily reflect the views of NSF.

\section*{Funding}
E.\ Pe\~na acknowledges NSF Grant 2049691 and NIH Grant P30GM103336-01A1.
P.\ Liu acknowledges the Summer Research Grants from Bentley University. 



\bibliography{biblio}

\bibliographystyle{abbrv}




\begin{figure}[!h]
\begin{center}
\begin{tabular}{cc}
\includegraphics*[width=2.25in,height=5in]{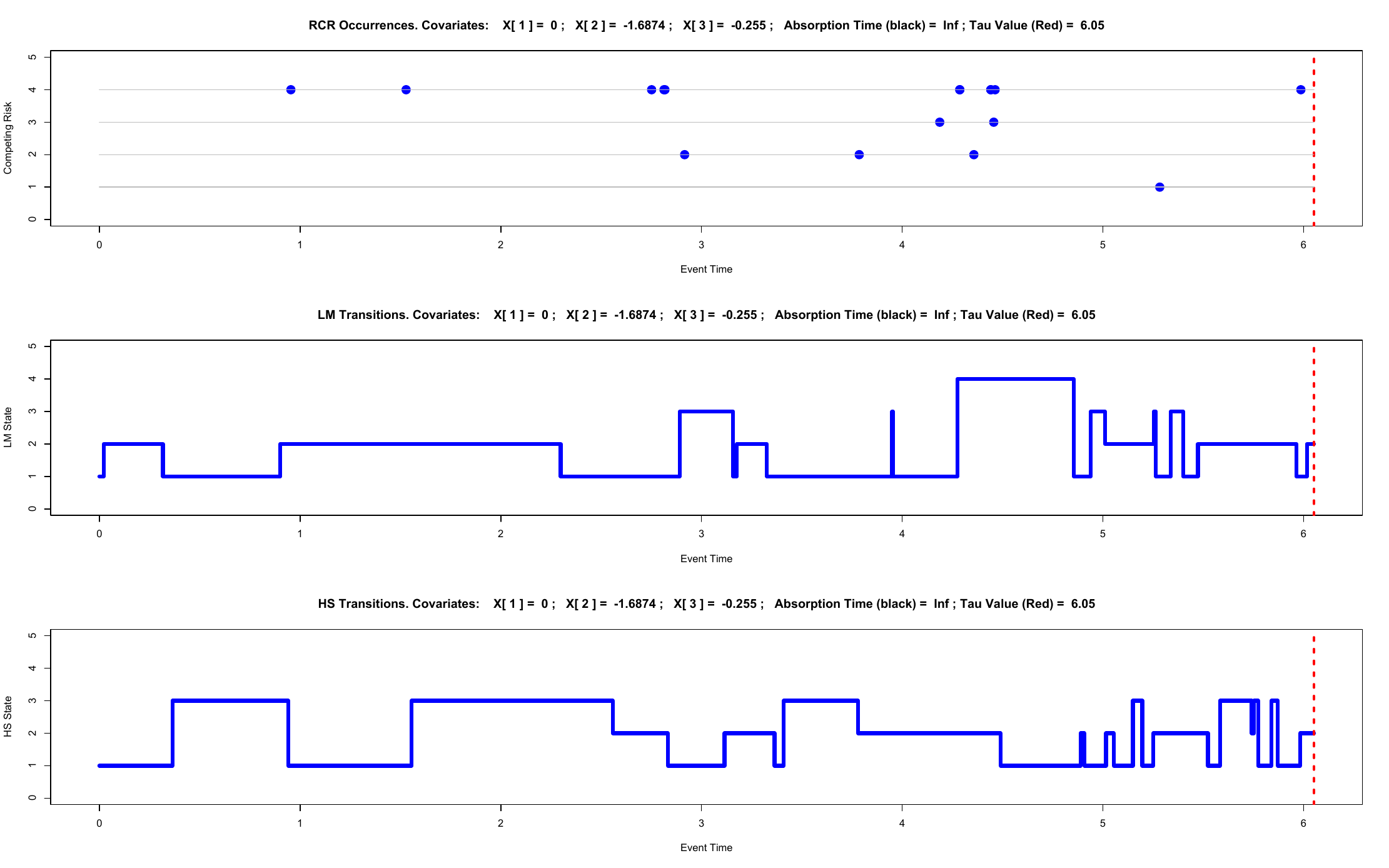} &
\includegraphics*[width=2.25in,height=5in]{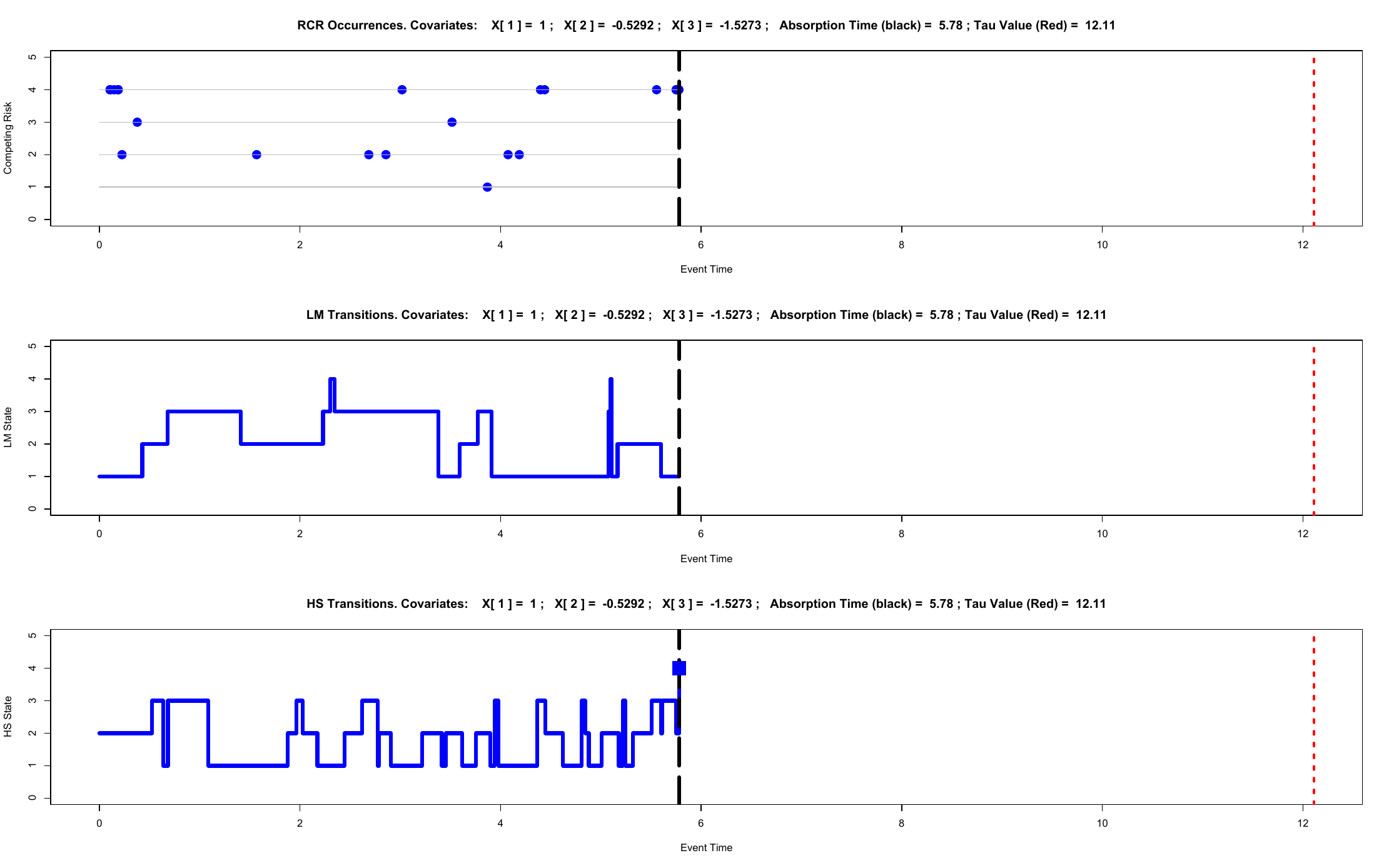} 
\end{tabular}
\end{center}
\caption{Realized (simulated) data observables from two distinct study units. The left plot panel is for a unit which was right-censored prior to reaching an absorbing state, while the right plot panel is for a unit which reached an absorbing state prior to getting right-censored. The red dashed vertical lines correspond to the values of the end of observation windows. For each of these panels, the first plot depicts the recurrences of the four competing risks, the second plot shows the transitions of the longitudinal marker process over four states, and the third plot shows the transitions of the health or performance status process over four states, with state 4 being absorbing, so when the process enters this state, observation of the unit terminates.}
\label{fig: observables}
\end{figure}

\begin{figure}[!h]
\centering
\begin{tabular}{c}
\includegraphics[width=\textwidth,height=5in]{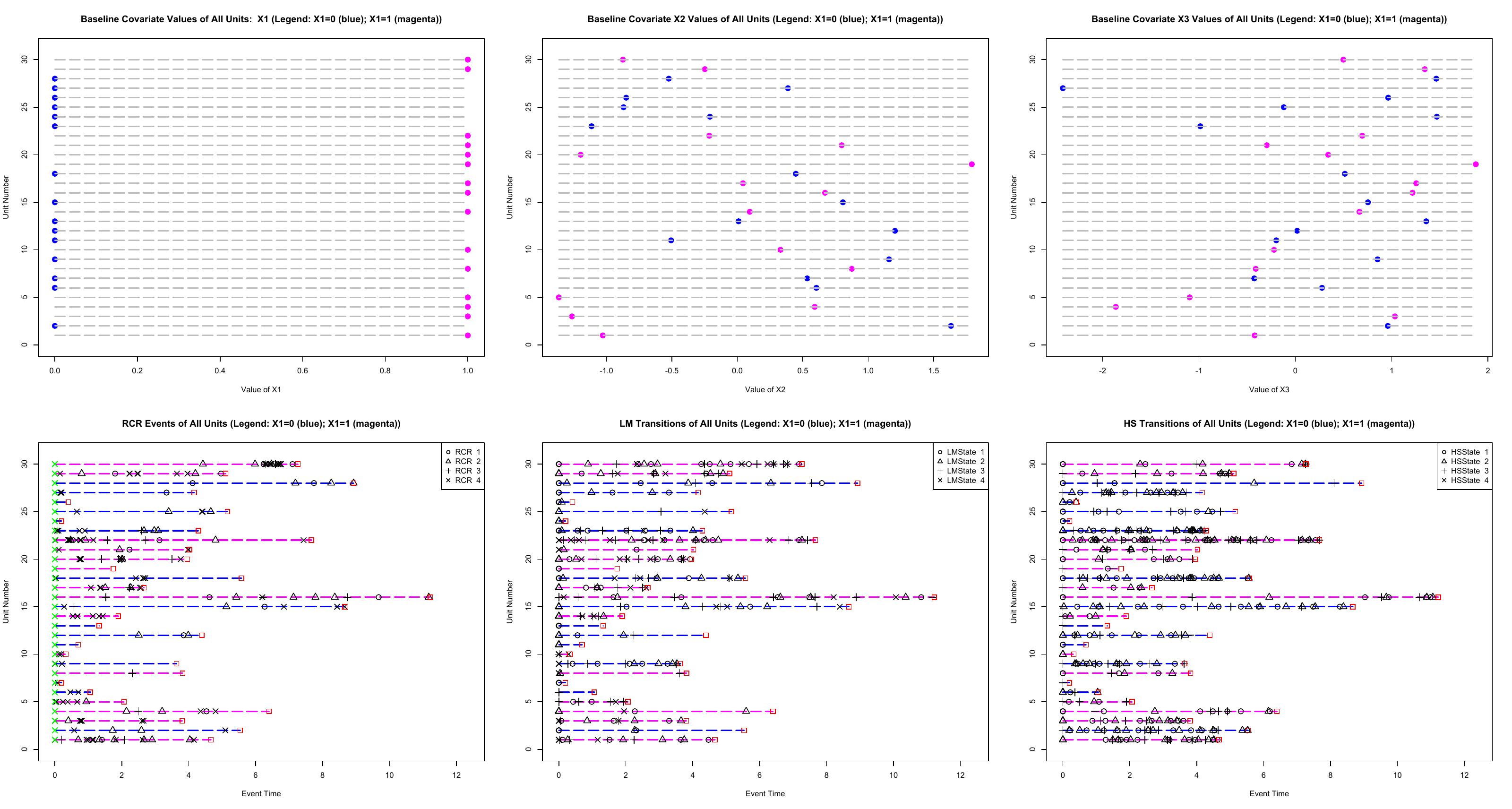}
\end{tabular}
\caption{Depiction of randomly chosen 30 units of a simulated sample data with $n = 100$ units. The top row of plots pertain to the three covariates: $X_1$ which is binary; $X_2$ which is continuous; and $X_3$ which is also continuous. The bottom row of plots pertain to the competing recurrent events (left plot); the longitudinal marker process realizations (middle plot); and the health status process realizations (right plot). For this data, $Q = 4$, $|\mathfrak{W}| = 4$, and $|\mathfrak{V}| = 4$.}
\label{sample data picture}
\end{figure}

\begin{figure}[h]
\includegraphics[width=\textwidth]{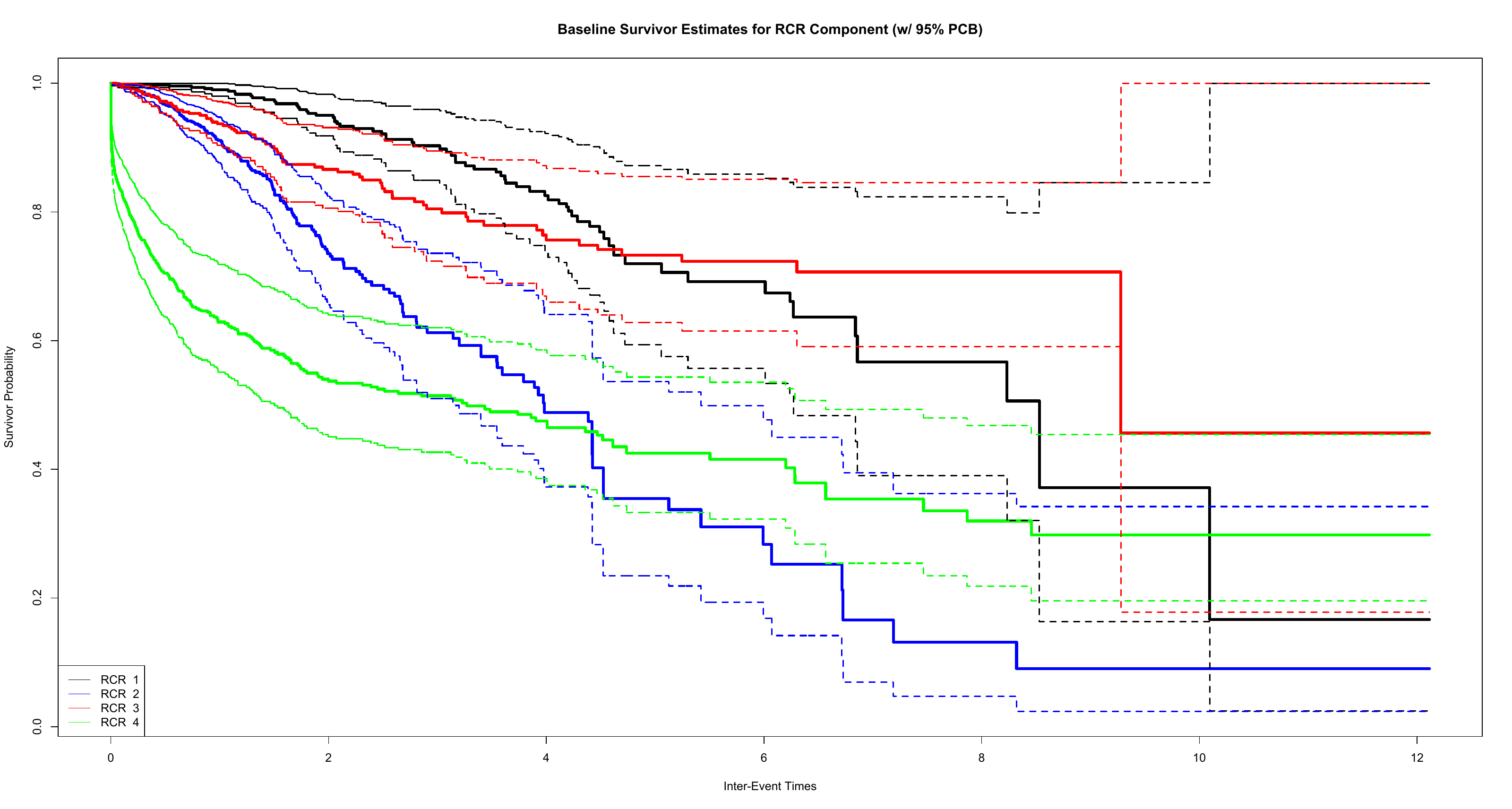}
\caption{Estimates of the baseline survivor functions of the four risks, together with their respective 95\% point-wise confidence bands.}
\label{figure-survivor estimates sample}
\end{figure}

\begin{figure}[h]
\begin{center}
\begin{tabular}{cc}
$n=50$ & $n=100$ \\
\includegraphics[width=.5\textwidth,height=1.5in]{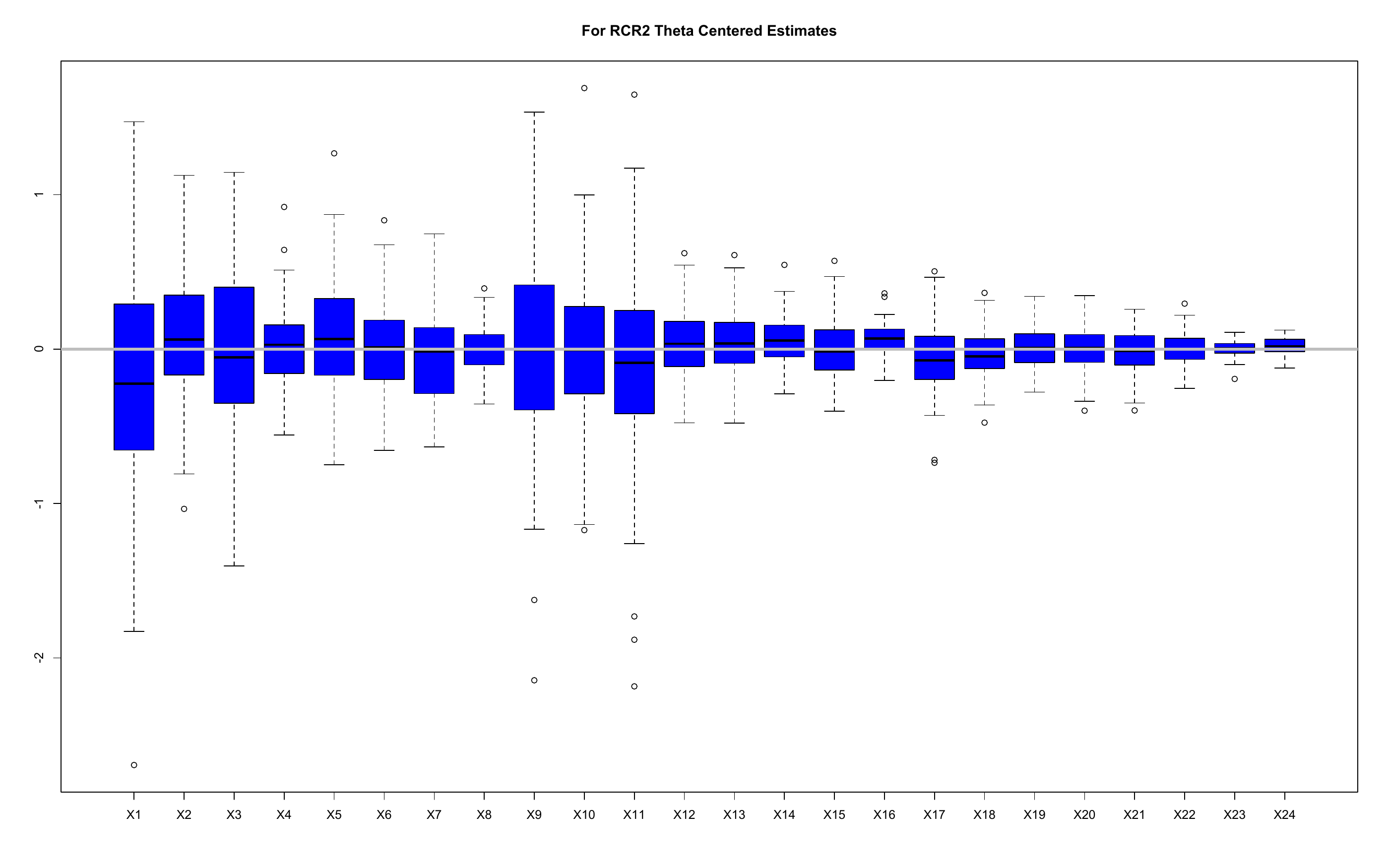} &
\includegraphics[width=.5\textwidth,height=1.5in]{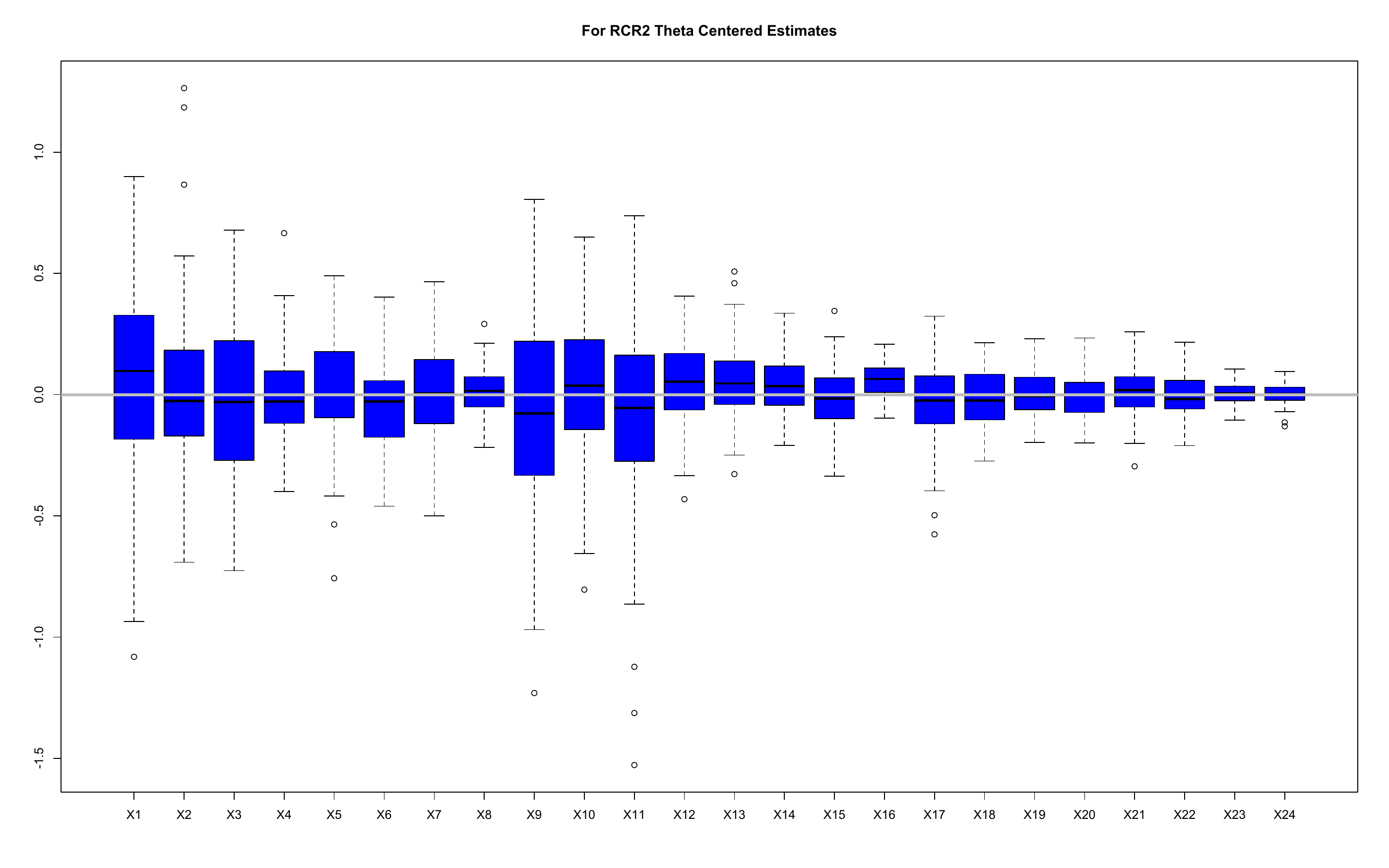} \\ 
\includegraphics[width=.5\textwidth,height=1.5in]{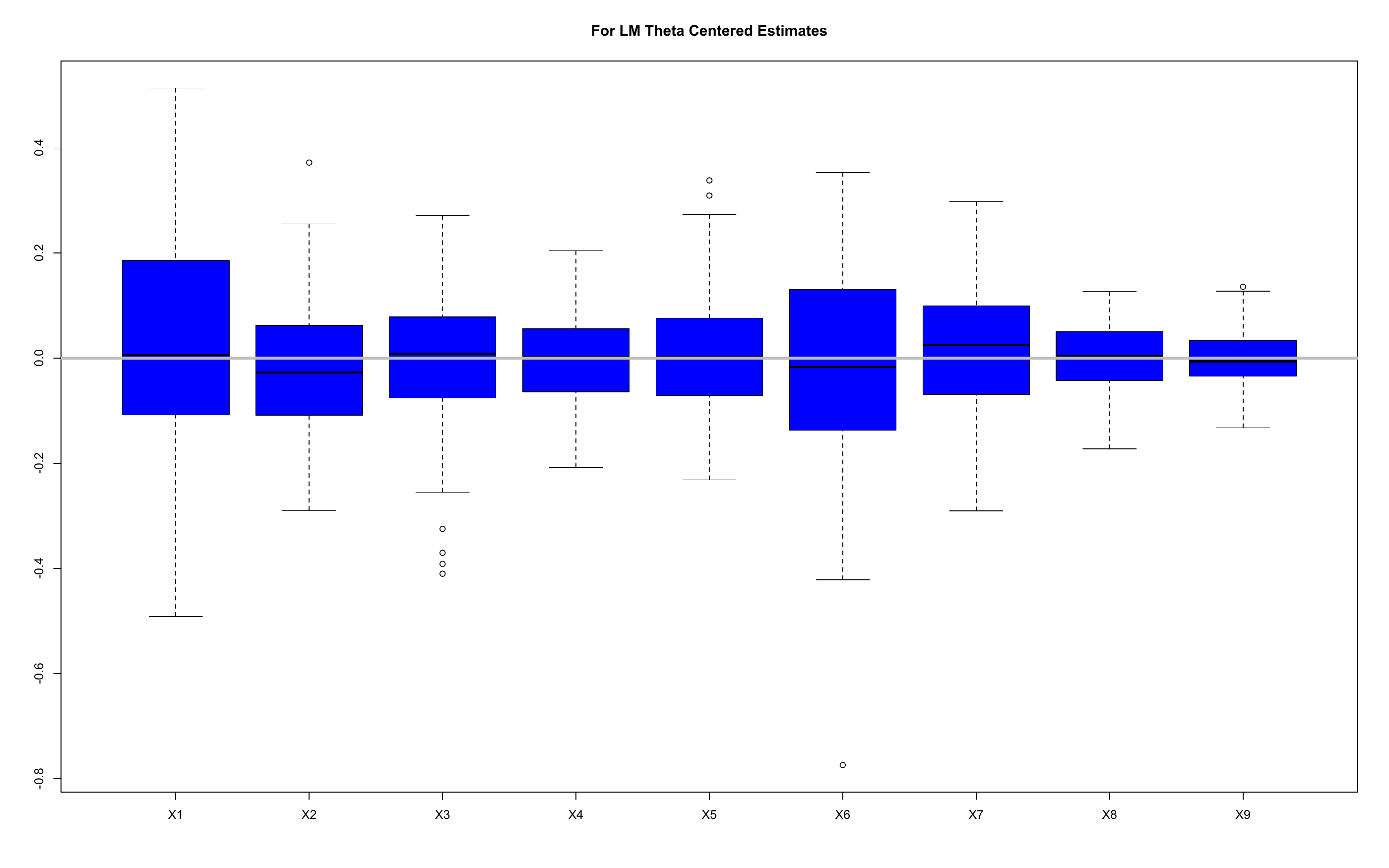} &
\includegraphics[width=.5\textwidth,height=1.5in]{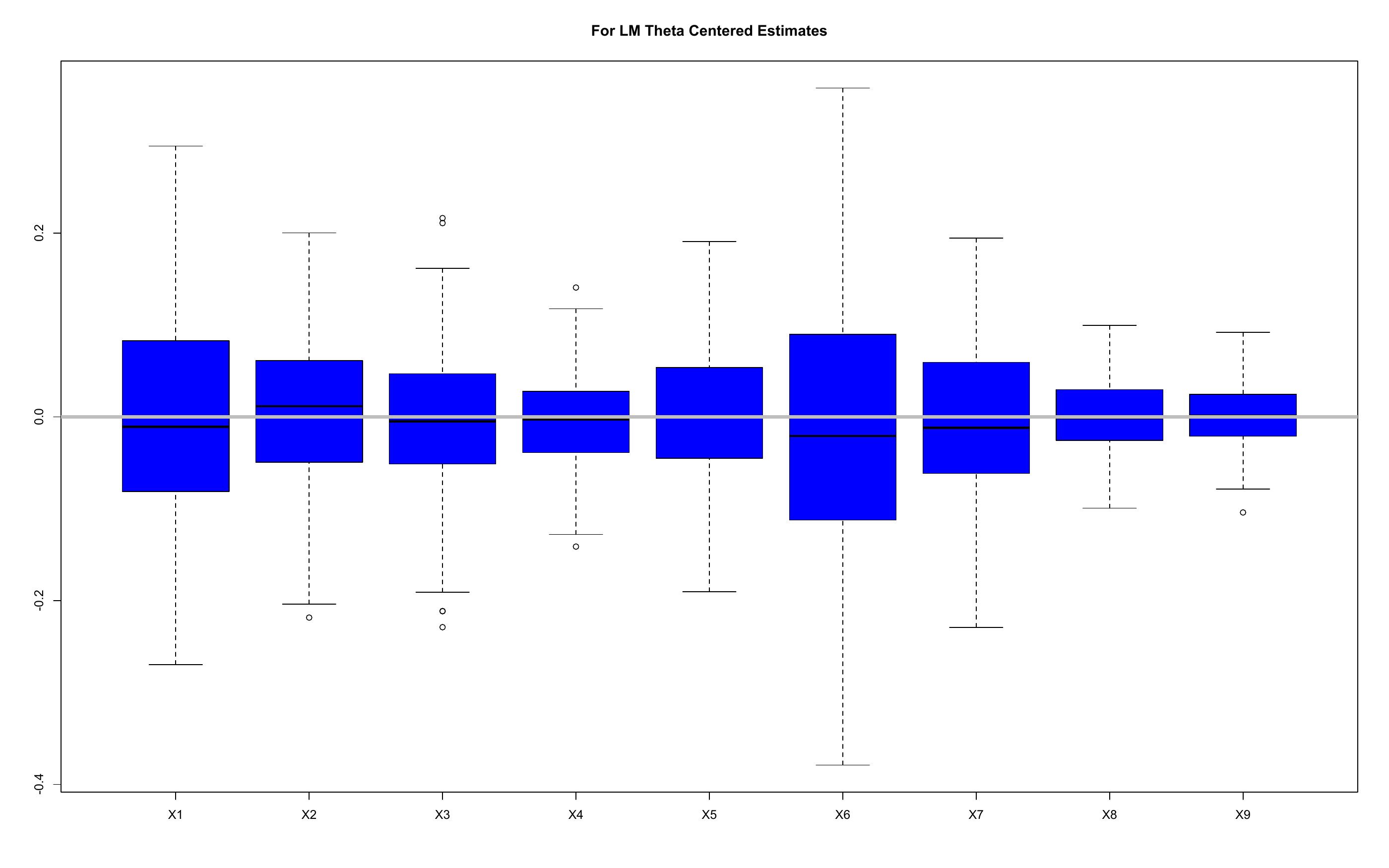} \\
\includegraphics[width=.5\textwidth,height=1.5in]{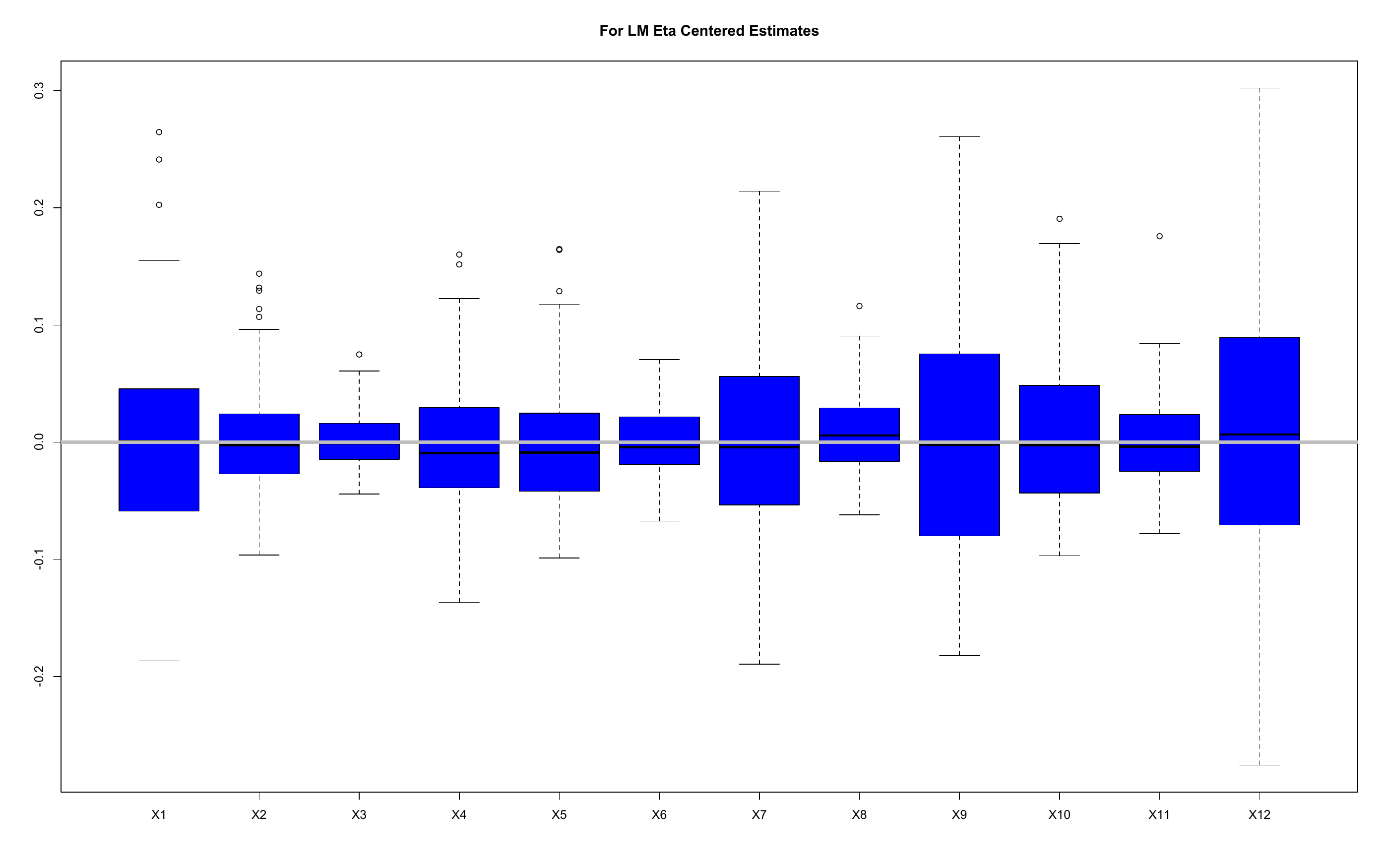} &
\includegraphics[width=.5\textwidth,height=1.5in]{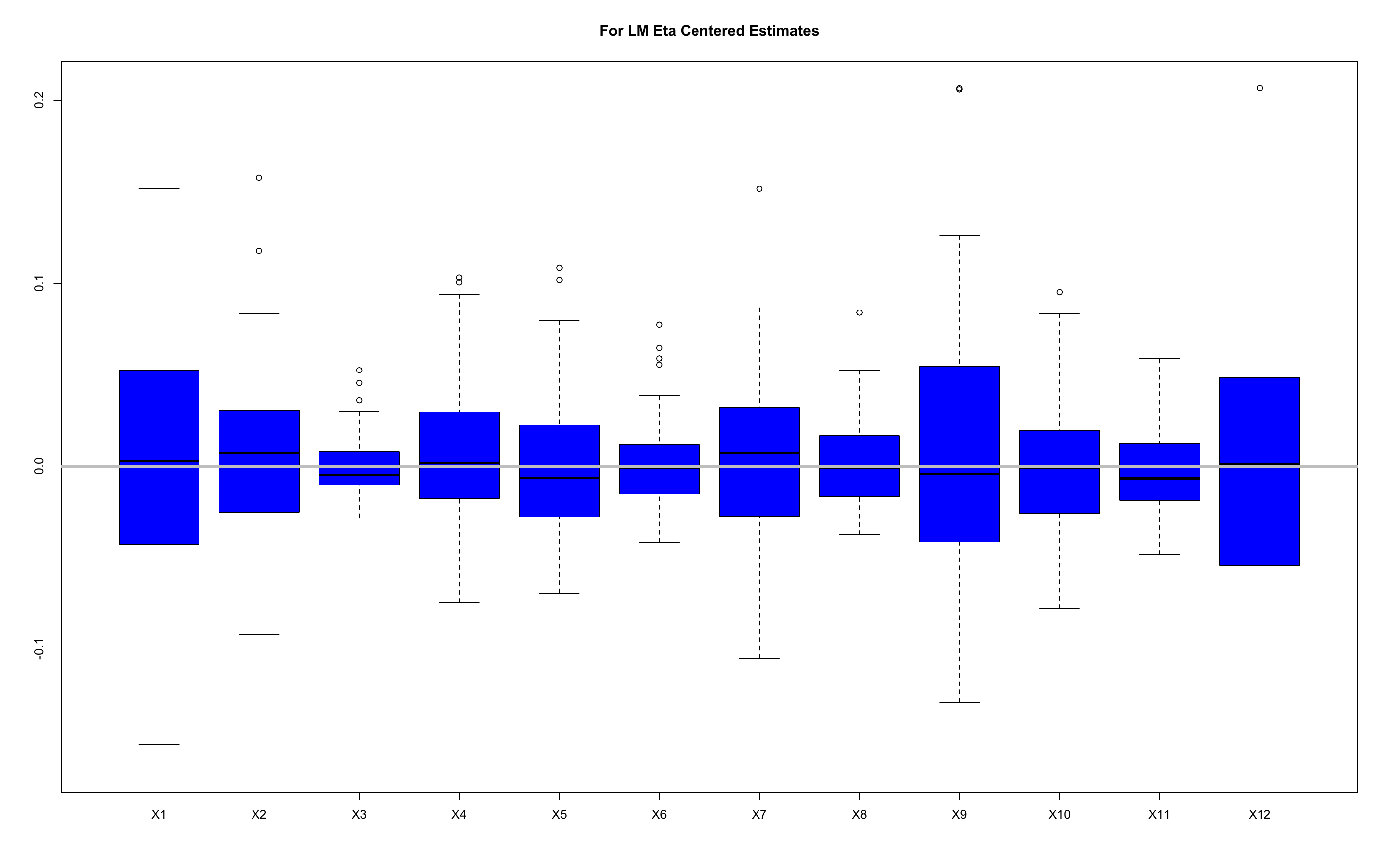} \\
\includegraphics[width=.5\textwidth,height=1.5in]{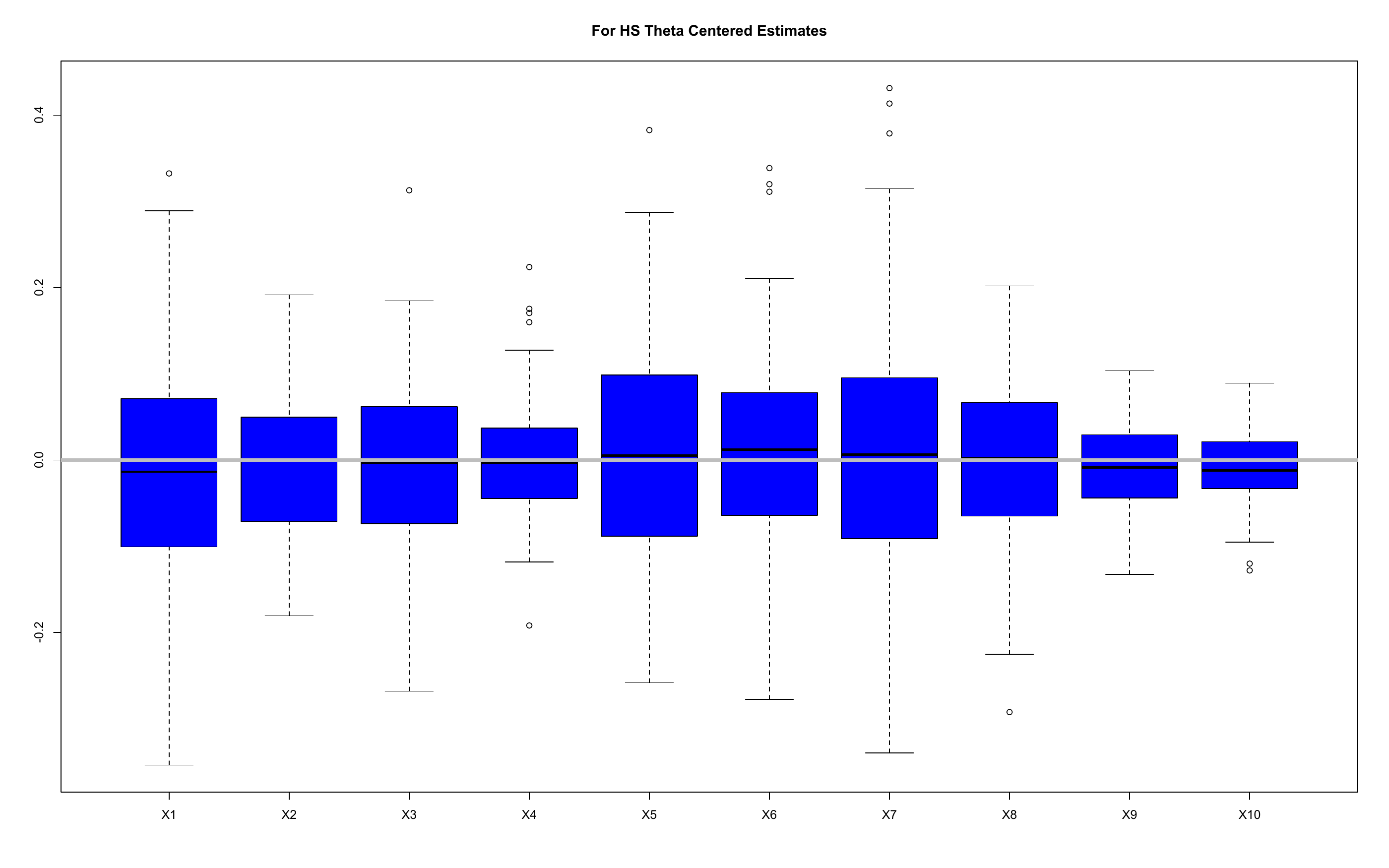} &
\includegraphics[width=.5\textwidth,height=1.5in]{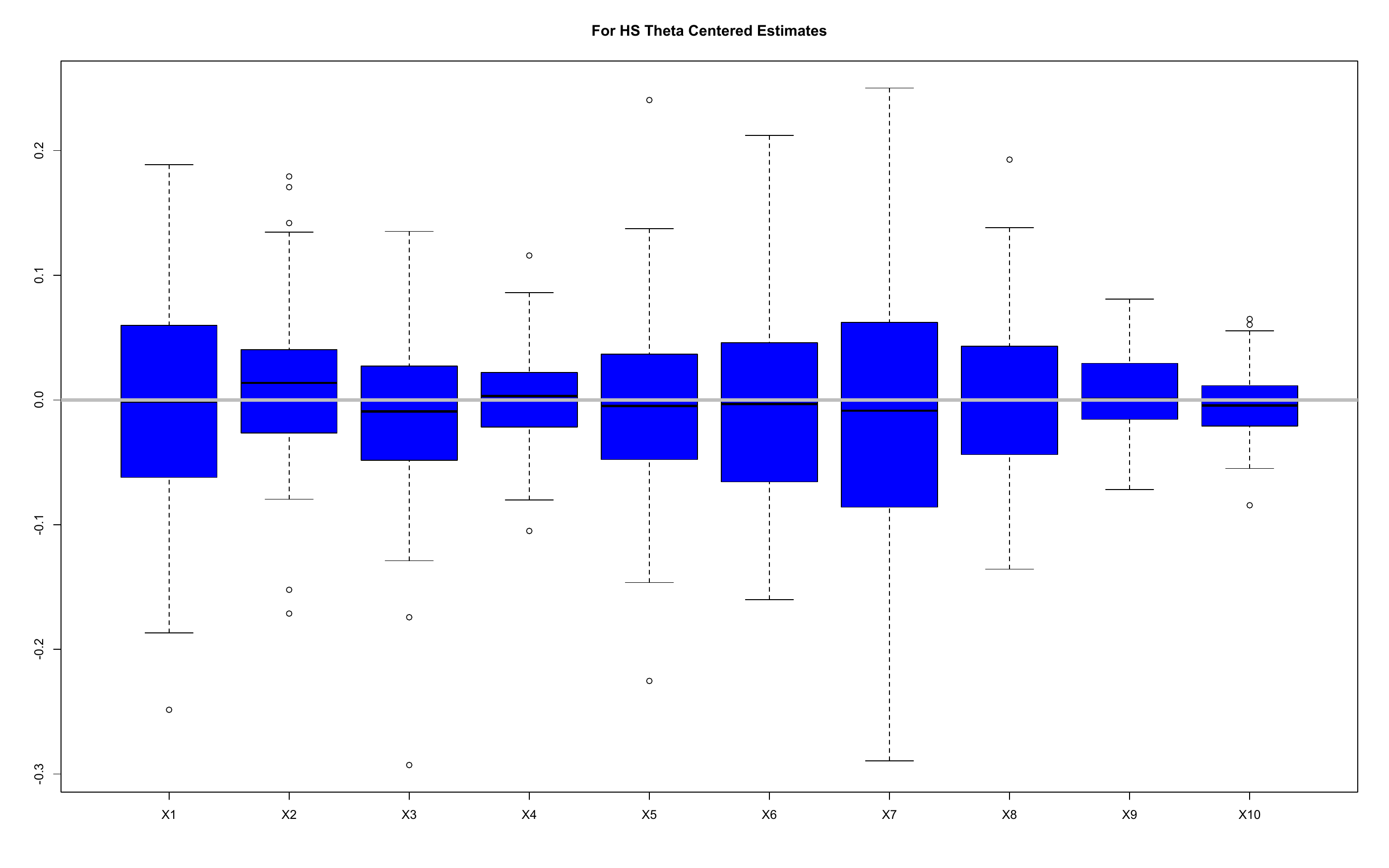} \\
\includegraphics[width=.5\textwidth,height=1.5in]{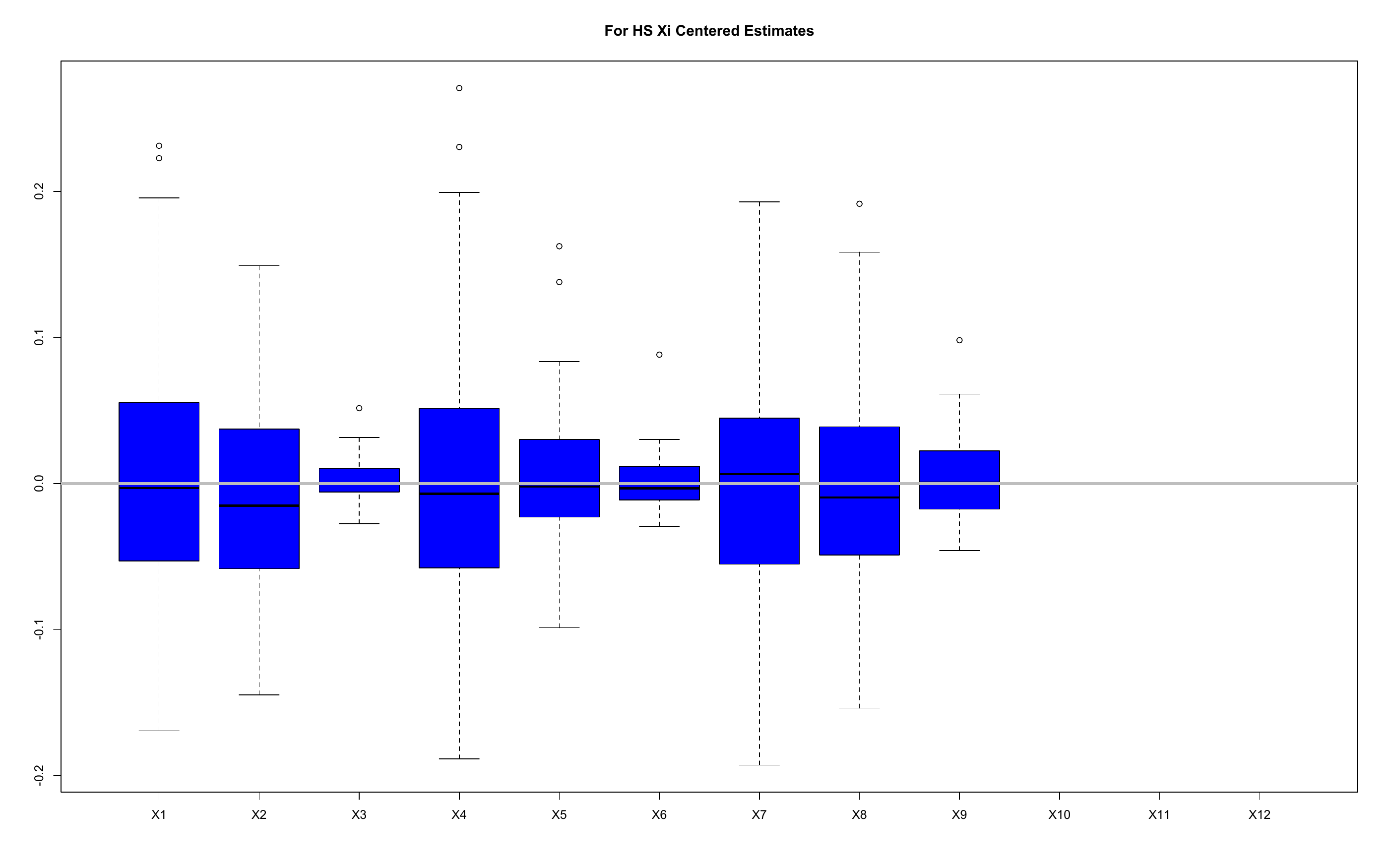} &
\includegraphics[width=.5\textwidth,height=1.5in]{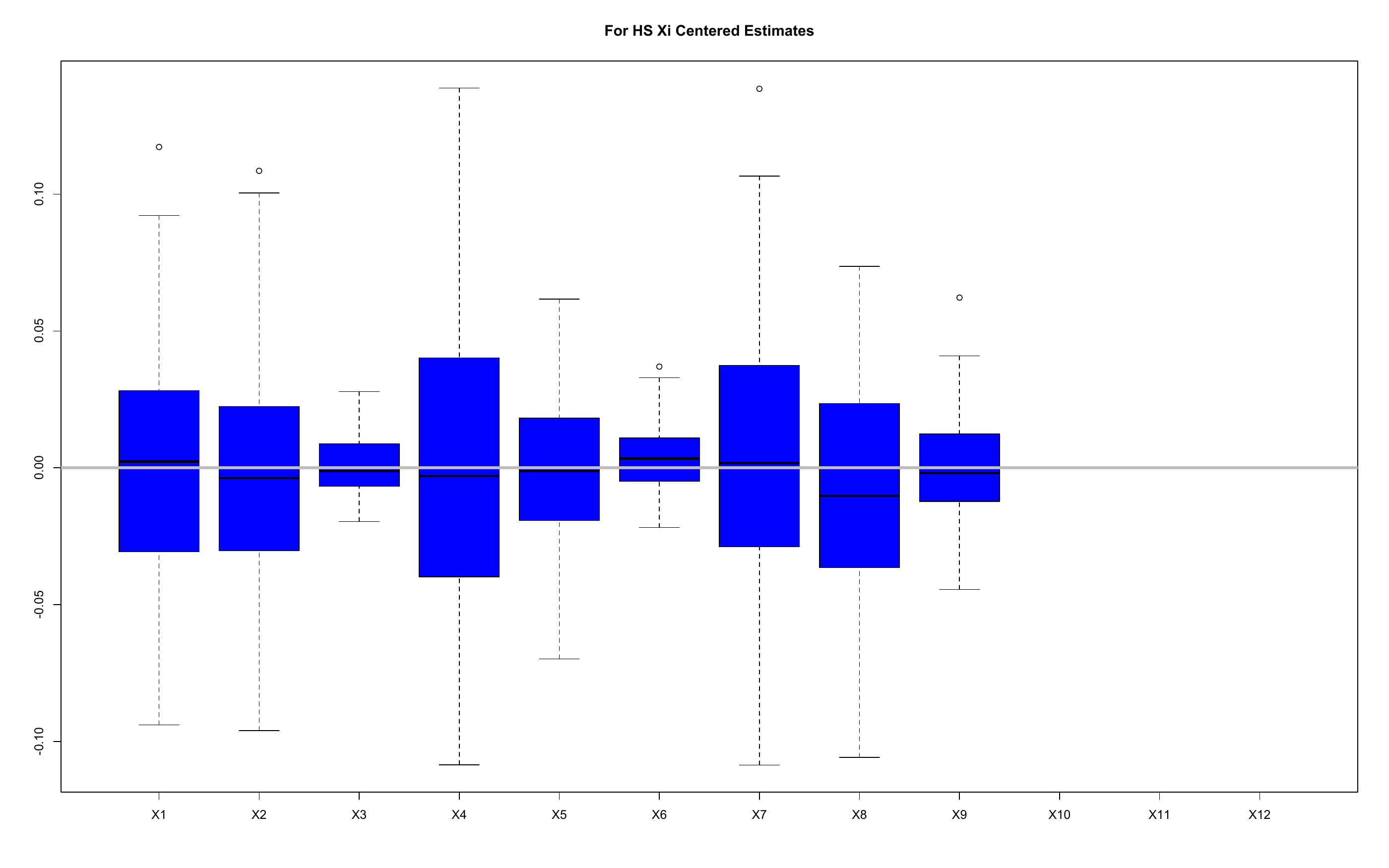}
\end{tabular}
\end{center}
\caption{Comparative boxplots of the centered simulated estimates (estimate minus {\em true} value) of the finite-dimensional model parameters for $n=50$ and $n=100$ based on 100 simulation replications. The labels in the abscissa pertain to the parameters. The first row are for the RCR $\theta$-parameters; the second row are for the LM $\theta$-parameters; the third row are for the RCR $\eta$-parameters; the fourth row are for the HS $\theta$-parameters; and the fifth row are for the HS $\xi$-parameters.}
\label{figure-comparative boxplots of estimates}
\end{figure}

\begin{figure}[h]
\begin{center}
\begin{tabular}{cc} 
$n=50$ & $n=100$ \\ 
\includegraphics[width=.5\textwidth,height=1.7in]{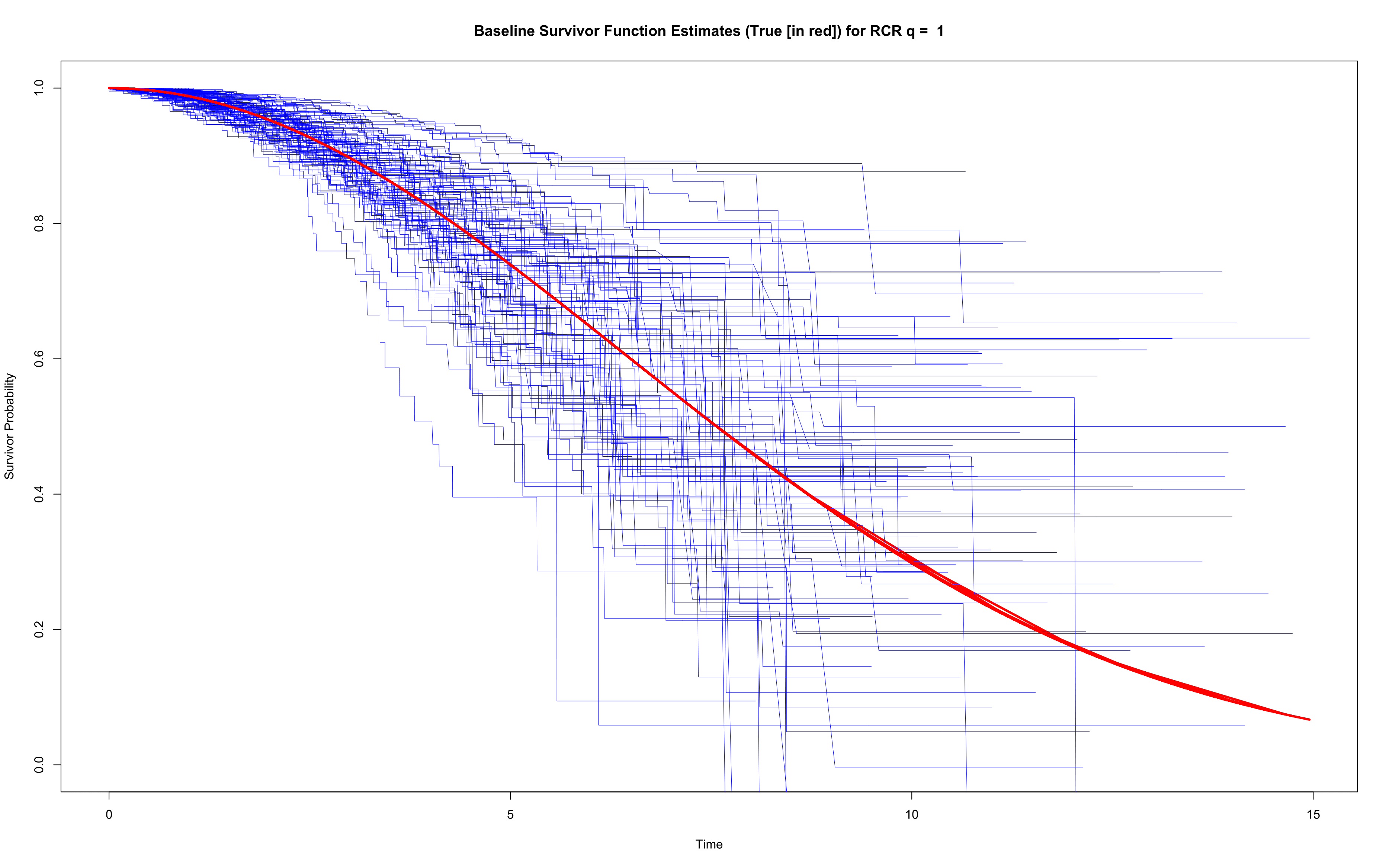} & \includegraphics[width=.5\textwidth,height=1.7in]{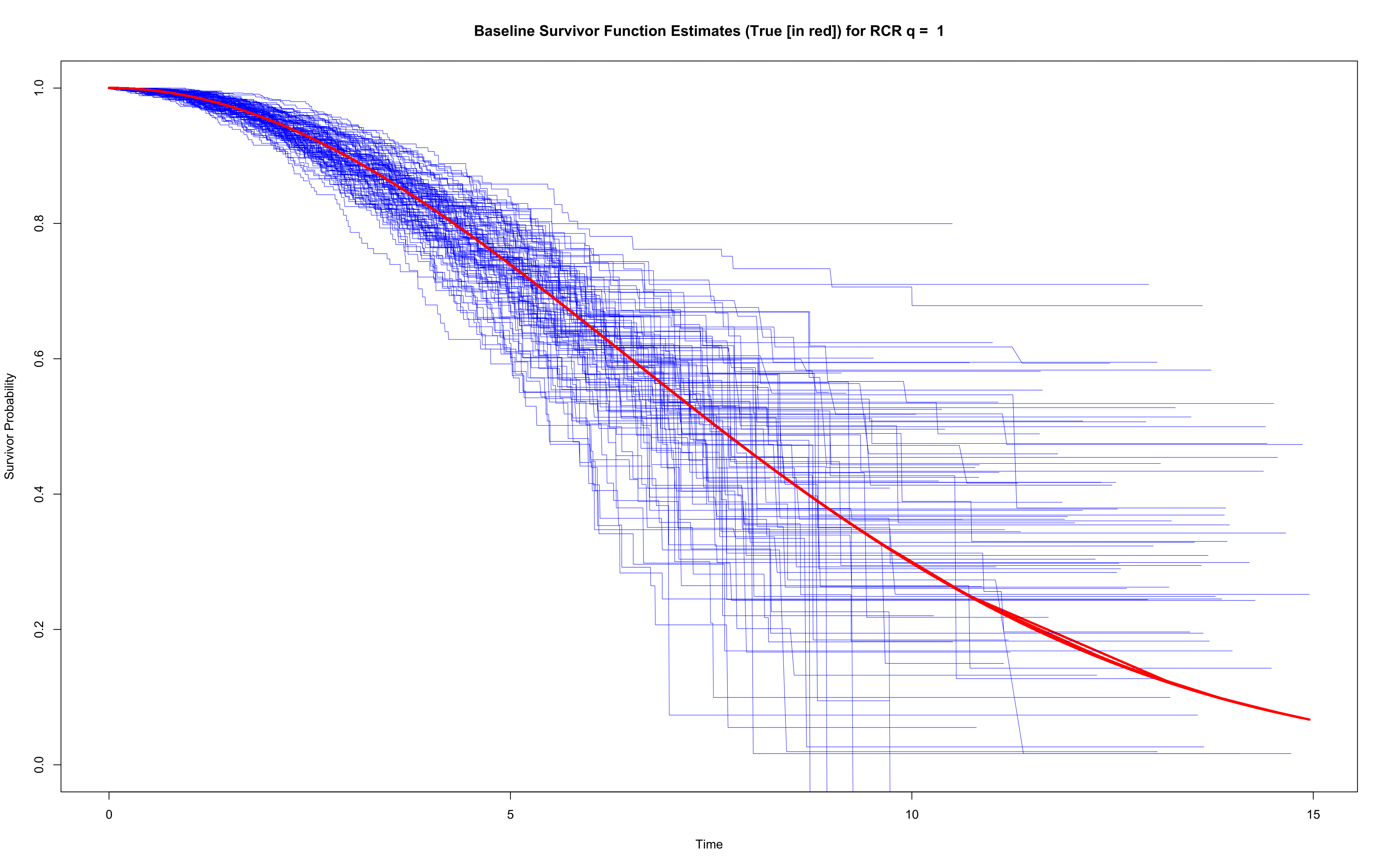} \\ 
 \includegraphics[width=.5\textwidth,height=1.7in]{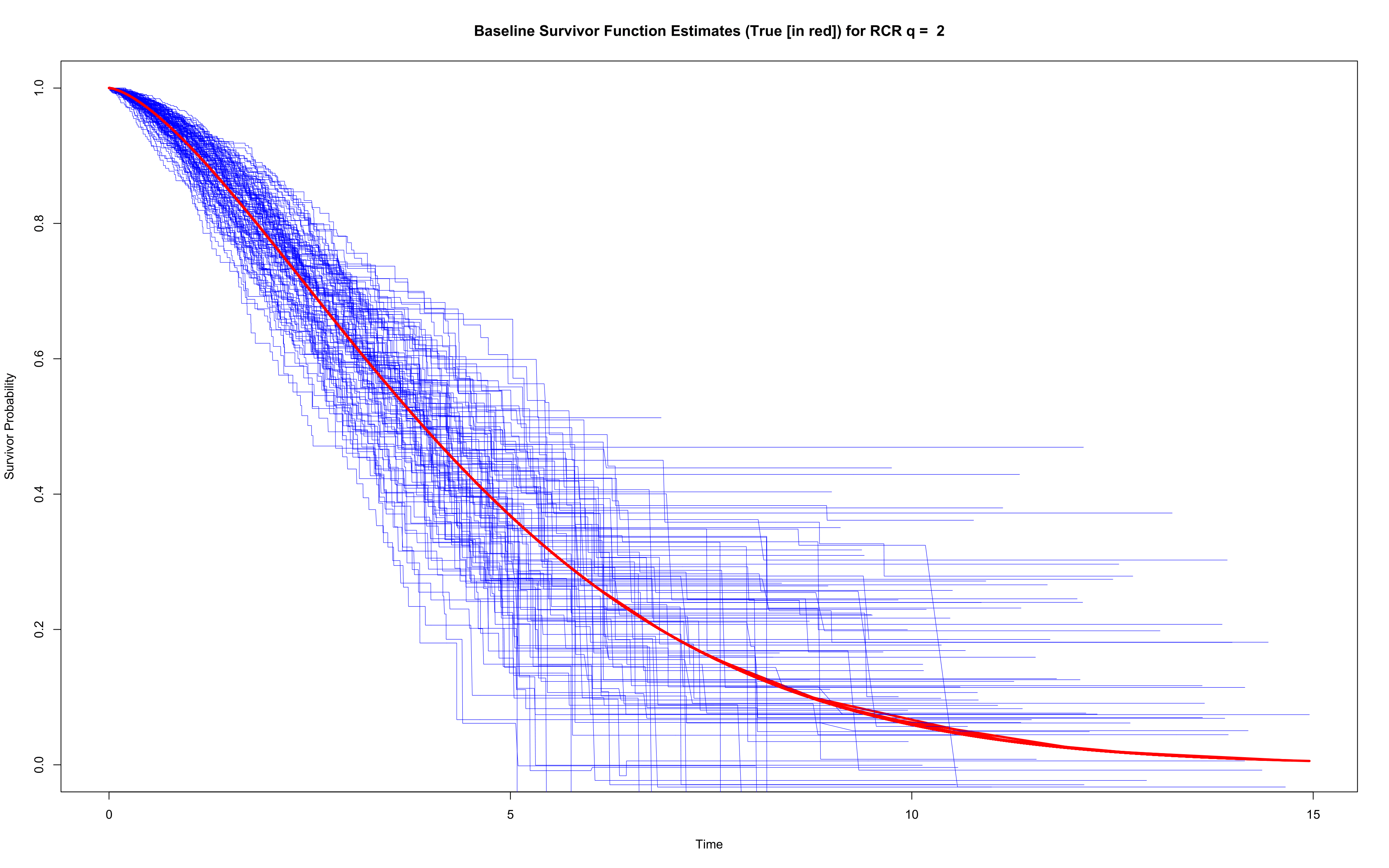} & \includegraphics[width=.5\textwidth,height=1.7in]{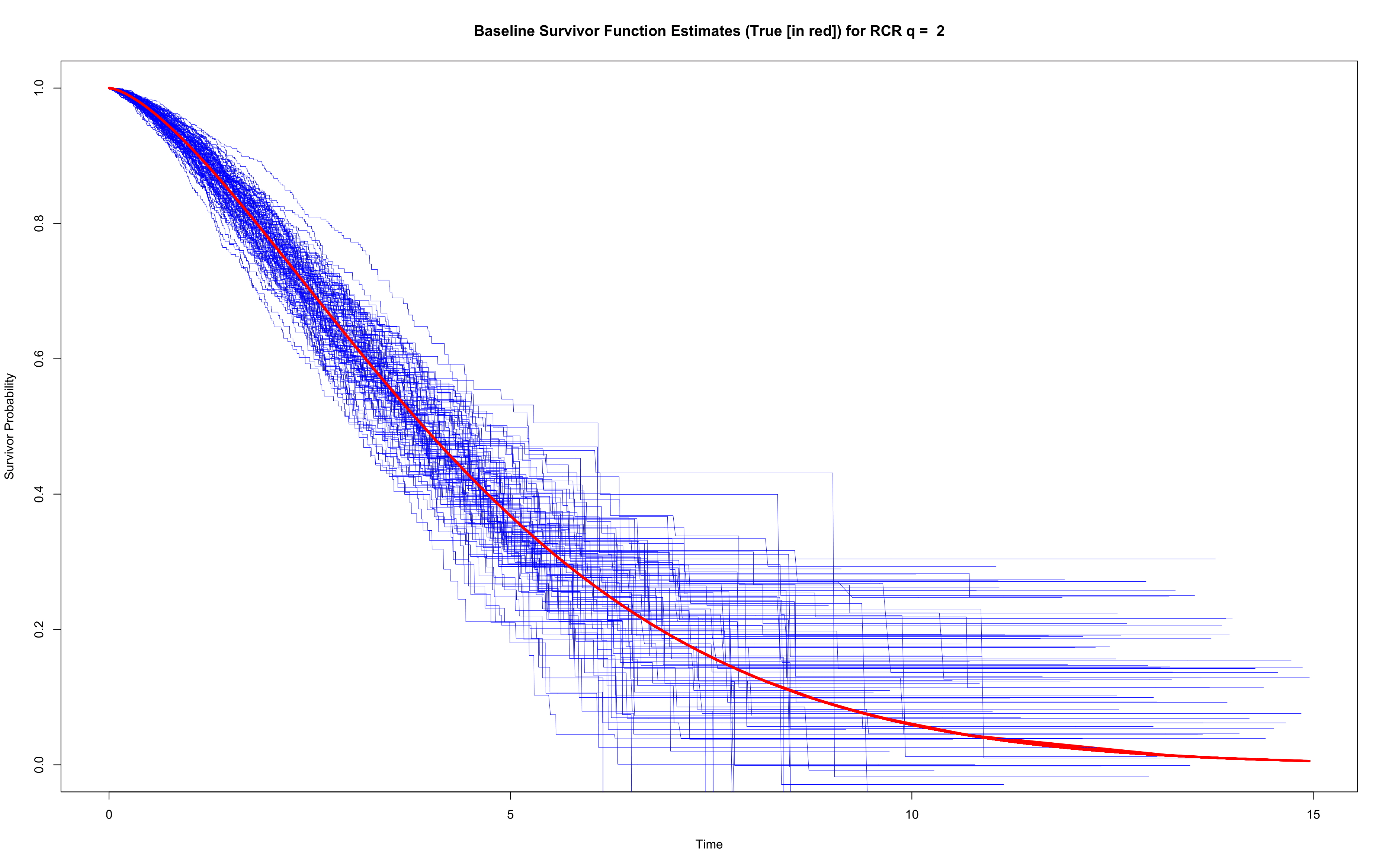} \\ 
 \includegraphics[width=.5\textwidth,height=1.7in]{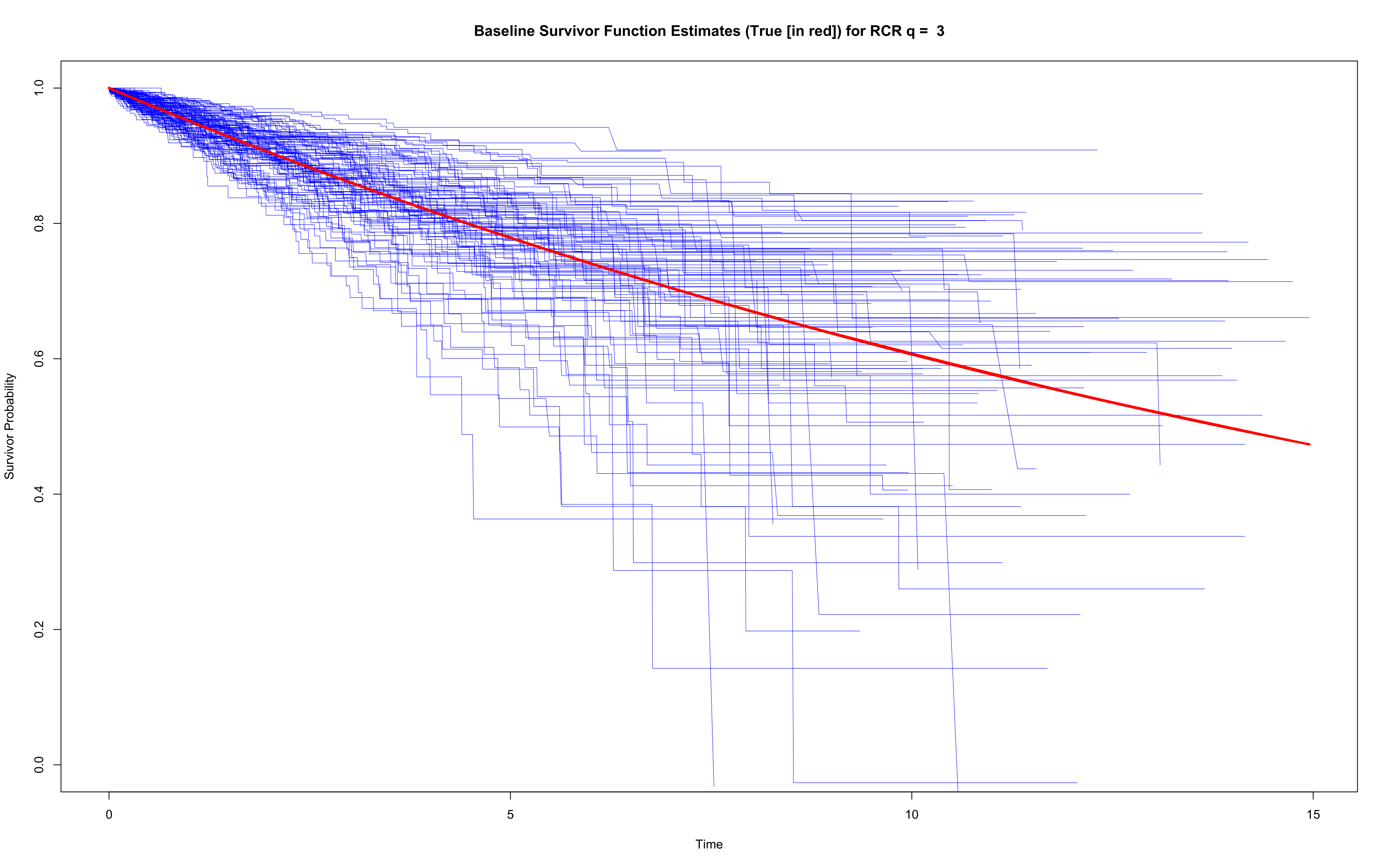} & \includegraphics[width=.5\textwidth,height=1.7in]{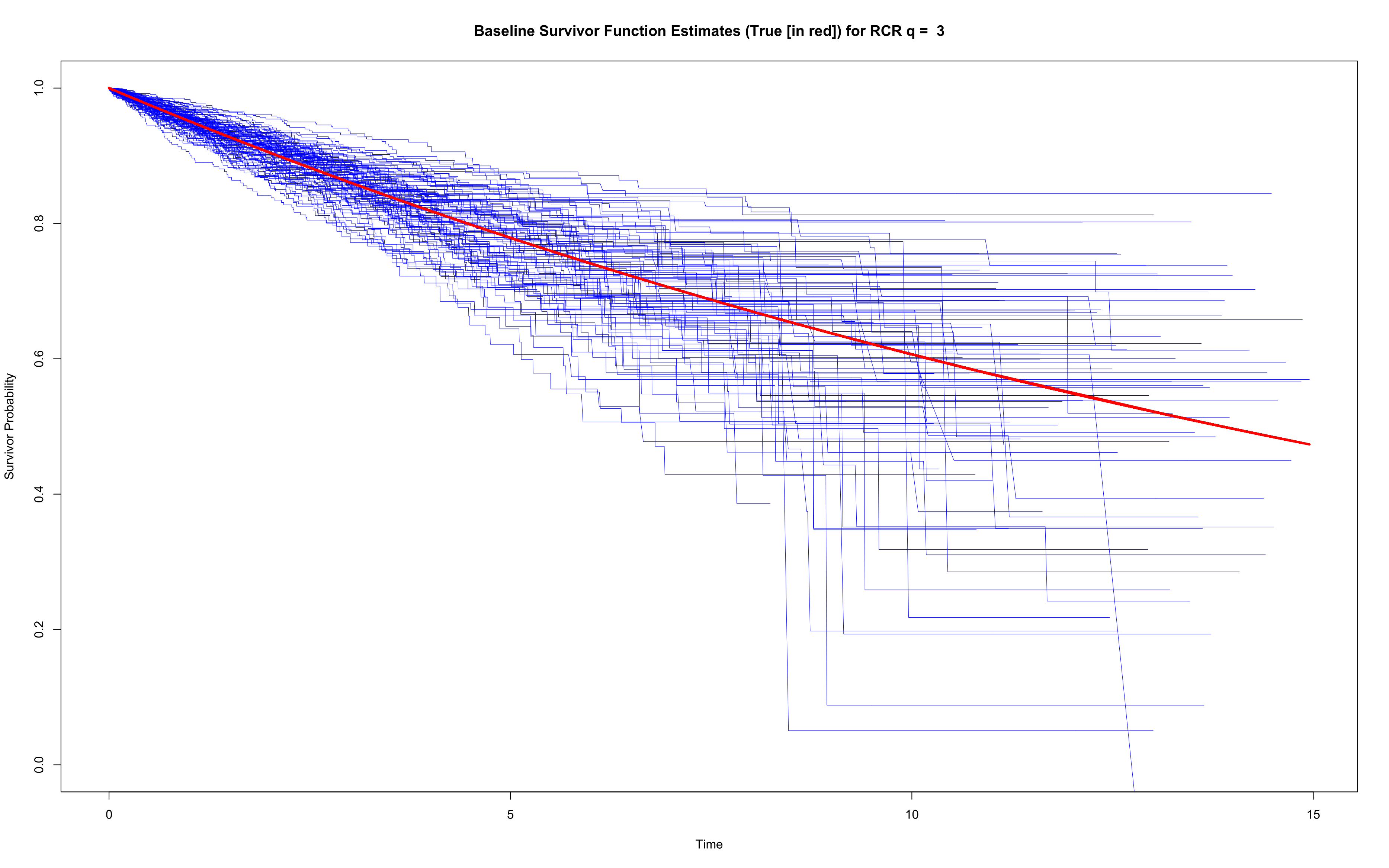} \\ 
 \includegraphics[width=.5\textwidth,height=1.7in]{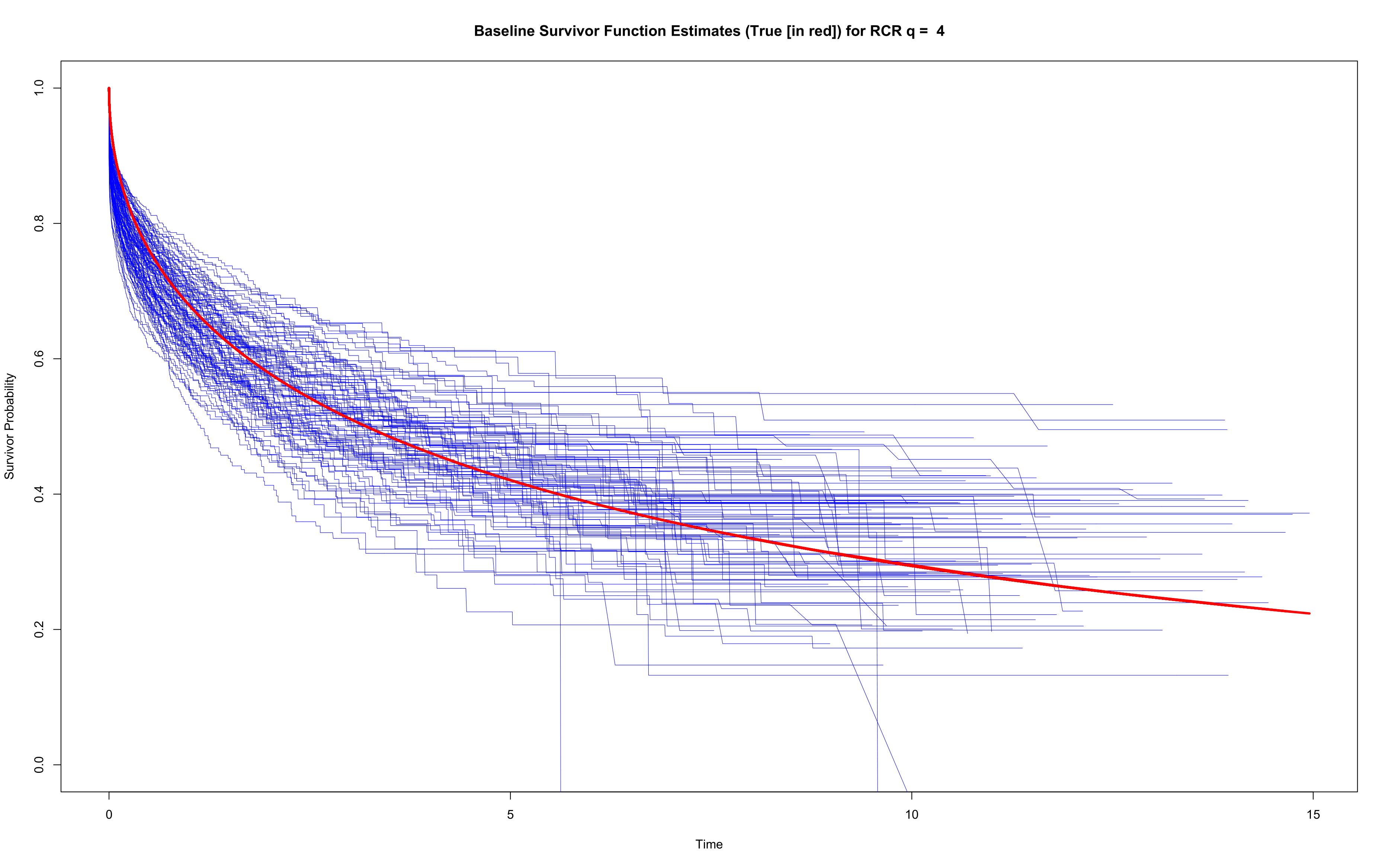} & \includegraphics[width=.5\textwidth,height=1.7in]{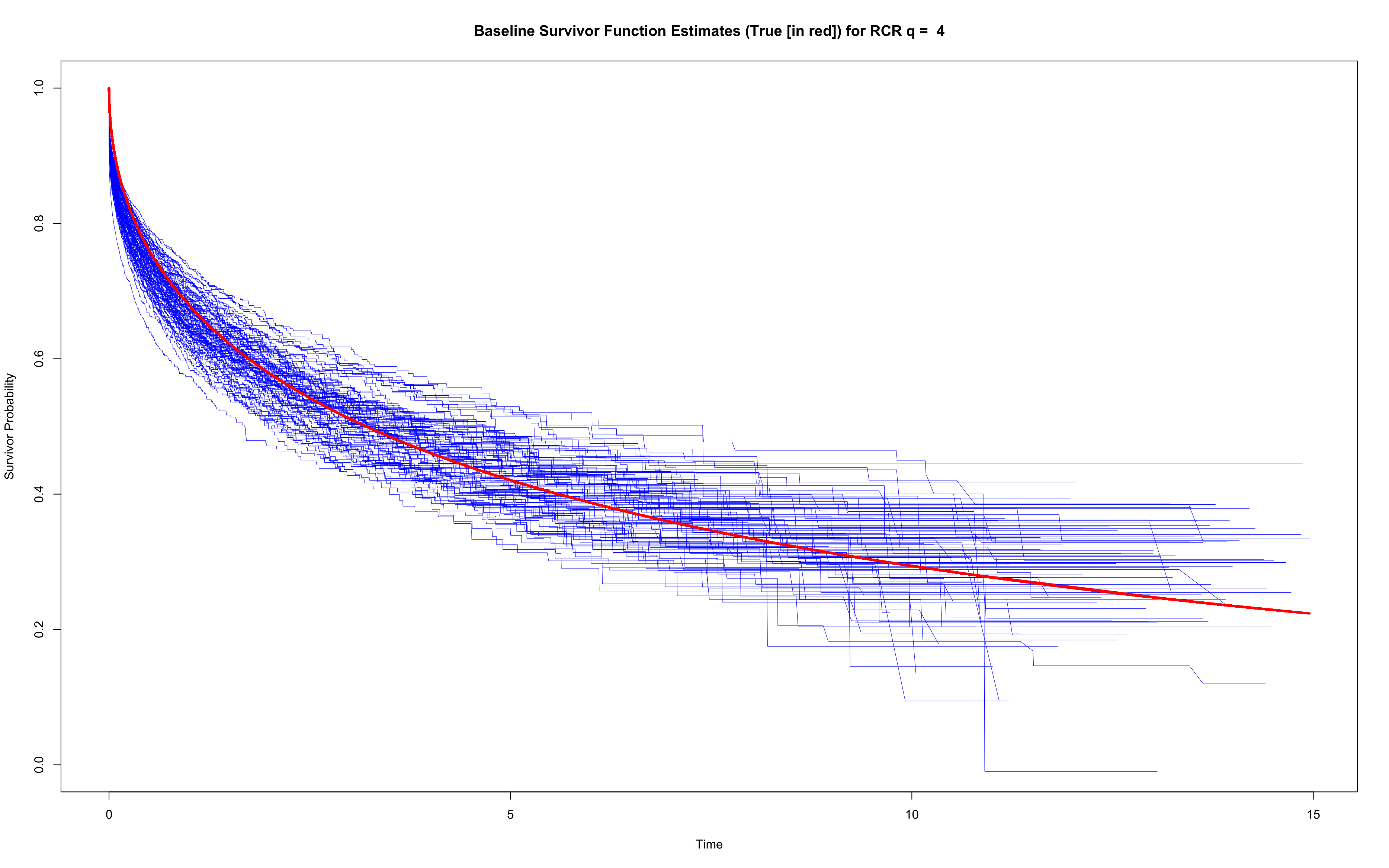}  
\end{tabular}
\end{center}
\caption{Baseline survivor function estimates from the simulation studies (100 replicates) when $n = 50$ and $n=100$ for the four competing risks ($q=1$ top; $q=4$ bottom), together with the true baseline survivor function (in red).}
\label{figure-simulated baseline estimates}
\end{figure}

\end{document}